\newcommand{\smartqed}{}
\DeclareMathAlphabet{\mathcal}{OMS}{cmsy}{m}{n}
\g@addto@macro\bfseries{\boldmath}
\DeclareMathAlphabet\BEuScript{U}{eus}{b}{n}
\newcommand{\CC}{{\mathbb C}}
\newcommand{\KK}{{\mathbb K}}
\newcommand{\NN}{{\mathbb N}}
\newcommand{\RR}{{\mathbb R}}
\newcommand{\TT}{{\mathbb T}}
\newcommand{\ZZ}{{\mathbb Z}}
\DeclareMathAlphabet{\mathbfsf}{\encodingdefault}{\sfdefault}{bx}{n}
\DeclareBoldMathCommand\Db{D}
\DeclareBoldMathCommand\Fb{F}
\DeclareBoldMathCommand\Ib{I}
\DeclareBoldMathCommand\Lb{L}
\DeclareBoldMathCommand\Mb{M}
\DeclareBoldMathCommand\Nb{N}
\DeclareBoldMathCommand\Pb{P}
\DeclareBoldMathCommand\Ob{O}
\DeclareBoldMathCommand\rb{R}
\DeclareBoldMathCommand\ab{a}
\DeclareBoldMathCommand\bb{b}
\DeclareBoldMathCommand\cb{c}
\DeclareBoldMathCommand\eb{e}
\DeclareBoldMathCommand\ib{i}
\DeclareBoldMathCommand\jb{j}
\DeclareBoldMathCommand\kb{k}
\DeclareBoldMathCommand\pb{p}
\DeclareBoldMathCommand\rb{r}
\DeclareBoldMathCommand\ub{u}
\DeclareBoldMathCommand\vb{v}
\DeclareBoldMathCommand\xb{x}
\newcommand{\dvol}{{\rm dvol}}
\newcommand{\id}{\text{id}}
\newcommand{\Reals}{\mathbb{R}}
\newcommand{\DD}{\mathscr{D}}
\newcommand{\HH}{\mathscr{H}}
\newcommand{\Ac}{{\mathcal{A}}}
\newcommand{\Nc}{{\mathcal{N}}}
\newcommand{\Rc}{{\mathcal{R}}}
\newcommand{\OO}{{\mathcal{O}}}
\newcommand{\CoinX}[1]{C_0^\infty({#1})}
\newcommand{\II}{\leavevmode\hbox{\rm{\small1\kern-3.8pt\normalsize1}}}
\newcommand{\supp}{\textrm{supp}\,}
\newcommand{\WF}{\textrm{WF}\,}
\DeclareMathOperator{\Aut}{Aut}
\DeclareMathOperator{\End}{End}
\DeclareMathOperator{\Fld}{Fld}
\newcommand{\Fldlin}{{\Fld}_{\text{lin}}}
\DeclareMathOperator{\Fol}{Fol}
\newcommand{\Ct}{\mathbfsf{ C}}
\newcommand{\Loc}{\mathbfsf{ Loc}}
\newcommand{\SpLoc}{\mathbfsf{ SpinLoc}}
\newcommand{\Mand}{\Loc}
\newcommand{\Set}{\mathbfsf{ Set}}
\newcommand{\Alg}{\mathbfsf{ Alg}}
\newcommand{\CAlg}{\mathbfsf{ C^*\hbox{-}Alg}}
\newcommand{\Vect}{\mathbfsf{ Vec}}
\newcommand{\Phys}{\mathbfsf{ Phys}}
\newcommand{\Stsp}{\mathbfsf{ Stsp}}
\newcommand{\Af}{{\mathscr A}}
\newcommand{\Bf}{{\mathscr B}}
\newcommand{\Cf}{{\mathscr C}}
\newcommand{\Df}{{\mathscr D}}
\newcommand{\Ff}{{\mathscr F}}
\newcommand{\If}{{\mathscr I}}
\newcommand{\Mf}{{\mathscr M}}
\newcommand{\Sf}{{\mathscr S}}
\newcommand{\Uf}{{\mathscr U}}
\newcommand{\Vf}{{\mathscr V}}
\newcommand{\Dc}{{\mathcal D}}
\newcommand{\Hc}{{\mathcal H}}
\newcommand{\Mc}{{\mathcal M}}
\newcommand{\Lc}{{\mathcal{L}}}
\newcommand{\nto}{\stackrel{\cdot}{\to}}
\newcommand{\ogth}{{\mathfrak o}}
\newcommand{\tgth}{{\mathfrak t}}
\newcommand{\wgth}{{\mathfrak w}}
\newcommand{\Tgth}{{\mathfrak T}}
\newcommand{\kin}{{\text{kin}}}
\newcommand{\dyn}{{\text{dyn}}}
\newcommand{\fb}{{\boldsymbol{f}}}
\newcommand{\hb}{{\boldsymbol{h}}}
\newcommand{\Ts}{\mathbfsf{ T}}
\newcommand{\rce}{{\rm rce}}
\newcommand{\sS}{\leftidx{_*}{}{S}}
\newcommand{\sT}{\leftidx{_*}{}{T}}
\newcommand{\SL}{\text{SL}}
\newtheorem{theorem}{Theorem}[section]
\newtheorem{proposition}[theorem]{Proposition}
\newtheorem{definition}[theorem]{Definition}
\newtheorem{lemma}[theorem]{Lemma}
\newtheorem{corollary}[theorem]{Corollary}
\newtheorem{remark}[theorem]{Remark}
\newtheorem{assumption}{Assumption}[section]
\begin{document}

\title{Algebraic quantum field theory in curved spacetimes}

\author[1]{Christopher J. Fewster\thanks{\tt chris.fewster@york.ac.uk}}
\author[2]{Rainer Verch\thanks{\tt rainer.verch@uni-leipzig.de}}
\affil[1]{Department of Mathematics,
University of York, Heslington, York YO10 5DD, United Kingdom.}
\affil[2]{Institut f\"ur Theoretische Physik, 
Universit\"at Leipzig, Br\"uderstr.\ 16, 04103 Leipzig, Germany.} 
%
%
\date{\today}
\maketitle


\begin{abstract}
This article sets out the framework of algebraic quantum field theory in
curved spacetimes, based on the idea of local covariance. In this framework,
a quantum field theory is modelled by a functor from a category of spacetimes
to a category of ($C^*$)-algebras obeying supplementary conditions. 
Among other things: (a)
the key idea of relative Cauchy evolution is described in detail, and related to
the stress-energy tensor; (b) a systematic `rigidity argument' is used to generalise results from flat to curved spacetimes; (c)
a detailed discussion of the issue of selection of physical states is given, linking
notions of stability at microscopic, mesoscopic and macroscopic scales; (d) the notion of subtheories and global gauge transformations are formalised; (e) it is
shown that the general framework excludes the possibility of there being
a single preferred state in each spacetime, if the choice of states is local
and covariant. 
Many of the ideas are illustrated by the example of the free Klein--Gordon theory,
which is given a new `universal definition'. 
\end{abstract}

\section{Introduction}

There are many approaches to quantum field theory, almost all of which rely heavily, in one way or another, on concepts of symmetry. This refers, in particular, to the behaviour of a quantum field theory with respect to the symmetries of the spacetime (or space, for Euclidean formulations) on which it exists.
For example, Poincar\'e covariance is one of the defining
properties for a relativistic quantum field theory on Minkowski space, in conjunction with the concept of locality for 
observables \cite{StreaterWightman,Haag}. Any general account of quantum field 
theory on curved spacetimes faces the problem that reliance on symmetries is of little assistance
except in special cases. For these reasons, most work in the area has, until recently, focussed
on specific (usually non-self-interacting) models in specific spacetimes. Our purpose in this
article is different: we will explore what can be said, in a model-independent way, about
quantum fields in general curved spacetimes. This is motivated from several directions:
(a) as a matter of general principle, one wishes to understand how far the ideas and methods
of quantum field theory can be extended; (b) it recognizes that our knowledge of our own
universe is limited in both scope and detail, and that its actual geometry (even setting aside
questions of quantum geometric structure) is by no means that of a symmetric spacetime;
(c) it allows for some macroscopic features (e.g., a collapsing star, or an experimental apparatus) to be treated as `given' and not obtained from the microscopic  theory; (d) it provides a framework in which controlled approximations of complex situations by simple ones can be discussed.

Time has proved that the mathematical framework of operator algebras permits a very clear and efficient way to precisely 
formulate the conceptual underpinnings of quantum field theory --- locality and covariance --- and to analyse the consequences
\cite{Haag,Araki,BLOT,BaumWollen:1992},
hence the name ``algebraic quantum field theory''. Our presentation attempts to follow that line of thought for quantum field theory in
curved spacetime. In place of symmetry,  the concept of covariance or, more precisely, of local general covariance, is put at the centre of our approach, 
reflecting the considerable progress that has been made since this concept was given a concise formulation in the early years
of this millennium \cite{BrFrVe03,Verch01}. Early forms of the idea appear in work such as \cite{Kay79,Dimock1980,Wald_qft}.

We should make clear that the main purpose of this contribution is to set out the conceptual and mathematical structure 
of algebraic quantum field theory in curved spacetime, not its application to concrete situations, such as the Hawking effect 
\cite{KayWald-PhysRep,FredenhagenHaag:1990,MorettiPinamonti-BH,HaNaSt:1984,GLRV,GuidoLongo:2003,Sewell} 
or cosmology \cite{DapFrePin-Cosmo,DegVer2010,VerchRegensburg} or the Casimir effect
\cite{Few&Pfen06, Marecki:2006, Fewster2007}.  It is intended to be read alongside other articles (in particular, those cited)  which provide the context and application for the structures discussed here.
There are also a number of topics that have not been discussed owing to constraints of time, space and energy. 
Some of these will be listed below, after we  indicate what is covered here.

We start by considering the quantized linear scalar Klein-Gordon field on globally hyperbolic spacetimes 
as a motivating example from which some basic concepts of locally covariant
quantum field theory can be read off. Taking these as guidelines, the general concept of locally covariant quantum field theory will be formulated in terms of a
functor between a category of spacetimes and a category of $*$-algebras. Further assumptions will be added and their consequences studied, so that the general structure of the theory, in a model-independent
algebraic framework, begins to take shape. Central parts
of that structure are played by Einstein causality, and the time-slice axiom, which, by interplay with local covariance, induces the 
notion of relative Cauchy evolution and provides the theory with a dynamical structure. Then the states and their Hilbert-space representations are
discussed, as well as the concept of a state space for locally covariant quantum field theories. In that context, the microlocal spectrum condition
makes its appearance as the most promising, and at the same time most general, selection criterion for a space of physical states.

A further step in developing the theory derives from the fact that globally hyperbolic spacetimes can be deformed to more symmetric spacetimes. Together with the time-slice axiom and local covariance, this allows one to transfer properties that hold on Minkowski spacetime to general spacetimes. 
This fact has been observed and exploited in the literature \cite{FullingNarcowichWald,Verch:1993,Verch_nucspldua:1993,Verch01,Sanders_ReehSchlieder,Fewster:gauge}, however here we systemize it as a ``rigidity argument'' for locally covariant quantum field theory, and this is a new and original ingredient of this contribution. We will show that
Einstein causality, the Schlieder property and extended locality can all be extended to locally covariant QFTs  in this way, and also that the Reeh-Schlieder and the split properties are consequences of closely related arguments.

We will then discuss the relation of several other selection criteria for state spaces of physical states in locally covariant quantum
field theories and their relation to the microlocal spectrum condition, mainly quantum energy inequalities, and the existence of 
ground states (or more generally, passive states) in ultrastatic spacetimes. 
We go on to consider locally covariant quantum fields and present the spin and statistics relation in that framework. Furthermore, we
discuss embeddings of locally covariant quantum field theories into each other, which leads to a locally covariant concept of internal symmetries.
That provides a setting in which one can consider the question what it means that quantum field theories on different spacetimes can be
regarded as representing `the same physics''. Another new feature of this contribution is the observation that
models such as the Klein--Gordon theory can be given a ``universal definition'' at the functorial level, without direct reference to the theory on particular spacetimes.  
Finally, we present an argument showing that there is no locally covariant state under very general assumptions. 

As mentioned above, there are various topics that we have not been able to include, and while the
list of references is extensive, it is certainly not complete. 
Notable absences include discussion of gauge fields and charge superselection theory in curved spacetime \cite{GLRV,Br&Ru05,Ruzzi:2005,BrunettiRuzzi_topsect} 
as well as the perturbative
construction of locally covariant interacting quantum fields in curved spacetime \cite{BrFr2000,Ho&Wa01,Ho&Wa02,Hollands:2008,BruDueFre2009,FreRej_BVqft:2012}. 
The development of the latter has led to
a formalization of operator product expansions that may be seen as a particular approach towards algebraic quantum field theory in curved 
spacetimes, mainly investigated by Hollands and Wald  
\cite{Hollands-opecst:2007,HollandsWald-Axio:2010} (besides other results, this has led to a version of a PCT-theorem 
in curved spacetimes \cite{Hollands-PCT:2004}). 
We shall not discuss  Haag duality \cite{Ruzzi_punc:2005} or situations of
``geometric modular action'' \cite{GLRV,BuMundSummers:2000,GuidoLongo:2003}, nor the relation between Euclidean and Lorentzian quantum field theory for quantum fields
in curved spacetimes \cite{JaffeRitter:2007,JaffeJaekelMartinez:2014}, nor any form of constructive quantum field theory (beyond free fields) in curved spacetimes,
on this, cf.\ \cite{BarataJaekelMund:2013} and references cited there. 
A further omission concerns spacetimes that are not globally hyperbolic or have boundaries 
\cite{Kay_Flocality:1992,Yurtsever:1994,KayRadWald:1997,Rehren-algHolo:2000,LongoRehren-bdCFT:2004,LongoRehren-boundQFT:2011}.

Aside from a familiarity with quantum field theory on Minkowski spacetime and
the standard terminology from general relativity e.g.\ at the level of \cite{Wald_gr},
some basic knowledge of category theory is assumed on part of the reader (as regards concepts like category, functor, morphism, naturality, for which see e.g.,~\cite{MacLane}). 
Knowledge of typical mathematical concepts of functional analysis in Hilbert spaces are taken for granted, but we will summarize most of the relevant 
background on operator algebras as far as it is needed. 

The abstract structures and arguments will be illustrated by the example of the free linear scalar field. This might give the impression that the theory only makes statements about linear quantum fields. We emphasize that this is not so and that 
locally covariant quantum field theories with self-interaction have been constructed perturbatively, and that there are also such 
interacting quantum field theories obeying the time-slice axiom \cite{Ho&Wa01,Ho&Wa02,Hollands:2008,ChiFre:2009}; it is these assumptions 
on which our theoretical arguments mainly
rest. The long-standing problem of establishing the existence of interacting quantum field theories beyond perturbation theory 
in physical spacetime dimension remains; yet we hope that the principle of local covariance will provide a new guideline in the attempts
of their construction.

\section{A motivating example}
\label{sec:KG}

We begin with the simplest model of QFT in curved spacetime, the linear Klein--Gordon field with field equation
$P_\Mb\phi:=(\Box_\Mb+m^2+\xi R_\Mb)\phi=0$ on spacetime $\Mb$,\footnote{We adopt signature convention $+-\cdots-$ for the metric.}
where the mass $m\ge 0$ and coupling $\xi\in\RR$ are fixed but
arbitrary.
Here, anticipating later developments, we have put subscripts on the d'Alembertian and scalar
curvature, to indicate the spacetime under consideration. 
In each globally hyperbolic spacetime $\Mb$, the field algebra $\Af(\Mb)$ of this theory may
be presented in terms of generators $\Phi_\Mb(f)$ labelled
by complex-valued test functions $f\in\CoinX{\Mb}$ and subject to relations
\begin{description}
\item[KG1] linearity of $f\mapsto\Phi_\Mb(f)$ 
\item[KG2] hermiticity, $\Phi_\Mb(f)^*=\Phi_\Mb(\overline{f})$
\item[KG3] the field equation, $\Phi_\Mb(P_\Mb f)=0$
\item[KG4] the canonical commutation relations $[\Phi_\Mb(f), \Phi_\Mb(h)]=i E_\Mb(f,h)\II_{\Af(\Mb)}$
\end{description}
which hold for all $f,h\in\CoinX{\Mb}$.
Equivalently, one can define $\Af(\Mb)$ as the Borchers-Uhlmann algebra, i.e.\ the quotient of the tensor algebra over the test-function
space $\CoinX{\Mb}$ by the relations described in KG1 to KG4 \cite{Borchers:1962,Uhlmann:1962}. Thus defined, $\Af(\Mb)$ does not admit the structure of a
$C^*$-algebra. This is not always required, but in some situations, it is useful to have $\Af(\Mb)$ as a $C^*$-algebra. The canonical
way of reaching a $C^*$-algebraic description of the quantized linear Klein-Gordon field on $\Mb$ proceeds as follows: Define 
 $\mathcal{K}(\Mb) = C_0^\infty(\Mb,\Reals)/(P_\Mb C_0^\infty(\Mb,\Reals))$, and write $f_\sim = f + P_\Mb C_0^\infty(\Mb,\Reals)$ for $f \in C_0^\infty(\Mb,\Reals)$. Then define
$\Af(\Mb)$ to be the \emph{Weyl algebra} of the linear Klein-Gordon field on $\Mb$, which is defined as the (unique~\cite{BratteliRobinson}) $C^*$-algebra
generated by elements $W_\Mb (f_\sim)$, $f_\sim \in \mathcal{K}(\Mb)$, and a unit element $\II$, subject to the relations 
$W_\Mb(f_\sim)W_\Mb(h_\sim) = {\rm e}^{iE_\Mb(f,h)/2}W_\Mb(f_\sim + h_\sim)$, $W_\Mb(- f_\sim) = W_\Mb(f_\sim)^*$ and $W_\Mb(0) = \II$.

The algebraic description of the theory
on $\Mb$ is useful for many applications, when supplemented
by a suitable class of states such as the Hadamard class described
in Sect.~\ref{sect:states} and \cite{Wald_qft,KayWald-PhysRep,Radzikowski_ulocal1996}.   
A rather richer structure is revealed, however, 
when one relates the algebras obtained on different, but suitably related, spacetimes.

Consider two spacetimes $\Mb$ and $\Nb$ and 
a smooth map $\psi:\Mb\to\Nb$. Our first aim is to understand
what constraints $\Mb$, $\Nb$ and $\psi$ should satisfy in order that there can be a meaningful relationship between $\Af(\Mb)$ and $\Af(\Nb)$. 
The sort of relationship we intend here is one in which
the generating smeared fields are related directly to one another
in the following way. Provided that
$\psi$ is smoothly invertible on its range, we may push forward test functions from $\CoinX{\Mb}$ to $\CoinX{\Nb}$
according to 
\begin{equation}
(\psi_* f)(p) = \begin{cases}  f(\psi^{-1}(p)) & p\in \psi(\Mb)\\
0 & \text{otherwise}\end{cases}
\end{equation}
It is natural to use the push-forward to map smeared fields in $\Mb$ to smeared fields in $\Nb$, writing
\begin{equation}\label{eq:covariantscalarfield}
\Af(\psi)\Phi_\Mb(f):=\Phi_\Nb(\psi_*f).
\end{equation}
(In the Weyl formulation, one uses $\Af(\psi)W_\Mb(f_\sim) = W_\Nb((\psi_*f)_\sim)$
where the subscript $\sim$ refers to $\Mb$ and $\Nb$, respectively,
and the following discussion would proceed completely 
analogously.)
However, the assignment~\eqref{eq:covariantscalarfield} is only well-defined if it is compatible with the algebraic relations holding in $\Af(\Mb)$ and $\Af(\Nb)$;
in particular, we must have
$\Phi_\Nb(\psi_* P_\Mb f)=0$ for all test functions $f\in\CoinX{\Mb}$.
Further conditions arise if we wish to extend $\Af(\psi)$ from 
the generators to the full algebra. Here, the simplest possibility is that $\Af(\psi)$ should be
a $*$-homomorphism that also preserves units. In that
case, the commutation relations, together with \eqref{eq:covariantscalarfield}, give
\begin{equation}
[\Phi_\Nb(\psi_*f), \Phi_\Nb(\psi_* h)] =
\Af(\psi)[\Phi_\Mb(f), \Phi_\Mb( h)] = i E_\Mb(f,h)\II_{\Af(\Nb)}
\end{equation}
and hence $E_\Nb(\psi_*f,\psi_*h)=E_\Mb(f,h)$ for
all $f,h\in\CoinX{\Mb}$. We see that $E_\Mb$ is the pull-back, $E_\Mb=(\psi\times\psi)^*E_\Nb$,
of $E_\Nb$ and its wave-front set~\cite{Hoermander1} therefore obeys 
\[
\WF(E_\Mb)\subset (\psi\times\psi)^*\WF(E_\Nb).
\]
Given the known structure of both sides, 
we deduce that $\psi^*$ must map null-covectors on $\Nb$ to null covectors on $\Mb$,
and preserve time-orientation. This already restricts $\psi$ to be a conformal isometry,
and in fact one may see that it must be an isometry unless $P_\Mb$ is conformally invariant.
With an eye to other theories such as the pseudoscalar or Maxwell fields,  we might reasonably
require $\psi$ to preserve not only the time-orientation, but also the spacetime orientation. 

We have seen that the local structure of $\psi$ is quite restricted if there is to be any hope
of implementing \eqref{eq:covariantscalarfield}. In fact, the condition $E_\Mb=(\psi\times\psi)^*E_\Nb$,
of $E_\Nb$ also has global consequences: the image $\psi(\Mb)$ must be a causally convex subset of $\Nb$, which requires that every causal curve in $\Nb$ whose endpoints lie in $\psi(\Mb)$ should
be contained entirely in $\psi(\Mb)$. Examples showing the failure of this relation in the absence of causal convexity are to be found in
\cite{Kay_Flocality:1992, BarGinouxPfaffle}; more generally, the conclusion follows from the fact that
singularities propagate along null geodesics.

Our discussion has led us, with very little alternative, to a specification of those maps of interest: 
$\psi:\Mb\to\Nb$ is a smooth, isometric embedding, preserving orientation and time-orientation, and
with causally convex image. For such $\psi$, we now have a unit-preserving $*$-homomorphism
$\Af(\psi):\Af(\Mb)\to\Af(\Nb)$ which turns out to be injective.\footnote{The algebra $\Af(\Mb)$
is simple (and not the zero algebra!), so $\Af(\psi)$ either has trivial kernel or full kernel; the latter case 
is excluded because $\Af(\psi)\II_{\Af(\Mb)}=\II_{\Af(\Nb)}\neq 0$.} We may observe
something more: if we also consider a map $\varphi:\Lb\to\Mb$ obeying these conditions, 
then the same is true of the composition $\psi\circ\varphi:\Lb\to\Nb$. It is clear from 
\eqref{eq:covariantscalarfield} that 
\begin{equation}
\Af(\psi\circ\varphi)\Phi_\Lb(f) = \Phi_\Nb((\psi\circ\varphi)_*f) =
\Phi_\Nb(\psi_*\varphi_*f)= \Af(\psi)(\Af(\varphi)\Phi_\Lb(f))
\end{equation}
for all $f\in\CoinX{\Lb}$, and we 
extend from the generators to obtain  
\begin{equation}\label{eq:covariance}
\Af(\psi\circ\varphi)= \Af(\psi)\circ \Af(\varphi).
\end{equation}
In addition, it is clear that the identity map $\id_\Mb$ of $\Mb$ corresponds to $\Af(\id_\Mb)=
\id_{\Af(\Mb)}$. These observations may all be summarised in the single statement that
\emph{the theory is described by a covariant functor} $\Af$ between two categories:
\begin{description}
\item[$\Loc$] the category whose objects are all globally hyperbolic spacetimes
$\Mb=(\Mc,g,\ogth,\tgth)$ of fixed dimension $n$ with finitely many connected components, and whose morphisms are
smooth isometric embeddings, preserving orientation and time-orientation, 
having causally convex image. Here $\Mc$ is the underlying manifold, with metric $g$, while the symbol  $\ogth$ stands
for a choice of orientation, represented by one of the components 
of the set of nowhere-zero smooth $n$-forms on $\Mc$. Similarly, 
$\tgth$ denotes the time-orientation, represented by one of the components 
of the set of nowhere-zero smooth $g$-timelike $1$-forms on $\Mc$. 
\item[$\Alg$] the category of unital $*$-algebras  \emph{excluding the zero algebra}, with unit-preserving injective $*$-homomorphisms 
as morphisms. 
\end{description}
Here, we have anticipated future developments by allowing for disconnected spacetimes
when defining $\Loc$. If $\Mb$ is disconnected, and $\psi:\Mb\to\Nb$ according to 
the definition above, then causal convexity of $\psi(\Mb)$ forces its various components
to be causally disjoint -- no causal curve can join one component to another. Of course, one should be alive to the possibility that the
example has features that might not be shared by all theories.
In particular, $\Loc$ admits spacetime embeddings in which
the components of the image have closures that are in causal
contact, or allow for self-touchings at their boundaries. While
the Klein--Gordon theory has well-defined morphisms corresponding
to such embeddings, it is conceivable that there are reasonable theories 
that do not, and that a more conservative starting point should
be found in due course. In the definition of $\Alg$, we have excluded the zero unital algebra (consisting of a single element
which is both the zero and unit) to avoid some pathologies
and to ensure that $\Alg$ has an initial object, namely the
algebra of complex numbers.\footnote{An initial object 
in a category $\Ct$ is an object $I$ with the property that 
there is, to each object $C$ of $\Ct$, exactly one morphism
from $I$ to $C$.} Accordingly, the unit is distinct from the zero element
in every object of $\Alg$.

Although we have reached this structure by means of an example, it has a clear physical interpretation and could be motivated in its own terms. Namely, the morphisms $\psi$ specify embeddings in which 
all causal relations between points in the image $\psi(\Mb)$ (with respect to $\Nb$) are 
already causal relations between the corresponding points in $\Mb$. Physics, by which 
we mean here degrees of freedom and laws of motion (without yet specifying boundary
or initial conditions), in the image region would be expected to correspond to that in the domain spacetime: this is a version of the principle of locality. In particular, we expect the physics on
the smaller spacetime to be faithfully represented within that of the larger,
and that there should be no distinction between physics
in the embedded region $\psi(\Mb)$ and in the spacetime $\Mb$. 
The functorial definition provides a consistency
mechanism that protects the ignorance of an experimenter
within $\psi(\Mb)$ of the nature (or even existence) of
the spacetime beyond the region under her control.  
Further discussion of these ideas can be found in~\cite{Few_Chicheley:2015}. 

Taking all of the above into account, we will adopt the general assumption that any theory that is both covariant and respects the principle of locality should be described by a covariant functor from $\Loc$ (or another
suitable category of spacetimes) to a category of physical systems, in
which morphisms represent embeddings of one system as a subsystem of another
and are required to be monic. This is a working hypothesis for the development of a model-independent theory, but note that
\begin{enumerate}
\item the whole enterprise is questionable for spacetimes that are smaller in scale than
the physical systems they  support (e.g., measured by a Compton wavelength)
\item as mentioned, $\Loc$ may admit too wide a variety of morphisms for some theories
\item for gauge theories in particular the requirement of injectivity is sometimes in conflict
with other desirable features of the theory, particularly in order to capture topological aspects. See, e.g., the discussion following
Theorem~\ref{thm:causality}. 
\end{enumerate}

\section{General assumptions and first consequences}
\label{sect:assumptions}

We begin a more formal development of the structure, which rests on a 
number of general assumptions. The first has already been motivated:

\begin{assumption}[Local covariance]
\label{ax:loc_cov} A locally covariant theory is a functor 
\[
\Af:\Loc\to\Alg.
\]
\end{assumption}
Depending on the application, one might wish to specify the target category
more stringently, e.g., requiring that $\Af$ takes values in the subcategory $\CAlg$ of $\Alg$,
consisting of unital $C^*$-algebras. One may also formulate locally covariant descriptions
for theories other than QFT by allowing a more general category $\Phys$. 
Here, however,  we will remain in the algebraic description for the most part.
 
Given this starting point, we may define a net of local algebras in each spacetime $\Mb=
(\Mc,g,\ogth,\tgth)$.
Let $\OO(\Mb)$ be the set of all open causally convex subsets of $\Mb$, with at most
finitely many connected components (which are necessarily
causally disjoint). 
Then, for each nonempty $O\in\OO(\Mb)$, we may define a new object $\Mb|_O=
(O,g|_O,\ogth|_O,\tgth|_O)$ of $\Loc$, which is simply the set $O$ equipped with the
causal structures induced from $\Mb$ and regarded as a spacetime in its own right. 
In addition, the subset embedding of $O$ in $\Mc$ is evidently a smooth embedding
which is an isometric (time)-orientation preserving map owing to the way we have
defined $\Mb$. Thus it defines a morphism $\iota_{\Mb;O}:\Mb|_O\to\Mb$ in $\Loc$. 
The functor $\Af$ therefore assigns both an algebra $\Af(\Mb|_O)$ and a morphism
$\Af(\iota_{\Mb;O})$ of $\Af(\Mb|_O)$ into $\Af(\Mb)$. The image of this morphism,
\begin{equation}
\Af^\kin(\Mb;O) = \Af(\iota_{\Mb;O})(\Af(\Mb_O)),
\end{equation}
is called the \emph{kinematic algebra} associated with region $O$, and gives a 
description of the physics of the theory within $O$.\footnote{Alternatively, and perhaps more in the spirit of a categorical description, one might say that the morphism $\Af(\iota_{\Mb;O})$, regarded as defining a subobject of $\Af(\Mb)$, should be the focus here~\cite{FewVer:dynloc_theory}.} The kinematic algebras have some immediate properties.
First, suppose that $O_1\subset O_2$. Then the factorisation
 $\iota_{\Mb;O_1}=
\iota_{\Mb;O_2}\circ\iota_{\Mb|_{O_2};O_1}$ of the inclusion morphism 
implies that $\Af(\iota_{\Mb;O_1})=\Af(
\iota_{\Mb;O_2})\circ\Af(\iota_{\Mb|_{O_2};O_1})$ and hence
that
\begin{equation}\label{eq:kin_isotony}
\Af^\kin(\Mb;O_1)\subset \Af^\kin(\Mb;O_2).
\end{equation}
In other words, the kinematic net is isotonous.  Consequently, if  $O_1,O_2\in \OO(\Mb)$ have
a nonempty intersection $O_1\cap O_2$ (which is
seen to be causally convex and therefore an element of $\OO(\Mb)$)
then 
\begin{equation}\label{eq:kin_intersection1}
\Af^\kin(\Mb;O_1\cap O_2) \subset \Af^\kin(\Mb;O_1)\cap
\Af^\kin(\Mb;O_2).
\end{equation}
(One does not expect equality here.) On the other hand, if
$O_1,O_2\in\OO(\Mb)$ are nonempty and their union is causally convex
then $\Af^\kin(\Mb;O_1\cup O_2)$ contains both $\Af^\kin(\Mb;O_i)$
and therefore the algebra that they generate, so
\begin{equation}\label{eq:kin_union1}
\Af^\kin(\Mb;O_1)\vee \Af^\kin(\Mb;O_2)\subset \Af^\kin(\Mb;O_1\cup O_2).
\end{equation} 
In the $C^*$-algebraic setting, where $\Af:\Loc\to\CAlg$, we may sharpen 
this result so that the left-hand side
is  the $C^*$-subalgebra of $\Af^\kin(\Mb;O_1\cup O_2)$ generated 
by the $\Af^\kin(\Mb;O_i)$.\footnote{In a general categorical setting, 
one would employ the \emph{categorical union} of the $\Af^\kin(\Mb;O_i)$.}
Both~\eqref{eq:kin_intersection1} and~\eqref{eq:kin_union1}
extend to finitely many $O_i$ in obvious ways; if
equality holds in \eqref{eq:kin_union1}, the theory $\Af$
will be described as \emph{finitely additive}. 

Next, consider a morphism $\psi:\Mb\to\Nb$. 
Then the spacetimes $\Mb|_{O}$ and $\Nb|_{\psi(O)}$ are
isomorphic via the map $\hat{\psi}_O$ obtained as the restriction of
$\psi$ to $O$, obeying 
 $\iota_{\Nb;\psi(O)}\circ\hat{\psi}_O=\psi\circ\iota_{\Mb;O}$. 
Applying the functor $\Af$, and noting that $\Af(\hat{\psi}|_O)$
is an isomorphism, we find
\begin{equation}\label{eq:kin_covariance}
\Af^\kin(\Nb;\psi(O)) = \Af(\psi)(\Af^\kin(\Mb;O)).
\end{equation}
An important special case arises where $\psi:\Mb\to\Mb$, 
i.e., $\psi\in\End(\Mb)$, in which $\Af(\psi)$ defines an
endomorphism of the kinematic net, or a net isomorphism
in the case where $\psi$ is an isomorphism, $\psi\in\Aut(\Mb)$.
In particular, we see that there is a homomorphism of monoids
from $\End(\Mb)$ to $\End(\Af(\Mb))$, that
restricts to a group homomorphism from $\Aut(\Mb)$ to
$\Aut(\Af(\Mb))$.

The properties \eqref{eq:kin_isotony}, \eqref{eq:kin_intersection1}, \eqref{eq:kin_union1} and \eqref{eq:kin_covariance} are direct generalisations
of the properties of the nets of local algebras encountered in
Minkowski space AQFT; see \cite{Haag} and \cite{BaumWollen:1992}. 
It is remarkable that they all follow without further input from the 
single assumption that the theory is described functorially. 
As an application of \eqref{eq:kin_union1} and \eqref{eq:kin_covariance}, suppose that
$\Mb$ has finitely many connected components $\Mc_i$ and $\psi:\Mb\to\Nb$. Then
we have
\begin{equation}\label{eq:gen_by_cpts}
\bigvee_{i} \Af^\kin(\Nb;\psi(\Mc_i)) \subset \Af^\kin(\Nb;\psi(\Mc)).
\end{equation}
If $\Af$ is finitely additive, then equality holds in \eqref{eq:gen_by_cpts};
this need not be true in general. 

Before proceeding to the other standard assumptions, two further definitions are required. For
a region $O\subset \Mb$, we write $O':=\Mb\setminus\overline{J_\Mb(O)}$
for its open causal complement; in addition, a morphism
$\psi:\Mb\to\Nb$ will be described as \emph{Cauchy}
if $\psi(\Mb)$ contains a Cauchy surface for $\Nb$ 
(equivalently, if every inextendible timelike curve in $\Nb$
intersects $\psi(\Mb)$). The remaining assumptions are:
\begin{assumption}[Einstein Causality]
 \label{ax:Einstein_causality}
If $O_1, O_2\in \OO(\Mb)$ are causally disjoint
in the sense that $O_1\subset O_2':=\Mb\setminus\overline{J_\Mb(O_2)}$, then
\begin{equation}\label{eq:Einstein_causality}
[\Af^\kin(\Mb;O_1),\Af^\kin(\Mb;O_2)] = \{0\}.
\end{equation}
\end{assumption}
\begin{assumption}[Timeslice] \label{ax:timeslice} 
If $\psi:\Mb\to\Nb$ is Cauchy then $\Af(\psi)$ is an isomorphism.
\end{assumption}
Unless otherwise specified, the term `locally covariant QFT' will refer
to a functor obeying Assumptions  \ref{ax:loc_cov}--\ref{ax:timeslice}.
In Sect.~\ref{sect:rigidity}
it will be seen that Assumption~\ref{ax:Einstein_causality} is partly
redundant: it is enough that Einstein causality should hold
for \emph{one} pair of spacelike separated regions in \emph{one}
spacetime for it to hold for suitable spacelike separated regions in general spacetimes. Together with other assumptions, Einstein
causality leads to an additional \emph{monoidal structure}
on the theory -- see~\cite{BrFrImRe:2014} and Sect.~\ref{sect:rigidity}
for discussion. If the theory not only  
describes observables, but also smeared fermionic fields, for example,
then a suitable graded commutator should be employed. 

The timeslice assumption is one of the lynch-pins of the structure
and encodes the idea that the theory \emph{has} a dynamical law, although \emph{what} it is is left unspecified. It has an immediate
consequence: if $O\in\OO(\Mb)$ is nonempty, then 
$O$ contains a Cauchy surface of the Cauchy development 
$D_\Mb(O)$ -- the set of all points $p$ in $\Mb$ with the property that
all inextendible piecewise-smooth causal curves through $p$ intersect $O$,
which is open, causally convex and therefore a member of 
$\OO(\Mb)$.\footnote{Some
authors, notably Penrose~\cite{Penrose1972} and Geroch~\cite{Geroch_domdep:1970}, define the Cauchy development  with timelike curves of various types. We follow~\cite{ONeill,BeemEhrlichEasley,HawkingEllis,Wald_gr}. Many authors
only define the Cauchy development for achronal sets.
The fact that $D_\Mb(O)$ is open is most easily seen using limit curves cf.\ \cite[Prop.~3.31]{BeemEhrlichEasley} or~\cite[Lem.~6.2.1]{HawkingEllis}.}
We already know that $\iota_{\Mb;O}$ factors as
$\iota_{\Mb;O}=\iota_{\Mb;D_\Mb(O)}\circ \iota_{\Mb|_{D_\Mb(O)};O}$. Applying the functor, the timeslice 
property entails that $\Af(\iota_{\Mb|_{D_\Mb(O)};O})$ is an isomorphism and so
\begin{equation}\label{eq:kin_timeslice}
\Af^\kin(\Mb;O) = \Af^\kin(\Mb;D_\Mb(O)).
\end{equation}
Hence we may immediately strengthen \eqref{eq:kin_intersection1} to
\begin{equation}\label{eq:kin_intersection2}
\Af^\kin(\Mb;D_\Mb(O_1)\cap D_\Mb(O_2)) \subset \Af^\kin(\Mb;O_1)\cap \Af^\kin(\Mb;O_2).
\end{equation}
The timeslice property and its ramifications will be dominant
themes in our discussion.

To conclude this section, we note that various models
obeying the general assumptions listed have been constructed.
The prototypical example is the free Klein--Gordon model 
described in Sec.~\ref{sec:KG}. There, it was shown that
the theory is given in terms of a functor $\Af:\Loc\to\Alg$,
with algebras $\Af(\Mb)$ generated by `smeared fields'
$\Phi_\Mb(f)$ ($f\in\CoinX{\Mb})$ and subject to relations
KG1--4. It is easily seen that, for nonempty
 $O\in\OO(\Mb)$, the kinematic algebra $\Af^\kin(\Mb;O)$
is the subalgebra of $\Af(\Mb)$ generated by those
$\Phi_\Mb(f)$ with $f\in C_0^\infty(O)$. Then \eqref{eq:gen_by_cpts} holds with equality
and Einstein causality holds because
$\supp E_\Mb f\subset J_\Mb(\supp f)$. The
timeslice property can be shown by  standard arguments: 
if $\psi:\Mb\to\Nb$
is a Cauchy morphism, let 
$\chi\in C^\infty(\Nb)$ be chosen so that
$\chi\equiv 0$ to the future of $\Sigma^+$ and $\chi\equiv 1$ to the past of $\Sigma^-$,
where $\Sigma^\pm$ are  Cauchy surfaces in $\psi(\Mb)$.
If $f\in\CoinX{\Nb}$ then $f'=P_\Nb \chi E_\Nb f$ may be shown
to have compact support in $\psi(\Mb)$ and to obey $f-f'\in P_\Nb\CoinX{\psi(\Mb)}$.\footnote{
One observes that $\supp f'\subset J_\Nb^-(\Sigma^+)\cap J_\Nb^+(\Sigma^-)\cap J_\Nb(\supp f)$,
which is compact (see e.g., \cite[Lem.~A.5.4]{BarGinouxPfaffle}) and contained in $\psi(\Mb)$. Moreover, by the support properties of $\chi$ and the definition of $E_\Nb^-$, we have $E_\Nb^- f' = \chi E_\Nb f$; similarly,  $E_\Nb^+ P_\Nb (1-\chi)E_\Nb f=(1-\chi)E_\Nb f$. Adding these two expressions and using
$P_\Nb E_\Nb f=0$, we obtain $E_\Nb f'=E_\Nb f$ and hence 
$f-f'\in P_\Nb\CoinX{\psi(\Mb)}$ by e.g.,~\cite[Thm 3.4.7]{BarGinouxPfaffle}. }
Then 
\[
\Phi_{\Nb}(f)= \Phi_{\Nb}(f')=\Phi_\Nb(\psi_* \psi^*f') = \Af(\psi)\Phi_\Mb(\psi^*f')
\] 
by KG3, the support properties of $f'$ and the definition of $\Af(\psi)$. As
every generator of $\Af(\Nb)$ lies in the image of the injective map $\Af(\psi)$, 
it is an isomorphism.

Similarly, models such as the Proca and (with modifications) Dirac fields 
also fit into the framework~\cite{Dappiaggi:2011,Sanders_dirac:2010}, as do the perturbatively
constructed models of pAQFT -- see, e.g., \cite{Ho&Wa01,Ho&Wa02,Hollands:2008}  for details.

As a further model of interest, let us return to the 
Klein--Gordon theory $\Af$. Each algebra $\Af(\Mb)$
contains a unital $*$-subalgebra $\Af^{\text{ev}}(\Mb)$ of
elements that generated by the unit together with 
bilinear elements $\Phi_\Mb(f)\Phi_\Mb(h)$ ($f,h\in\Df(\Mb)$). 
It is easily seen that, for $\psi:\Mb\to\Nb$, the morphism
$\Af(\psi)$ restricts to a morphism $\Af^{\text{ev}}(\psi):
\Af^{\text{ev}}(\Mb)\to\Af^{\text{ev}}(\Nb)$, which 
defines a new locally covariant theory $\Af^{\text{ev}}:\Loc\to\Alg$. 
Like $\Af$, this theory obeys Assumptions \ref{ax:loc_cov}--\ref{ax:timeslice}.
However, relation~\eqref{eq:kin_union1}
cannot be strengthened to an equality for this theory: 
consider spacelike separated $O_i\in\OO(\Mb)$ and let
$f_i\in\CoinX{O_i}$ ($i=1,2$). Then  
the kinematic algebra $\Af^{\text{ev},\kin}(\Mb;O_1\cup O_2)$
contains an element $\Phi_\Mb(f_1)\Phi_\Mb(f_2)$,
which is not contained in $\Af^{\text{ev},\kin}(\Mb;O_1)\vee
\Af^{\text{ev},\kin}(\Mb;O_2)$; in other words, 
$\Af^{\text{ev}}$ is not finitely additive.

One should also bear in mind that the general
assumptions so far also allow for models that display
unphysical properties. For example, define a  theory
by
\begin{equation}\label{eq:oneortwo_obj}
\Bf(\Mb)=\begin{cases} \Af(\Mb) & \text{if $\Mb$ has noncompact
Cauchy surfaces} \\
\Af(\Mb)\otimes\Af(\Mb) & \text{if $\Mb$ has compact
Cauchy surfaces} 
\end{cases}
\end{equation}
and, for $\psi:\Mb\to\Nb$,
\begin{equation}\label{eq:oneortwo_mor}
\Bf(\psi)A =\begin{cases} \Af(\psi)A  & \text{$\Nb$ has noncompact Cauchy surfaces} \\
\left(\Af(\psi)\otimes\Af(\psi) \right) A & \text{$\Mb$ has compact Cauchy surfaces}\\
\left(\Af(\psi)\right)A\otimes\II &  \text{otherwise}
\end{cases}
\end{equation}
(by a general result in Lorentzian geometry, the only case that
can arise under `otherwise' is that of $\Nb$ having compact
Cauchy surfaces and $\Mb$ having noncompact Cauchy surfaces; see \cite[Prop.~A.1]{FewVer:dynloc_theory},
which is based on results in~\cite{BILY}).

The reader may verify that theory obeys Assumptions \ref{ax:loc_cov}--\ref{ax:timeslice}, while being a theory of a single scalar field in some spacetimes, and of two independent
scalar fields in others. More examples in a similar vein can be
found in \cite{FewVer:dynloc_theory}; indeed, one could employ
the same construction using any locally covariant theory as
a starting point. Therefore, the Assumptions \ref{ax:loc_cov}--\ref{ax:timeslice} are
not in themselves sufficient to guarantee that a theory
represents the same physics in all spacetimes,
an issue that will be studied further in Sect.~\ref{sect:SPASs}.  

\section{Relative Cauchy evolution}\label{sect:rce}

The locally covariant approach not only conveniently summarises many 
general facts about QFT in curved spacetimes, but has also led to new
developments in the subject. One such is the idea of \emph{relative Cauchy
evolution}, introduced in \cite{BrFrVe03} and further developed in
\cite{FewVer:dynloc_theory}, which allows for the comparison
of the dynamics of a theory on different spacetimes, even when
one cannot be embedded in the other.  

Let $\Mb=(\Mc,g,\ogth,\tgth)$ be
a globally hyperbolic spacetime. If $h$ is a smooth compactly supported
rank-$2$ covariant tensor field that is `not too big' then we can define
a deformed spacetime $\Mb[h]=(\Mc,g+h, \ogth,\tgth[h])$
which is still globally hyperbolic, where $\tgth[h]$ is the unique 
choice of time orientation agreeing with $\tgth$ outside the support of $h$.\footnote{
The orientation need not be changed when the metric changes; recall
that $\ogth$ is a component of the nonzero smooth $n$-forms, and
not e.g., the volume form.} The set of all such metric perturbations will be denoted $H(\Mb)$.
The idea is now to select regions to the
past and future of the metric perturbation that are common to $\Mb$ and $\Mb[h]$
and contain Cauchy surfaces thereof. This is achieved by choosing
Cauchy morphisms $\imath^\pm:\Mb^\pm\to \Mb$ 
with images $\imath^\pm(\Mb^\pm)$ contained in $\Mc\setminus J^\mp_\Mb(\supp h)$,
i.e., so that $\supp h$ has trivial intersection
with the causal future of $\imath^+(\Mb^+)$ and the causal past of $\imath^-(\Mb^-)$.
Given these choices, there are Cauchy morphisms $j^\pm:\Mb^\pm\to\Mb[h]$ 
with {\em the same underlying maps} as $\imath^\pm$. 

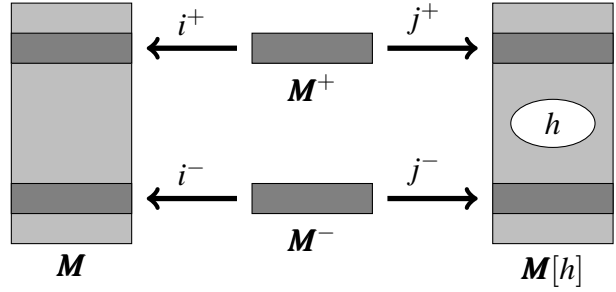
\begin{figure}[t]
\begin{center}
\begin{tikzpicture}[scale=0.8]
\definecolor{Gold}{rgb}{.93,.82,.24}
\definecolor{Orange}{rgb}{1,0.5,0}
\draw[fill=lightgray] (-4,0) -- ++(2,0) -- ++(0,4) -- ++(-2,0) -- cycle;
\draw[fill=lightgray] (4,0) -- ++(2,0) -- ++(0,4) -- ++(-2,0) -- cycle;
\draw[fill=gray] (4,3) -- ++(2,0) -- ++(0,0.5) -- ++(-2,0) -- cycle;
\draw[fill=gray] (0,3) -- ++(2,0) -- ++(0,0.5) -- ++(-2,0) -- cycle;
\draw[fill=gray] (-4,3) -- ++(2,0) -- ++(0,0.5) -- ++(-2,0) -- cycle;
\draw[fill=gray] (4,0.5) -- ++(2,0) -- ++(0,0.5) -- ++(-2,0) -- cycle;
\draw[fill=gray] (0,0.5) -- ++(2,0) -- ++(0,0.5) -- ++(-2,0) -- cycle;
\draw[fill=gray] (-4,0.5) -- ++(2,0) -- ++(0,0.5) -- ++(-2,0) -- cycle;
\draw[color=black,line width=2pt,->] (2.25,3.25) -- (3.75,3.25) node[pos=0.4,above]{$j^+$};
\draw[color=black,line width=2pt,->] (2.25,0.75) -- (3.75,0.75) node[pos=0.4,above]{$j^-$};
\draw[color=black,line width=2pt,->] (-0.25,3.25) -- (-1.75,3.25) node[pos=0.5,above]{$i^+$};
\draw[color=black,line width=2pt,->] (-0.25,0.75) -- (-1.75,0.75) node[pos=0.5,above]{$i^-$};
\draw[fill=white] (5,2) ellipse (0.7 and 0.4);
\node at (5,2) {$h$};
\node[anchor=north] at (5,0) {$\Mb[h]$};
\node[anchor=north] at (-3,0) {$\Mb$};
\node[anchor=north] at (1,3) {$\Mb^+$};
\node[anchor=north] at (1,0.5) {$\Mb^-$};
\end{tikzpicture}
\end{center}
\caption{Spacetimes involved in the construction of the relative Cauchy evolution}\label{fig:rce}
\end{figure}

The arrangement of spacetimes is displayed in pictorially in~Figure~\ref{fig:rce} 
and can be portrayed diagrammatically as
\[
\begin{tikzpicture}[baseline=0 em, description/.style={fill=white,inner sep=2pt}]
\matrix (m) [ampersand replacement=\&,matrix of math nodes, row sep=0.5em,
column sep=3em, text height=1.5ex, text depth=0.25ex]
{   \& \Mb^+ \&  \\
 \Mb \&  \& \Mb[h] \\
 \& \Mb^- \& \\ };
\path[->]
(m-1-2) edge node[above,sloped] {$ \imath^+$} (m-2-1)
        edge node[above,sloped] {$ j^+ $} (m-2-3)
(m-3-2) edge node[below,sloped] {$ \imath^-$} (m-2-1)
        edge node[below,sloped] {$ j^- $} (m-2-3);
\end{tikzpicture} \hspace{1.5em}
\begin{tikzpicture}[baseline=0 em, description/.style={fill=white,inner sep=2pt}]
\matrix (m) [ampersand replacement=\&,matrix of math nodes, row sep=0.5em,
column sep=3em, text height=1.5ex, text depth=0.25ex]
{   \& \Af(\Mb^+) \&  \\
 \Af(\Mb) \&  \& \Af(\Mb[h]) \\
 \& \Af(\Mb^-) \& \\ };
\path[->]
(m-1-2) edge node[above,sloped] {$ \Af(\imath^+) $} (m-2-1)
        edge node[above,sloped] {$\Af( j^+) $} (m-2-3)
(m-3-2) edge node[below,sloped] {$ \Af(\imath^-) $} (m-2-1)
        edge node[below,sloped] {$ \Af(j^-) $} (m-2-3);
\end{tikzpicture},
\]
where the diagram on the right is obtained by applying 
the functor corresponding to a locally covariant theory $\Af$. 
The important point is that the timeslice property ensures that all the morphisms in
this second diagram are isomorphisms, and so can be inverted. This
permits us to traverse the right-hand diagram clockwise, starting and
ending at $\Af(\Mb)$, to obtain the relative Cauchy evolution 
\[
\rce_\Mb[h] = \Af(\imath^-)\circ  \Af(j^-)^{-1} \circ
\Af( j^+)\circ  \Af(\imath^+)^{-1},
\]
which encodes the response of the theory to the metric variation $h$
as an automorphism of $\Af(\Mb)$. The relative Cauchy evolution is independent of the choices of
Cauchy morphisms made -- there is also a canonical choice in which $\Mb^\pm$ have
underlying manifolds $\Mc^\pm=\Mc\setminus J^\mp_\Mb(\supp h)$
(see~\cite[\S 3.4]{FewVer:dynloc_theory}). 

It is not hard to compute the relative Cauchy evolution
for the real scalar field. Fix any $\Mb\in\Loc$ and any compactly supported metric perturbation $h\in H(\Mb)$. As the relative Cauchy evolution is independent of the specific Cauchy morphisms used,  convenient choices can be made. Choose Cauchy surfaces $\Sigma^\pm$ of $\Mb$ so that $\Sigma^\pm\subset I^\pm_\Mb(\Sigma^\mp)$ and with $\supp h \subset I^-_\Mb(\Sigma^+)\cap I^+_\Mb(\Sigma^-)$, i.e., $h$ lies to the future of $\Sigma^-$ and the past of $\Sigma^+$. 
Then let $\Mc^\pm=I^\pm_\Mb(\Sigma^\pm)$ and define $\Mb^\pm=\Mb|_{\Mc^\pm}$, 
letting $\imath^\pm$ and $j^\pm$ be the inclusion morphisms of $\Mc^\pm$ in $\Mb$ and $\Mb[h]$ respectively, which are necessarily Cauchy.  
It is enough to evaluate the action of the relative Cauchy evolution on
the generators $\Phi_\Mb(f)$ of $\Af(\Mb)$ -- moreover, by the timeslice property (on $\Mb$) it is sufficient
to restrict to test functions $f$ supported in $\Mc^+$, for which  
\[
\Af( j^+)\circ  \Af(\imath^+)^{-1}\Phi_\Mb(f) = \Phi_{\Mb[h]}(f).
\]
Using the timeslice property on $\Mb[h]$ (cf.\ the discussion in Section~\ref{sect:assumptions}) 
we may write
$\Phi_{\Mb[h]}(f)= \Phi_{\Mb[h]}(P_{\Mb[h]}\chi E_{\Mb[h]} f)$,  where 
$\chi\in C^\infty(\Mc)$ has been chosen to vanish identically in $J_{\Mb[h]}^+(\Sigma^-)$,
and to take the value $1$ identically to the past of some other Cauchy surface in $\Mc^-$. 
In particular, $\chi$ vanishes on the support of $h$ and also on $\Mc^+$. With these choices, $P_{\Mb[h]}\chi E_{\Mb[h]} f$ is supported in
$\Mc^-$, whereupon 
\[
\rce_\Mb[h]\Phi_\Mb(f)= \Af(\imath^-)\circ  \Af(j^-)^{-1}\Phi_{\Mb[h]}(P_{\Mb[h]}\chi E_{\Mb[h]} f)
=\Phi_{\Mb}(P_{\Mb[h]}\chi E_{\Mb[h]} f).
\]
This expression can be simplified so as to remove the dependence on $\chi$. The support properties of $\chi$ entail 
that $\chi E_{\Mb[h]} f =\chi E^-_{\Mb[h]} f$, and so
\begin{align*}
P_{\Mb[h]}\chi E_{\Mb[h]} f &= f - P_{\Mb[h]}(1-\chi) E^-_{\Mb[h]} f \\
&=
f - (P_{\Mb[h]}-P_{\Mb})(1-\chi) E^-_{\Mb[h]} f  
- P_{\Mb}(1-\chi )E^-_{\Mb[h]} f .
\end{align*}
Moreover,  $P_{\Mb[h]}$ and $P_\Mb$ differ only where $\chi=0$, and outside
$J_{\Mb[h]}^+(\supp f)$, so
\[
(P_{\Mb[h]}-P_{\Mb})(1-\chi) E^-_{\Mb[h]} f  = (P_{\Mb[h]}-P_{\Mb}) E^-_{\Mb[h]} f 
=(P_{\Mb[h]}-P_{\Mb}) E_{\Mb[h]} f 
\]
which gives $ P_{\Mb[h]}\chi E_{\Mb[h]} f = f - (P_{\Mb[h]}-P_{\Mb}) E_{\Mb[h]} f -
P_{\Mb} (1-\chi)E^-_{\Mb[h]} f$. Crucially, $(1-\chi)E^-_{\Mb[h]} f$ is compactly supported, 
and so the field equation axiom KG3 gives
\begin{equation}\label{eq:rce_scalar}
\rce_{\Mb}[h] \Phi_\Mb(f) =\Phi_\Mb(f) - \Phi_\Mb((P_{\Mb[h]}-P_{\Mb}) E_{\Mb[h]} f)
\end{equation}
at least for those $f$ supported in $\Mc^+$. As previously mentioned, this suffices to fix
the action of $\rce_\Mb[h]$ on the whole of $\Af(\Mb)$. 

Equation~\eqref{eq:rce_scalar} clearly shows that the relative Cauchy evolution is
trivial if $f$ is supported within the causal complement of $\supp h$ (at least for
our class of $f$). In fact, it is true for any locally covariant theory that  
$\rce_\Mb[h]$ acts trivially on $\Af^\kin(\Mb;O)$ for any region $O\in\OO(\Mb)$
with $O\subset (\supp h)^\perp:=\Mb\setminus \supp h$,
which shows that the relative Cauchy evolution is local with respect
to the metric perturbation \cite[Prop. 3.5]{FewVer:dynloc_theory}.
Moreover, it is also covariant (again, for any locally covariant theory): given $\Mb\to\Nb$ and $h\in H(\Mb)$,
it can be shown \cite[Prop. 3.7]{FewVer:dynloc_theory} that $\psi_* h\in H(\Nb)$ and 
\[
\rce_\Nb[\psi_*h]\circ\Af(\psi) = \Af(\psi)\circ \rce_\Mb[h].
\]
A particular case of interest is where $\Nb=\Mb$ and $\psi\in\Aut(\Mb)$
is a spacetime symmetry.  Alternatively, if $K\subset\Mb$ is compact
and contains the support of $h\in H(\Mb)$, then 
\begin{equation}\label{eq:rce_diff_cov}
\rce_\Mb[\psi_*(h+g)-g]=\rce_\Mb[h]
\end{equation} 
for any diffeomorphism $\psi$
that acts trivially outside $K$ \cite[Prop. 4.1]{BrFrVe03}.

The interpretation of the relative Cauchy evolution is best understood by means of its functional derivatives, which turn out to be related to a stress-energy tensor. Let $s\mapsto h(s)$ be a smooth $1$-parameter family of metric perturbations with $h(0)=0$, so that  $\Mb[h(s)]$ is a globally hyperbolic
spacetime for all sufficiently small $|s|$. Assuming the relevant derivatives exist (in a suitable topology,
which might, for instance, be a weak topology induced by a state space --- see Section~\ref{sect:states})
we may define a derivation $\delta$ on $\Af(\Mb)$ by 
\[
\delta(A) = -2i \left.\frac{d}{ds}\rce_\Mb[h(s)] A\right|_{s=0}
\]
that depends linearly on $f=\dot{h}(0)$. It is
convenient to denote this by $\delta(\cdot) = [\Ts_\Mb(f), \cdot ]$, 
without any implication that $\Ts_\Mb(f)$ is an element of $\Af(\Mb)$.
Moreover, the right-hand side can be written in functional derivative notation,
leading to the suggestive equation
\[
[\Ts_\Mb(f), A] = \frac{2}{i} \int_\Mb f_{\mu\nu} \frac{\delta\rce_{\Mb}}{\delta g_{\mu\nu}} (A).
\]

Although $\Ts_\Mb(f)$ (or rather, the derivation it represents) has only been defined
for symmetric test tensors $f$, we can extend it to arbitrary smearings by demanding
that it vanish on antisymmetric $f$. Then \eqref{eq:rce_diff_cov}  has an interesting consequence 
\cite{BrFrVe03}. 
Let $X^a$ be a smooth compactly supported vector field, and define
$\psi_s = \exp (sX)$ be the $1$-parameter family of diffeomorphisms
it generates, which act trivially outside a fixed compact set (for $|s|<s_*$, say). 
This induces a $1$-parameter family of metric perturbations $h(s) = \psi(s)_*g-g$, 
with
\[
\dot{h}(0)_{ab} = (\pounds_X g)_{ab} = \nabla_a X_b + \nabla_b X_a. 
\]
By \eqref{eq:rce_diff_cov}, $\rce_\Mb[h(s)] = \rce_\Mb[0]=id_{\Af(\Mb)}$ for all $s$, so 
\[
[\Ts_\Mb(\pounds_X g), A] =-2i \left.\frac{d}{ds}\rce_\Mb[h(s)] A\right|_{s=0} = 0,
\]
which asserts that $\Ts_\Mb$ is conserved, when regarded as a symmetric derivation-valued tensor field.\footnote{It is natural to write $\Ts_\Mb(\pounds_X g)=-2(\nabla\cdot \Ts_\Mb)(X)$,
regarding the divergence in a weak sense.}  If the derivation is inner, i.e., given by the commutator with  
elements $\Ts_\Mb(f)\in\Af(\Mb)$, then we deduce that $\nabla\cdot\Ts_\Mb$ belongs to the centre of $\Af(\Mb)$. 

To investigate the interpretation further, we return to the example of the scalar field. 
Starting from \eqref{eq:rce_scalar}, it is clear that 
\begin{equation}\label{eq:rec_diff1}
\left.\frac{d}{ds}\rce_{\Mb}[sh]\Phi_\Mb(f)\right|_{s=0} 
= \Phi_\Mb( L_\Mb[h] E_\Mb f)
\end{equation}
in any topology for which the derivative exists, 
where  
\[
L_\Mb[h]\phi =-
\left.\frac{d}{ds}  P_{\Mb[sh]}\right|_{s=0}\phi
=\nabla_a \left(h^{ab}\nabla_b\phi\right) - \frac{1}{2}\left(\nabla^a h^b_{\phantom{b}b}\right)\nabla_a\phi.
\]
Equation~\eqref{eq:rec_diff1} is derived  for $f$ supported in $\Mc^+$,
but is actually valid for all $f\in\Df(\Mb)$.\footnote{Decompose
$f=f_0 + P_\Mb f_1$, where $f_0$ is supported in the image of $\Mc^+$, and $f_1\in\Df(\Mb)$. 
As $\Phi_\Mb(f)=\Phi_\Mb(f_0)$, we may apply~\eqref{eq:rec_diff1} to $f_0$ and
then use the fact that $E_\Mb f_0=E_\Mb f$.} The link with the stress-energy tensor is obtained as follows. Working in `unsmeared notation', 
\[
[\Phi_\Mb(x)\Phi_\Mb(x'),\Phi_\Mb(f)] = i(E_\Mb f)(x)\Phi_\Mb(x') +  i(E_\Mb f)(x')\Phi_\Mb(x)
\]
and since renormalisation of the Wick square involves subtracting
a $C$-number from the point-split square and then taking the points back together, 
\[
[\Phi^2_{\Mb,\text{ren}}(x),\Phi_\Mb(f)] = 2i (E_\Mb f)(x)\Phi_\Mb(x)
\]
or, smearing against $k\in\CoinX{\Mb}$,
\[
[\Phi^2_{\Mb,\text{ren}}(k),\Phi_\Mb(f)] = 2i\Phi_\Mb(k E_\Mb f).
\]
Similarly, 
\[
[((\nabla_a\Phi)(\nabla_b\Phi))_{\Mb,\text{ren}}(x),\Phi_\Mb(f)] =
2(i\nabla_{(a}\Phi_\Mb(x) )(\nabla_{b)}E_\Mb f|_x)
\]
or, smearing against a symmetric tensor $h$,
\[
[((\nabla\Phi)(\nabla\Phi))_{\Mb,\text{ren}}(h),\Phi_\Mb(f)] =-
2i\Phi_\Mb(\nabla_a h^{ab}\nabla_{b}E_\Mb f).
\]
Applying these formulae to the stress-energy tensor,  
\[
T_{ab} = (\nabla_a \phi )(\nabla_b \phi) - \frac{1}{2}g_{ab} g^{cd}
(\nabla_c \phi )(\nabla_d \phi)+\frac{1}{2}m^2 g_{ab}\phi^2,
\] 
quantized by point-splitting, a short calculation using Leibniz' rule gives
\[
[T_{\Mb,\text{ren}}(h),\Phi_\Mb(f)]  = 
-2i\Phi_\Mb(L_\Mb[h] E_\Mb f)  +i\Phi_\Mb(h^c_{\phantom{c}c}(\Box+m^2)E_\Mb f) ,
\]
of which the last term vanishes. 
Comparison with \eqref{eq:rec_diff1} yields the important formula
\[
\left.\frac{d}{ds}\rce_{\Mb}[sh]\Phi_\Mb(f)\right|_{s=0} 
=-\frac{1}{2i} [T_{\Mb,\text{ren}}(h),\Phi_\Mb(f)],
\]
where $T_\Mb$ is the renormalised stress-energy tensor
of the theory.\footnote{This result differs by a sign from that in \cite{BrFrVe03} (and repeated e.g., in~\cite{FewVer:dynloc_theory,FewVer:dynloc2}).
The source of the difference arises on p. 61 of \cite{BrFrVe03}, where the action of an {\em `advanced'} Green function is taken to have support in the {\em causal future} of the source. The sign error does not affect
the results of~\cite{FewVer:dynloc_theory,FewVer:dynloc2}.}

Similar computations for other models \cite{Sanders_dirac:2010,Benini_Masters,Ferguson:2013,FewSchenkel:2014,FewLang:2014a} 
support the view that, in general, 
the functional derivative of the relative Cauchy evolution may be interpreted as a stress-energy
tensor. One is therefore led to regard the relative Cauchy evolution as a proxy for the action.  
This is quite remarkable, because we have not assumed that locally covariant theories are specified in terms of classical actions. It is a striking illustration of the power of the general framework.

\section{States and state spaces}\label{sect:states}

\subsection{States and representations}

While much of the structure of quantum field theory lies in the algebraic relations, particularly the concepts of locality and causality and their relation to covariance,
an important role is played by the states (or, synonymously,
expectation value functionals). In particular, states permit the comparison of the mathematical framework with
experiment, and the interpretation of the formal framework in more concrete, physical terms, leading to an understanding of the meaning of certain observables, the charge structure, field content and the degrees of freedom of a quantum field theory. Moreover, certain aspects which distinguish quantized fields from classical fields, like entanglement, are best understood at the level of states.

The mathematical definition asserts only that a state on a $*$-algebra $\Ac$ is 
a positive, normalized and (suitably) continuous functional $\omega : \Ac \to \mathbb{C}$, interpreted as yielding the expectation values $\omega(A)$ of any observable $A$ (an element $A \in \Ac$ with
$A^* = A$). Here,
positivity means $\omega(A^*A) \ge 0$ for all $A \in \Ac$, while normalization requires $\omega({\bf 1}) = 1$
for the unit element of $\Ac$ (which, by default, is assumed to exist if $\Ac$ is the
algebra of observables of a physical system).
Continuity may be an involved issue. If $\Ac$ is a $C^*$-algebra, however, norm continuity
is already implied by positivity, and furthermore, the existence of a large set of states
is warranted from the outset. This is one of the reasons why, from a mainly mathematical perspective, 
it is very convenient to treat observable algebras as $C^*$-algebras.

However, it is known by several examples that this very general mathematical description of a ``state'' allows for many which can hardly be interpreted as 
physically realistic configurations of a quantum field because their behaviour on 
certain observables of interest (particularly those measuring local quantities of momentum and energy) is too singular
\cite{Wald_qft}.
Thus, suitable regularity properties must be imposed to select physically realistic states to which an interpretation
of the observables can be tied and on which an identification of the field content can be built. In quantum field theory
on Minkowski spacetime, Poincar\'e covariance is instrumental in specifying the vacuum state, starting from which one can proceed with a characterization (or at least selection) of physical states,
but, in curved spacetimes, that is not at hand, and one must devise other criteria. We will address that issue in a while, but first, it is necessary
to introduce some terminology.

We begin with the concept of a Hilbert space representation of a $*$-algebra $\Ac$, denoted by 
 $(\Hc,\pi,\Dc)$ and consisting of a Hilbert space $\Hc$ together with a dense subspace $\Dc$, and a representation
$\pi$ of $\Ac$ by closable operators defined on $\Dc$. Furthermore, it is required that 
$\pi(A) \Dc\subset \Dc$ $(A \in \Ac)$ and that one has
\[
\pi(AB) = \pi(A)\pi(B),\quad \pi(\alpha A + \beta B) = \alpha \pi(A) + \beta \pi(B)
\quad\text{and}\quad \pi(A)^* = \pi(A^*)
\]
holding on $\Dc$ for all $\alpha, \beta \in
\mathbb{C},\ A,B \in \Ac$. 
A further requirement is the continuity of $A \mapsto \pi(A)$ at least weakly with respect to $\Dc$ in the topology of $\Ac$. 
If $\Ac$ is a $C^*$-algebra, the $\pi(A)$ are necessarily bounded operators for $A \in \Ac$ and there is no restriction to assume $\Dc = \Hc$; hence we will simply write
$(\Hc,\pi)$ for a Hilbert space representation of a $C^*$-algebra.

The \emph{folium} of a representation $(\Hc,\pi,\Dc)$ of $\Ac$, denoted by $\Fol(\pi)$, consists of all states
on $\Ac$ which can be written as finite convex sums of states of the form $\omega_\xi(A) = (\xi,\pi(A) \xi)$ where $\xi$ is
any unit vector in $\Dc$. If $\Ac$ is a $C^*$-algebra, 
we can define the $W^*$-\emph{folium} $\Fol^W(\pi)$
of $(\Hc,\pi)$ as the weak closure of $\Fol(\pi)$. The set of states $\Fol^W(\pi)$ is also called the set of  \emph{normal states} on $\Ac$ with respect
to the representation $(\Hc,\pi)$.  

Any state $\omega$ on a $*$-algebra $\Ac$, with suitable continuity properties, determines a unique (up to unitary equivalence) Hilbert space representation of $\Ac$, the \emph{GNS representation}, denoted $(\Hc_\omega,\pi_\omega,\Dc_\omega,\Omega_\omega)$. 
Here, $(\Hc_\omega,\pi_\omega,\Dc_\omega)$ is a Hilbert space representation of $\Ac$ and 
$\Omega_\omega$ is a unit vector in $\Dc_\omega$ such that 
$\omega(A) = (\Omega_\omega,\pi_\omega(A) \Omega_\omega)$ for all
$A \in \Ac$ and $\Dc_\omega=\pi_\omega(A)\Omega_\omega$, implying that $\Omega_\omega$ is a cyclic
vector for the representation. If $\Ac$ is a $C^*$-algebra,
one takes $\Dc_\omega = \Hc_\omega$ as before, denoting the GNS representation
more simply by $(\Hc_\omega,\pi_\omega,\Omega_\omega)$. Furthermore one can assign a folium $\Fol^W(\omega):=
\Fol^W(\pi_\omega)$
to any state $\omega$, i.e.\ the folium of its GNS representation. Correspondingly, one 
calls any state in $\Fol^W(\pi_\omega)$ a \emph{normal state} with respect to $\omega$, or simply a state normal to
$\omega$.

Again for the case of a state $\omega$ on a $C^*$-algebra $\Ac$, we define the 
\emph{induced von Neumann algebra}, $\mathcal{N}_\omega$, in the GNS-representation $(\Hc_\omega,\pi_\omega,\Omega_\omega)$ by
$$ \mathcal{N}_\omega = \pi_\omega(\Ac)'',$$
where the double prime denotes the bi-commutant and coincides with the weak closure of 
$\pi_\omega(\Ac)$ by von Neumann's theorem\footnote{By default, all $*$-algebras here
are unital, i.e.\ they have a unit element for the algebra product.}. This definition is mostly useful when applied to the local induced von Neumann algebras considered below.

Many quantum field theories are specified in terms of their fields and then, states usually arise
from (and are defined by) their $n$-point functions. In some examples, e.g.\ if the algebras 
$\mathscr{A}(\Mb)$ of observables assigned to a spacetime $\Mb$ are $C^*$-algebras, it may occur that the quantum field operators $\phi(f)$
are not contained in $\mathscr{A}(\Mb)$, but arise as objects affiliated to
the induced von Neumann algebra $\mathcal{N}_\omega(\Mb) = \pi_\omega(\mathscr{A}(\Mb))''$ in the GNS representations
of physical states $\omega$.  Affiliation means that bounded functions of the operators $U$ and $|\phi(f)|$
occurring in the polar decomposition $\phi(f)=U|\phi(f)|$ belong to the induced von Neumann algebra. As we wish to avoid a discussion of
precise domain properties for quantum field operators, we will introduce $n$-point functions, or Wightman functions, in a manner that is quite close to the original idea.
So let us suppose that $\Mb$ is an object of $\Loc$ and that $\mathscr{A}(\Mb)$ is a $*$-algebra of observables
assigned to spacetime $\Mb$ (it does not matter here if $\mathscr{A}(\Mb)$ is part of a theory functor $\mathscr{A}$,
or is a $C^*$-algebra). Then, we say that 
a state $\omega$ on $\mathscr{A}(\Mb)$ \emph{possesses affiliated $n$-point functions} 
if there are (1) a $C^\infty$ complex vector bundle $\mathcal{V}_{\Mb}$
having $\Mb$ as base manifold 
together with a fibre-wise complex conjugation $\Gamma$ on $\mathcal{V}_{\Mb}$
(2) a weakly dense (in the weak topology induced by the GNS-representation of $\omega$)
$*$-subalgebra $\Af_0(\Mb)$
and (3) a
sequence $w^\omega_n$ $(n \in \mathbb{N})$ of distributions of positive type on
$C_0^\infty(\mathcal{V}_\Mb^n)$ (the compactly supported $C^\infty$ sections
in $\mathcal{V}_{\Mb}^n$),
such that for any $A \in \Af_0(\Mb)$ there is a sequence
$F^{(n)}_A \in C^\infty_0(\mathcal{V}_{\Mb}^n)$ obeying 
$$ \sum_{| \boldsymbol{n} | \le N} w^{\omega}_{|\boldsymbol{n}|}(F^{(n_1)}_{A_1} \otimes \cdots \otimes F^{(n_k)}_{A_k})
 \underset{N \to \infty}{\longrightarrow} \omega(A_1 \cdots A_k)
 $$
for any finite collection of elements $A_1,\ldots, A_k \in \mathscr{A}_0(\Mb)$.
Here, $\boldsymbol{n} = (n_1,\ldots,n_k)$ is a multi-index and
$|\boldsymbol{n}| = n_1 + \ldots + n_k$; the definition $F^{(0)} \in \mathbb{C}$ and
$w^{\omega}_0(F^{(0)}) = F^{(0)}$ is adopted, and the positive type condition means that
$$ \sum_{m,n = 0}^N w_{n + m}^\omega(F^{(n)} \otimes \Gamma F^{(m)}) \ge 0 $$
holds for any finite selection of $F^{(0)}, \ldots , F^{(N)}$, understanding that $\Gamma$ acts
on each of the $m$ fibre factors, with $\Gamma F^{(0)} = \overline{F^{(0)}}$ (complex conjugation).

These definitions facilitate the introduction of conditions on the states $\omega$ via conditions on the wavefront sets \cite{Hoermander1}
$\WF(w^\omega_n)$ of their affiliated $n$-point functions $w^\omega_n$.  A generalized concept of wavefront set for states $\omega$ may be given without using affiliated $n$-point functions \cite{Verch-WF-AQFT}, but will not be pursued here.

\subsection{States in locally covariant theories}

Let $\Af$ be a locally covariant theory, i.e.\ a functor $\Af:\Loc\to\Alg$. By an $\Af$-\emph{state}, we mean a 
family $(\omega_{\Mb})_{\Mb \in \Loc}$ indexed by the objects in $\Loc$, where each $\omega_{\Mb}$ is a state on $\Af(\Mb)$. 
This is a very general mathematical definition and does not involve, as it stands, 
any regularity criteria selecting physically realistic states. Furthermore, the
definition does not relate the states $\omega_{\Mb_1}$ and $\omega_{\Mb_2}$ on spacetimes $\Mb_1$ and $\Mb_2$, even if 
(parts of) $\Mb_1$ can be embedded into (parts of) $\Mb_2$ by morphisms in $\Loc$.

Given a globally hyperbolic spacetime $\Mb$ admitting a non-trivial group ${\rm Aut}(\Mb)$
of spacetime isometries preserving orientation and time-orientation, it has been remarked before 
that $\Af$ induces a group representation of ${\rm Aut}(\Mb)$ by elements in
${\rm Aut}(\Af(\Mb))$. A state $\omega_{\Mb}$ on $\Af(\Mb)$ is called ${\rm Aut}(\Mb)$-\emph{invariant}  if 
\begin{align}
 \omega_{\Mb} \circ \Af(\psi) = \omega_{\Mb}
\end{align}
for all $\psi \in {\rm Aut}(\Mb)$. To give a concrete example, suppose that $\Mb$ is Minkowski spacetime,
then ${\rm Aut}(\Mb)$ is the proper orthochronous Poincar\'e group, and invariance of a state $\omega_{\Mb}$ is one of the important properties singling out a vacuum state. More generally, if 
a spacetime $\Mb$ is stationary, then there is a subgroup in ${\rm Aut}(\Af(\Mb))$ of time translations,
and invariance of a state $\omega_{\Mb}$ under this subgroup --- i.e.\ time-translation invariance ---
is one of the required properties of a ground state or KMS state. As states of this type are commonly
regarded as physical states of a quantum field, the invariance property of states is one of the features
of states to look for. This prompts the question of whether the concept of invariant state can be generalized to the locally covariant setting. We refer to such a generalization as a \emph{natural state}, and the definition is this: A state $(\omega_{\Mb})_{\Mb \in \Loc}$ for a locally covariant theory 
$\Af$ is called \emph{natural} if
$\omega_{\Nb} \circ \Af(\psi) = \omega_{\Mb}$ whenever $\Mb \overset{\psi}{\longrightarrow} \Nb$
is a morphism in $\Loc$. 

While this seems natural in the sense of reflecting the natural duality between algebras and states, it
turns out to be asking too much: Under additional very mild regularity properties, which are expected to be general features of large sets of states in quantum field theories, and have been proved to hold in many examples and
to be consequences of, e.g., the general Wightman framework on Minkowski
spacetime, there are no natural states for locally covariant theories (Theorem~\ref{thm:nogo}). 

The moral is that one should not expect physical states to be invariant under arbitrary
spacetime embeddings in locally covariant quantum field theories. However, there may be sets of states assigned to
spacetimes which behave invariantly under spacetime embeddings, and in fact, it is desirable to formulate selection
criteria for sets of physical states in a way that such an invariance property is fulfilled.

\subsection{State spaces}

Suppose that $\Ac$ is a $*$-algebra. Then we define a \emph{state space} for $\Ac$ to be a set $\boldsymbol{S}$ of states
on $\Ac$ having the property that $\boldsymbol{S}$ is invariant under operations in $\Ac$ and under forming finite convex
sums, i.e.,  given any $\omega$ in $\boldsymbol{S}$, then the states
$$ \omega'(A) = \sum_{i = 1}^N \lambda_i \frac{\omega(B_i^* A B_i)}{\omega(B_i^*B_i)} $$
are also contained in $\boldsymbol{S}$, for any choice of finitely many $B_i \in \Ac$ (with $\omega(B^*_iB_i) > 0$)
and $\lambda_i > 0$ with $\sum_i \lambda_i = 1$
$(i = 1,\ldots,N\,, \ N \in \mathbb{N})$. 

Note that the folium of any Hilbert space representation of $\Ac$ is closed in the above sense. 
Thus one can introduce a category $\Stsp$ of
state spaces which is ``dual'' to the category of algebras: The objects in $\Stsp$ are state spaces $\boldsymbol{S}$
of $*$-algebras $\Ac$; more precisely, they are pairs $(\boldsymbol{S},\Ac)$, where
$\Ac$ indicates the $*$-algebra for which $\boldsymbol{S}$ is a state space. Then a morphism $\alpha^*$ in $\Stsp$
is the dual of a suitable  morphism $\alpha$ in $\Alg$. In more detail, if $(\boldsymbol{S}_1,\Ac_1)$ and $(\boldsymbol{S}_2,\Ac_2)$
are objects in $\Stsp$, then any morphism $\alpha : \Ac_1 \to \Ac_2$ in $\Alg$
such that $\alpha^*\boldsymbol{S}_2\subset \boldsymbol{S}_1$  defines
a morphism $(\boldsymbol{S}_2,\Ac_2) \overset{\alpha^*}{\longrightarrow} (\boldsymbol{S}_1,\Ac_1)$ in $\Stsp$. Here, the dual action of $\alpha$ 
on states is given by $\alpha^*\omega_2 = \omega_2 \circ \alpha$ for $\omega_2\in \boldsymbol{S}_2$.  Thus (with some abuse of notation that is unlikely to give rise to ambiguities)
$\alpha^*(\boldsymbol{S}_2,\Ac_2) = (\alpha^*(\boldsymbol{S}_2),\Ac_1)$. 
The morphism composition rule is defined as the composition rule
of positive dual, convex maps between dual spaces of $*$-algebras. 

Now let $\Af$ be a locally covariant theory. Then we define a \emph{state space for} $\Af$ 
(henceforth, $\Af$-state space) to be a 
contravariant functor $\mathscr{S}$ between $\Loc$ and $\Stsp$, such that $\mathscr{S}(\Mb)$ is a state space for
$\Af(\Mb)$ for any object $\Mb$ of $\Loc$, and such that, if $\psi$ is a morphism in $\Loc$, then $\mathscr{S}(\psi)$ is induced by $\Af(\psi)^*$, the dual map of
$\Af(\psi)$.\footnote{Note that $\Sf$, or its opposite covariant functor $\Sf^{\text{op}}:\Loc\to\Stsp^{\text{op}}$,
contains all the information in $\Af$, and could be used by itself to
specify the theory in full -- this is done e.g., in~\cite{Fewster:gauge,Few_split:2015}.}
The state space is said to obey the timeslice condition if $\Sf(\Mb)=\Sf(\psi)(\Sf(\Nb))$ for every Cauchy morphism $\psi:\Mb\to\Nb$.

\subsection{Conditions on states}

The definition of ``state space'' just given is very general, and to
ensure that the states contained in the state space of a locally covariant theory
is formed by states which can be given a reasonable and consistent physical interpretation, 
it needs to be supplemented with further conditions.
The conditions are expressed as conditions
which the states of a state space fulfil individually, or in relation to each other. Therefore we shall list
some of them; they correspond to regularity properties one expects physical states to have for locally covariant
theories and are motivated either by examples featuring such properties (in curved or flat spacetime) or by structural
arguments in favour of such properties in general quantum field theory (in flat spacetime). We refer in particular
to \cite{Haag} for discussion. 

The $*$-algebra $\mathscr{A}(\Mb)$
may, but need not, derive from a functor $\mathscr{A}$ from $\Loc$ to $\Alg$. However, we assume that $\mathscr{A}(\Mb)$ is generated by a system
of $*$-subalgebras ($C^*$-subalgebras if $\mathscr{A}(\Mb)$ is a $C^*$-algebra) $\mathscr{A}(\Mb;O)$, $O \in \OO(\Mb)$ fulfilling
isotony. 
\\[4pt]
{\bf Reeh-Schlieder property}. Let $\omega$ be a state on $\Af(\Mf)$ and let $O \in \OO(\Mb)$. Then one says that
$\omega$ has the \emph{Reeh-Schlieder property} with respect to $O$ if, in 
GNS-representation $(\Hc_\omega,\pi_\omega,\Dc_\omega,\Omega_\omega)$ of any $\omega \in \boldsymbol{S}(\Mb)$, the
set of vectors
$$ \pi_\omega(\Af(\Mb;O))\Omega_\omega \quad \text{is dense in} \quad \Hc_\omega\,.$$
{\bf Split property}. 
This property is conveniently formulated under the 
assumption that the $\Af(\Mb;O)$ are $C^*$-algebras, although
a definition can also be given in more general cases.
Let $O_1$ and $O_2$ be two spacetime regions in $\OO(\Mb)$ so that $\overline{O_1} \subset O_2$.
A state $\omega$ on $\Af(\Mb)$ is said to fulfil the \emph{split property} for the pair $O_1$ and $O_2$ of regions if
there is some type I factor von Neumann subalgebra
$\mathcal{N}$ of ${\sf B}(\Hc_\omega)$ so that
$$ \mathcal{N}_\omega(O_1) \subset \mathcal{N} \subset \mathcal{N}_\omega(O_2) \,.$$
Here, $\mathcal{N}_{\omega}(\Mb;O) = \pi_{\omega}(\Af(\Mb;O))''$ are the local von Neumann algebras induced in the
GNS representations. At this point we recall that $\mathcal{N}$ is a factor if $\mathcal{N} \cap \mathcal{N}' = \mathbb{C} {\bf 1}$, i.e.\
if only multiples of the identity operator are contained in both $\mathcal{N}$ and its commutant. A von Neumann algebra $\mathcal{N}$ is of type I if
there is a von Neumann algebra isomorphism $\gamma : \mathcal{N} \to {\sf B}(\tilde{\mathcal{H}})$ where $\tilde{\mathcal{H}}$ is some (possibly inseparable) Hilbert space.
\\[4pt]
{\bf Intermediate factoriality}. A state $\omega$ on $\Af(\Mb)$ will be defined as having the property of \emph{intermediate factoriality} if for any 
$O \in \OO(\Mb)$ there are some $\tilde{O} \in \OO(\Mb)$ and some factor von Neumann subalgebra $\mathcal{N}$ of ${\sf B}(\Hc)$ such that
$$ \mathcal{N}_\omega(O) \subset \mathcal{N} \subset \mathcal{N}_\omega(\tilde{O}) \,.$$
We note that, while this is technically reminiscent of the split property, the condition here is different, and it has a different purpose --- as a consequence
of intermediate factoriality, the GNS representations $\pi_{\omega |_{\Af(\Mb;O)}}$ and $\pi_{\omega} |_{\Af(\Mb;O)}$ are quasiequivalent, i.e.\ the have
the same folia. (For a fuller discussion  and proofs, see \cite{BrFrVe03}.)
\\[4pt]
{\bf Primarity}. A state $\omega$  fulfils the condition of \emph{primarity} with respect to some region $O \in \OO(\Mb)$ if $\mathcal{N}_\omega(O)$ is
a factor. An immediate consequence is this: If $\omega$ fulfils primarity for a subset of regions $O \in \OO(\Mb)$ such that any relatively compact subset of
$\Mb$ is contained in some such $O$, then $\omega$ satisfies intermediate factoriality. 
\\[4pt]
{\bf Duality}. The condition of \emph{duality} of $\omega$ with respect to some region $O \in \OO(\Mb)$ requires that
$$ \mathcal{N}_\omega(O') = \mathcal{N}_\omega(O)' $$
where $O' = \Mb \backslash \overline{J_\Mb(O)}$  
 is the open causal complement of $O$ in $\Mb$ and
$\mathcal{N}_\omega(O')$ is the von Neumann algebra generated by all $\mathcal{N}_\omega(O_\times)$, for relatively compact $O_\times \subset O'$.
\\[4pt]
{\bf Local quasiequivalence}. This condition is best formulated under the assumption that the $\Af(\Mb;O)$ are
$C^*$-algebras. A set of states $\boldsymbol{S}_0(\Mb)$ is said to fulfil \emph{local quasiequivalence} if
for any pair of states $\omega_1,\omega_2 \in \boldsymbol{S}_0(\Mb)$ the equality
\begin{align} \label{eqn:locquasieq} 
\Fol(\pi_{\omega_1} |_{\Af(\Mb;O)}) = \Fol(\pi_{\omega_2} |_{\Af(\Mb;O)}) 
\end{align}
holds for all $O \in \OO(\Mb)$, i.e.\ if the GNS-representations of the states have the same folia when restricted to
local algebras $\Af(\Mb;O)$.

Note that  local quasiequivalence given in this form is equivalent to the condition 
$$ \Fol(\pi_{\omega_1 | \Af(\Mb;O)}) = \Fol(\pi_{\omega_2 |\Af(\Mb;O)})\,, $$
stating that the folia of the GNS-representations of states restricted to the local algebras coincide, once
the states fulfil intermediate factoriality or the Reeh-Schlieder property. (Again, we refer to \cite{BrFrVe03} for further
discussion.)
\\[4pt]
{\bf Triviality of local von Neumann algebras over points}. Once more, this condition assumes that the $\Af(\Mb;O)$
are $C^*$-algebras. We say that a state $\omega$ is \emph{point-trivial} if 
$$ \bigcap_{O \owns p} \mathcal{N}_\omega(\Mb;O) = \mathbb{C} {\bf 1} $$
for any $p \in \Mb$. This says that the induced local von Neumann algebras induced by $\omega$ contain
only multiples of the identity operator if their localization regions are shrunk to any point in spacetime.
\\[4pt] 
{\bf Scaling limits}.
For simplicity of notation, we will introduce this concept for  
$n$-point functions $w^\omega_n$  of scalar type, i.e.\ $\mathcal{V}_\Mb\cong\mathbb{C}$.
The generalization to higher-dimensional vector bundles is not difficult (see, e.g., \cite{SahlmannVerch:2000RMP}).
With this assumption,
let $\omega$ be a state on $\Af(\Mb)$ with affiliated $n$-point functions $w_n^\omega$, and let $x$ be a point in $\Mb$.
 With the help of the exponential map ${\rm exp}_x$ at $x$,
$T_x\Mb$ can be identified with Minkowski spacetime (of suitable dimension). 
If $f \in C_0^\infty(T_x\Mb)$, we define 
$$ f^{[\lambda]}({\rm exp}_x(x')) = f( \lambda^{-1}x')\,, \quad \lambda > 0 \,,$$
for $x'$ in a neighbourhood of the origin in $T_x\Mb$ which is contained in the domain of the exponential map. This way,
the functions $f^{[\lambda]}$ are defined and $C_0^\infty$ on an open neighbourhood of $x$ in $\Mb$.
Then one says that the state $\omega$ has a \emph{regular scaling limit} at $x$ if (i) the state is point-trivial (at $x$) and (ii)
if there is a monotonous function $\nu(\lambda) > 0$ of the scaling parameter $\lambda > 0$ such that the limits
$$ w_n^0(f_1 \otimes \cdots \otimes f_n) = \lim_{\lambda \to 0}\, \nu(\lambda)^n w_n^\omega(f^{[\lambda]}_1 \otimes \cdots \otimes
 f^{[\lambda]}_n) $$
exist for all $f_j \in C_0^\infty(T_x\Mb)$ and if the $n$-point distributions thus obtained satisfy the Streater-Wightman
axioms (in a nontrivial manner).
\\[4pt]
{\bf Wavefront set spectrum condition, or microlocal spectrum condition ($\boldsymbol{\mu}$SC)}. If a state $\omega$ has affiliated 
$n$-point functions $w_n^\omega$, a \emph{microlocal spectrum condition},
abbreviated $\mu$SC, is a condition on the wavefront sets $\WF(w_n^\omega)$. We shall not pause here to give the definition
of the wavefront sets of distributions defined on $C_0^\infty$ sections in vector bundles as this is well-explained elsewhere
\cite{SahlmannVerch:2000RMP}, nor shall we record the precise form of the $\mu$SC which has been given in 
\cite{BruFreKoe,BrFr2000}. The $\mu$SC can be seen as a microlocal remnant of the spectrum condition imposed on $n$-point functions in
the Streater-Wightman approach to quantum field theory on Minkowski spacetime \cite{StreaterWightman}. One of the main features
of the $\mu$SC is that it is manifestly covariant (provided the vector bundle $\mathcal{V}_{\Mb}$ 
connects appropriately to the functorial structure of $\Loc$) and this is at the heart of the considerable advances which quantum field
theory in curved spacetimes has seen since the introduction of the $\mu$SC. A certain asymmetry under exchange of the 
order of entries in $\WF(w_n^\omega)$ is characteristic of the $\mu$SC. At the level of the 2-point function $w_2^\omega$ of
a state, the $\mu$SC requires  
$$\WF(w_2^\omega) \subset \{ (x,\xi;x',\xi') \in T^*M \times T^*M : (x,\xi) \sim (x',-\xi')\,, \xi \vartriangleright 0 \} $$
where $(x,\xi) \sim (x',-\xi')$ means that the manifold base-points $x$ and $x'$ are connected by a lightlike geodesic and
that $\xi$ and $-\xi'$ are co-parallel to that geodesic, and $-\xi'$ is the parallel transport of $\xi$ along the connecting
geodesic. The relation $\xi \vartriangleright 0$ means that $\xi$ is future-pointing with respect to the time-orientation on
$\Mb$.\footnote{Our convention on Fourier transforms of compactly supported distributions is (in Minkowski space) 
$\hat{u}(k)=u(e_k)$, where $e_k(x)=e^{ik_\mu x^\mu}$; 
this is extended to manifolds using coordinate charts.}
\\[4pt]
The conditions listed above are fulfilled in several models of quantum fields on curved spacetimes which fulfil
linear hyperbolic field equations, and are quantized imposing canonical commutation relations (CCRs) in the case
of integer spin fields, or canonical anti-commutation relations (CARs) in the case of half-integer spin fields,
when choosing as set of states $\boldsymbol{S}_0(\Mb)$ the set of quasifree Hadamard states. Hadamard states are
specified by a particular form of the two-point function
 \cite{Wald:1977-backre,KayWald-PhysRep}.  

The most complete investigation in this respect has been carried out for the minimally coupled Klein-Gordon
field. Even though the conditions are partially inspired by the behaviour of linear quantum field models,
many of them are viewed as being valid also for interacting quantum fields and required for a consistent
interpretation of the theory. We collect results and references below. 
 
As just mentioned, the Hadamard condition for linear quantum fields on curved spacetimes, which requires that
the two-point function $w^\omega_2$ of a state $\omega$ takes the Hadamard form where the singular part of
$w^\omega_2$ is determined by the spacetime metric and the field equation 
implies the microlocal spectrum condition. In fact, as was first shown in a seminal
paper by Radzikowski for the quantized minimally coupled Klein-Gordon field, for quasifree states the
Hadamard condition and the microlocal spectrum condition $\mu$SC are equivalent \cite{Radzikowski_ulocal1996}.
This was also shown to hold for other models of linear quantized fields on curved spacetimes \cite{SahlmannVerch:2000RMP}. 

When settling for some choice of sets of states $\boldsymbol{S}_0(\Mb)$ for any $\Mb$ in $\Loc$ for
a given locally covariant theory $\Af$, where $\boldsymbol{S}_0(\Mb)$ satisfies some, or even all, of
the above stated conditions, one can obtain an $\Af$-state space $\mathscr{S}$ in the following manner
\cite{BrFrVe03}:
First, one must check if the $\boldsymbol{S}_0(\Mb)$, $\Mb \in \Loc$, transform contravariantly under the dualized
morphisms of $\Af$, which means $\Af(\psi)^*\boldsymbol{S}_0(\Nb) \subset \boldsymbol{S}_0(\Mb)$ whenever
$\psi : \Mb \to \Nb$ is a morphism in $\Loc$, with equality holding in case that $\psi(\Mb) = \Nb$. 
Next, one ought to check if the sets of states $\boldsymbol{S}_0(\Mb)$ fulfil the condition of
local quasiequivalence for all $\Mb \in \Loc$, as well as intermediate factoriality
--- to facilitate the discussion, we will assume from now on that
the $\Af(\Mb)$ are $C^*$ algebras. If that is the case, one can augment the sets of states as
\begin{align} \label{eqn:def-of-statespace}
 \boldsymbol{S}(\Mb) = \{ \omega \ \text{is state on} \ \Af(\Mb) : \omega|_{\Af(\Mb;O)} \in
 Fol^W(\pi_{\omega_0} |_{\Af(\Mb;O)})\,\} 
\end{align}
which is to hold for all $O \in \OO(\Mb)$ and any $\omega_0 \in \boldsymbol{S}_0(\Mb)$. Note that,
in view of the assumed local quasiequivalence and intermediate factoriality of the $\boldsymbol{S}_0(\Mb)$, the
definition of $\boldsymbol{S}(\Mb)$ is independent of the choice of $\omega_0 \in \boldsymbol{S}_0(\Mb)$ in
\eqref{eqn:def-of-statespace}. 

Then, setting $\mathscr{S}(\Mb) = \boldsymbol{S}(\Mb)$ for objects $\Mb$ of $\Loc$ and
$\mathscr{S}(\psi) = \Af(\psi)^*$ for morphisms $\psi$ of $\Loc$ yields a state space for $\Af$ which inherits several
of the properties featuring in the $\boldsymbol{S}_0(\Mb)$. More precisely, one finds the following
statement, which, as mentioned, assumes that the $\boldsymbol{S}_0(\Mb)$ transform contravariantly under the
dualized morphisms of $\Af$.

\begin{theorem} \label{thm:state-space}
Suppose that for any object $\Mb$ of $\Loc$, the set of states $\boldsymbol{S}_0(\Mb)$ satisfies local quasiequivalence and 
intermediate factoriality. Then $\mathscr{S}$ is an $\Af$-state space, thus any $\mathscr{S}(\Mb)$ is closed under
operations of $\Af(\Mb)$ and under forming convex sums, and consists, locally, of a single folium (so locally, i.e.\
in restriction to $\Af(\Mb;O)$ for any $O \in \OO(\Mb)$,
all states in $\mathscr{S}(\Mb)$ are normal to any/all states in $\boldsymbol{S}_0(\Mb)$). Moreover, if the states
in $\boldsymbol{S}_0(\Mb)$ are also point-trivial, the same holds for all the 
states in $\mathscr{S}(\Mb)$, for any object $\Mb$ of $\Loc$.
\end{theorem}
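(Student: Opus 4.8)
The plan is to verify in turn: (a) that the set $\boldsymbol{S}(\Mb)$ of~\eqref{eqn:def-of-statespace} does not depend on the choice of $\omega_0\in\boldsymbol{S}_0(\Mb)$ and is a state space for $\Af(\Mb)$; (b) that $\mathscr{S}$, with $\mathscr{S}(\psi)=\Af(\psi)^*$, is a contravariant functor $\Loc\to\Stsp$; (c) the local single-folium statement; and (d) the propagation of point-triviality. The recurring device is the consequence of intermediate factoriality recalled just above the theorem: for a state $\sigma$ with that property and any $O\in\OO(\Mb)$, the GNS representation $\pi_{\sigma|_{\Af(\Mb;O)}}$ of the restricted state and the restriction $\pi_\sigma|_{\Af(\Mb;O)}$ of the GNS representation of $\sigma$ are quasiequivalent, so that $\Fol^W(\pi_\sigma|_{\Af(\Mb;O)})$ is exactly the folium of the single state $\sigma|_{\Af(\Mb;O)}$. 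Applying this to the states of $\boldsymbol{S}_0(\Mb)$ and using local quasiequivalence --- which, after passing to weak-$*$ closures, gives $\Fol^W(\pi_{\omega_1}|_{\Af(\Mb;O)})=\Fol^W(\pi_{\omega_2}|_{\Af(\Mb;O)})$ for all $\omega_1,\omega_2\in\boldsymbol{S}_0(\Mb)$ and all $O\in\OO(\Mb)$ --- shows at once that $\boldsymbol{S}(\Mb)$ is well defined, that it is precisely the set of states $\omega$ on $\Af(\Mb)$ with $\omega|_{\Af(\Mb;O)}$ normal to $\omega_0|_{\Af(\Mb;O)}$ for every $O\in\OO(\Mb)$ and every $\omega_0\in\boldsymbol{S}_0(\Mb)$, and (since trivially $\boldsymbol{S}_0(\Mb)\subset\boldsymbol{S}(\Mb)$) proves (c).

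For (a), closure under convex combinations is immediate because each $\Fol^W(\pi_{\omega_0}|_{\Af(\Mb;O)})$ is convex. Closure under the operations $\omega\mapsto\omega_B$, $\omega_B(A)=\omega(B^*AB)/\omega(B^*B)$, is the substantive point. First I would take $B\in\Af(\Mb;O'')$ for some $O''\in\OO(\Mb)$: given $O\in\OO(\Mb)$, the region $\hat O=D_\Mb(O\cup O'')$ again lies in $\OO(\Mb)$ and $B^*AB\in\Af(\Mb;\hat O)$ for every $A\in\Af(\Mb;O)$ by isotony; since $\omega\in\boldsymbol{S}(\Mb)$, the state $\omega|_{\Af(\Mb;\hat O)}$ is implemented by a density matrix $\rho$ on $\Hc_{\omega_0}$, so $\omega_B(A)=\Tr(\rho'\,\pi_{\omega_0}(A))$ with $\rho'$ proportional to $\pi_{\omega_0}(B)\rho\,\pi_{\omega_0}(B)^*$, whence $\omega_B|_{\Af(\Mb;O)}\in\Fol^W(\pi_{\omega_0}|_{\Af(\Mb;O)})$. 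For general $B\in\Af(\Mb)$ I would approximate $B$ in norm by elements $B_k$ of such local subalgebras (using that the $\Af(\Mb;O')$ over relatively compact $O'$ generate $\Af(\Mb)$); then $\omega_{B_k}\to\omega_B$ in norm, and since $\Fol^W(\pi_{\omega_0}|_{\Af(\Mb;O)})$ is norm-closed --- being the restriction to $\Af(\Mb;O)$ of the predual of $\Nc_{\omega_0}(\Mb;O)$, the restriction map being isometric on normal functionals by Kaplansky density --- the limit $\omega_B|_{\Af(\Mb;O)}$ lies in it too. As $O$ was arbitrary, $\omega_B\in\boldsymbol{S}(\Mb)$, and with convexity this establishes the state-space axioms. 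The same approximation, applied to the vector states of $\pi_\omega$ and using cyclicity of $\Omega_\omega$, shows moreover that $\pi_\omega|_{\Af(\Mb;O)}$ is quasi-contained in $\pi_{\omega_0}|_{\Af(\Mb;O)}$: there is a surjective normal $*$-homomorphism $\sigma_O\colon\Nc_{\omega_0}(\Mb;O)\to\Nc_\omega(\Mb;O)$ intertwining the two representations on $\Af(\Mb;O)$, and the $\sigma_O$ are compatible under inclusions $O\subset O'$. This is what will be needed for (d).

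For (b), given a morphism $\psi\colon\Mb\to\Nb$ in $\Loc$ and $O\in\OO(\Mb)$, one has $\psi(O)\in\OO(\Nb)$ and, by~\eqref{eq:kin_covariance} together with the fact that $\Af(\hat\psi_O)$ is an isomorphism, $\Af(\psi)$ restricts to an isomorphism $\Af(\Mb;O)\to\Af(\Nb;\psi(O))$. Fixing $\omega_0^\Nb\in\boldsymbol{S}_0(\Nb)$ and setting $\omega_0^\Mb=\omega_0^\Nb\circ\Af(\psi)\in\boldsymbol{S}_0(\Mb)$ (using the hypothesis that $\boldsymbol{S}_0$ transforms contravariantly), for $\omega\in\boldsymbol{S}(\Nb)$ one has $(\omega\circ\Af(\psi))|_{\Af(\Mb;O)}=(\omega|_{\Af(\Nb;\psi(O))})\circ\Af(\psi)$; by (a) and the characterisation in (c), $\omega|_{\Af(\Nb;\psi(O))}$ is normal to $\omega_0^\Nb|_{\Af(\Nb;\psi(O))}$, and transporting folia under the isomorphism $\Af(\psi)$ (noting $\omega_0^\Nb|_{\Af(\Nb;\psi(O))}\circ\Af(\psi)=\omega_0^\Mb|_{\Af(\Mb;O)}$) shows that $(\omega\circ\Af(\psi))|_{\Af(\Mb;O)}$ is normal to $\omega_0^\Mb|_{\Af(\Mb;O)}$. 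As $O$ and $\omega_0^\Mb$ are arbitrary, $\omega\circ\Af(\psi)\in\boldsymbol{S}(\Mb)$, i.e.\ $\Af(\psi)^*\boldsymbol{S}(\Nb)\subset\boldsymbol{S}(\Mb)$; the functor laws for $\mathscr{S}$ are then inherited from $\Af$ being a functor and from the contravariance of dualisation, so $\mathscr{S}$ is an $\Af$-state space.

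For (d), assume the $\boldsymbol{S}_0(\Mb)$ are point-trivial and let $\omega\in\boldsymbol{S}(\Mb)$, $\omega_0\in\boldsymbol{S}_0(\Mb)$, $p\in\Mb$. Using the compatible surjective normal $*$-homomorphisms $\sigma_O\colon\Nc_{\omega_0}(\Mb;O)\to\Nc_\omega(\Mb;O)$ from (a), one reduces $\bigcap_{O\ni p}\Nc_\omega(\Mb;O)=\CC\II$ to $\bigcap_{O\ni p}\Nc_{\omega_0}(\Mb;O)=\CC\II$: a self-adjoint $T$ in the left-hand side has, for each $O\ni p$, a spectral projection $e$ with $0\ne e\ne\II$ lying in every $\Nc_\omega(\Mb;O)$; lifting $e$ through the $\sigma_O$ to projections subordinate to the relevant central projections and exploiting their compatibility as $O$ shrinks to $p$ contradicts point-triviality of $\omega_0$ unless $T\in\CC\II$. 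I expect the main obstacle to be step (a) --- closure under operations and the attendant quasi-containment --- because conjugation by an arbitrary $B\in\Af(\Mb)$ does not respect the local algebras, forcing one to route the argument through the generating local subalgebras and to control a norm limit (needing a mild generation hypothesis, norm-continuity of the GNS construction, and norm-closedness of local folia), and because one must keep carefully apart the two a priori distinct notions of local normality (`restriction of the GNS representation' versus `GNS representation of the restriction'), reconciling them only via intermediate factoriality; the point-triviality reduction in (d) is a second delicate step. Both are carried out in detail in~\cite{BrFrVe03}.
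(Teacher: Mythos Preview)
The paper does not actually prove this theorem: immediately after the statement it defers entirely to~\cite{BrFrVe03} (``The proof of this statement (with slight variations) [can] be found in \cite{BrFrVe03}''). Your proposal is therefore not competing with a proof in the paper but rather supplying an outline of the argument from that reference, and you explicitly close by deferring the two delicate steps to the same source. In that sense the approaches coincide.

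Your outline is sound and hits the right ideas: the use of intermediate factoriality to identify $\Fol^W(\pi_{\omega_0}|_{\Af(\Mb;O)})$ with $\Fol^W(\omega_0|_{\Af(\Mb;O)})$, the local-then-approximate strategy for closure under operations, the transport of folia along the isomorphism $\Af(\psi)|_{\Af(\Mb;O)}$ for functoriality, and the quasi-containment $\pi_\omega|_{\Af(\Mb;O)}\preceq\pi_{\omega_0}|_{\Af(\Mb;O)}$ as the bridge to point-triviality. Two places are looser than the rest. First, your region $\hat O=D_\Mb(O\cup O'')$ is not obviously in $\OO(\Mb)$ for arbitrary $O$; it is cleaner simply to observe that any $O,O''\in\OO(\Mb)$ are contained in some common $\hat O\in\OO(\Mb)$ (e.g.\ $\hat O=\Mb$ itself), and the defining condition~\eqref{eqn:def-of-statespace} already applies there. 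Second, your lifting argument in (d) --- pulling a spectral projection of $T$ back through the $\sigma_O$ and invoking ``compatibility as $O$ shrinks'' --- is the genuinely subtle point, because the lifts are only unique modulo the central kernels $z_O\Nc_{\omega_0}(O)$ and one must control how these behave under inclusion; you are right to flag this and defer it, as the paper does.
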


The proof of this statement (with slight variations) be found in \cite{BrFrVe03}, where it is referred to
as \emph{principle of local definiteness}, as put forward initially by Haag, Narnhofer and Stein
\cite{HaNaSt:1984}. We mention that it is also important
that locally, the state space coincides with a single folium of states and therefore, is minimal, as this rules out 
the occurrence of local superselection
rules, akin to charges which may sit somewhere locally, but cannot be moved around by any device. Thus,
in this sense, the state space, at least formally, captures the idea that a replacement for a vacuum state should be
a set of states which are in a formal sense vacuum-like, meaning that they have a low particle density, temperature, and
stress-energy density. Of course, these properties are, on generic spacetimes, only approximately realized, and will in general
only have a `relative' meaning, e.g.\ compared to local curvature quantities.
In the situation described here, where the $\Af$-state space of a locally covariant quantum field theory $\Af$ consists locally of a single folium,
one has a situation very similar to quantum field theory on Minkowski spacetime where locally
(in restriction to algebras $\Af(\Mb;O)$ with $O \in \OO(\Mb)$), the state space consists of states
which are normal to the vacuum state. This is the starting point for the theory of superselection charges, at least of
localizable and transportable charges which are represented by (equivalence classes of) states which are normal to the
vacuum state on algebras $\Af(\Mb;O')$ if the spacetime region $O'$ is the causal complement of
a double cone (for Minkowski spacetime, $O'$ is not relatively compact), but are not normal to the vacuum state on
the full spacetime algebra $\Af(\Mb)$. See \cite{BaumWollen:1992,Haag} and literature cited there for an exposition
of superselection theory, and background material.
To some extent, the theory of localized, transportable superselection charges can be generalized to quantum field theory in
curved spacetime \cite{GLRV}, and also to locally covariant quantum field theory \cite{Ruzzi:2005,Br&Ru05,BrunettiRuzzi_topsect}.
However, the case of non-localized superselection charges is more complicated in curved spacetimes since the topology of
the Cauchy-surface of the spacetime under consideration may play an important role regarding the existence or non-existence of 
certain types of charges. Problems of that nature occur already when trying to obtain a locally covariant setting for free
quantum electrodynamics at the field algebra level. However, we shall not pursue this circle of problems any further at this
point, and instead refer to the literature \cite{BeniniDappiaggiSchenkel:2013,SandDappHack:2012,FewSchenkel:2014}.

There is an assertion about the type of the local von Neumann algebras $\mathcal{N}_\omega(O)$ which is implied if the state
$\omega$ has a regular scaling limit at a point in the spacelike boundary of $O$ (which means that $O$ must have a non-trivial
causal complement), together with causality of the local algebras. The statement has been given in the curved spacetime context
in \cite{Wollenb,BaumWollen:1992,Verch:SL-CST,FewsterVerch-NecHad}; it builds on a seminal paper by Fredenhagen \cite{Fredenhagen:1985}.
We rephrase it here as follows.

\begin{theorem} \label{thm:typeIII}
 Suppose that the state $\omega$ on $\Af(\Mb)$ possesses a regular scaling limit at some point $x$ lying in the spacelike boundary of
$O \in \OO(\Mb)$, that $O$ has a non-trivial (open) spacelike complement $O'$, and that $\mathcal{N}_\omega(O)$ and $\mathcal{N}_\omega(O')$
are pairwise commuting von Neumann algebras. Then  $\mathcal{N}_\omega(O)$ is of type ${\rm III}_1$.
\end{theorem}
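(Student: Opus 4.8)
The plan is to reduce the statement to Fredenhagen's criterion for type $\mathrm{III}_1$ factors via the behaviour of the modular theory in the scaling limit. First I would recall Fredenhagen's result \cite{Fredenhagen:1985}: if $\Nc_\omega(O)$ is a von Neumann algebra acting on $\Hc_\omega$ with cyclic and separating vector $\Omega_\omega$, and if one can exhibit a sequence of "scaled-down" subalgebras approaching a point $x\in\partial O$ together with a corresponding scaling of states such that the scaling limit is non-trivial and satisfies suitable spectral/positivity conditions, then the Connes invariant $S(\Nc_\omega(O))$ contains the whole positive real line, forcing type $\mathrm{III}_1$. The technical engine is that the existence of a regular scaling limit at $x$ provides exactly such a family: the rescaled $n$-point functions $w_n^0$ obtained from $w_n^\omega(f_1^{[\lambda]}\otimes\cdots\otimes f_n^{[\lambda]})$ satisfy the Streater--Wightman axioms non-trivially, in particular the spectrum condition, and hence define a Wightman theory on $T_x\Mb\cong$ Minkowski space on which a genuine translation group acts with spectrum in the forward cone.

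Second, I would set up the geometric input. Since $x$ lies in the spacelike boundary of $O$ and $O$ has non-trivial open spacelike complement $O'$, one can choose a family of regions $O_\lambda\subset O$ shrinking to $x$ as $\lambda\to 0$ (e.g. images under $\exp_x$ of scaled causal diamonds based at $x$), together with matching regions in $O'$, in such a way that after rescaling by $\lambda^{-1}$ they converge to fixed wedge-like or diamond-like regions in $T_x\Mb$. The commutativity hypothesis $[\Nc_\omega(O),\Nc_\omega(O')]=0$ plus isotony ensures that in the scaling limit one retains Einstein causality between the limiting regions, so the limit theory is a genuine local net (or at least has the locality needed to run Fredenhagen's argument). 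The point-triviality built into the definition of "regular scaling limit" guarantees that no spurious center survives the limit, so the limiting local algebra is a factor; combined with the Reeh--Schlieder-type density that the scaling limit Wightman axioms provide, $\Omega_\omega$ is cyclic and separating for the relevant limiting algebra.

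Third, I would assemble these pieces: apply Fredenhagen's lemma with the scaling family $(O_\lambda,\nu(\lambda))$ to conclude that for every $t>0$ there is, in the modular spectrum of $\Nc_\omega(O)$ relative to (a vector implementing) $\omega$, the value $t$; equivalently $\Nc_\omega(O)$ is not semifinite and its flow of weights is ergodic, which is precisely the characterization of type $\mathrm{III}_1$. I would then cite \cite{Wollenb,BaumWollen:1992,Verch:SL-CST,FewsterVerch-NecHad} for the curved-spacetime adaptations of this scheme, noting that the only genuinely new verification compared to the Minkowski case is that the scaling limit construction, which is intrinsically local, does not require the full spacetime to have any symmetry --- the tangent space $T_x\Mb$ supplies the needed (approximate) Poincaré structure.

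The hard part will be the passage to the limit itself: showing that the rescaled von Neumann algebras $\lambda\mapsto\Nc_\omega(O_\lambda)$, suitably normalized, actually converge in a sense strong enough that the modular structure (and hence the Connes spectrum) is controlled in the limit, rather than merely the $n$-point functions converging. One must be careful that the normalization $\nu(\lambda)$ appearing in the definition of the regular scaling limit is compatible with a simultaneous rescaling of the GNS vector and the algebra generators, so that the limiting objects fit together into a single representation on which the limiting modular group acts as expected. This is exactly the delicate step handled in Fredenhagen's original paper and its descendants, and the proof here amounts to checking that the hypotheses of the present theorem --- regular scaling limit, non-trivial spacelike complement, and mutual commutativity --- are precisely what those arguments require.
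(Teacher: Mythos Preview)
The paper does not supply its own proof of this theorem: it states the result and refers the reader to \cite{Wollenb,BaumWollen:1992,Verch:SL-CST,FewsterVerch-NecHad}, which adapt Fredenhagen's Minkowski-space argument \cite{Fredenhagen:1985} to the curved-spacetime setting. Your sketch follows precisely this route and cites the same sources, so there is no discrepancy of approach to report.

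That said, one point in your outline deserves sharpening. The decisive structural feature of the scaling-limit theory on $T_x\Mb$ is not the translation action and spectrum condition per se, but the presence of \emph{dilations}: the rescaling $\lambda\mapsto\lambda'\lambda$ acts as a symmetry of the limit net, and it is this scale covariance that forces the modular spectrum (the Connes invariant $S(\Nc_\omega(O))$) to be invariant under multiplication by arbitrary positive reals, hence equal to $[0,\infty)$. Your account emphasises translations and the forward-cone spectrum, which are needed for the limit to be a bona fide Wightman theory, but the type~$\mathrm{III}_1$ conclusion is driven by dilation invariance. Also, point-triviality does not directly yield factoriality of the limiting algebra; its role is rather to ensure that the scaling limit is genuinely governed by short-distance behaviour and not contaminated by global elements surviving at the point. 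These are refinements rather than errors --- your overall strategy is the one the cited literature uses.
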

We remark that this result holds also if $O$ is not relatively compact, provided the other conditions are met. The type ${\rm III}_1$ property of
$\mathcal{N}_\omega(O)$ means, roughly speaking, that $\mathcal{N}_\omega(O)$ contains no (non-zero) finite-dimensional projections. For the precise
mathematical statement, see \cite{Blackadar}. Suffice it to mention that the type ${\rm III}_1$ property of local algebras of von Neumann
algebras is a typical feature of local (von Neumann) algebras of observables in relativistic quantum field theory which does not appear in
quantum mechanics, or quantum statistical mechanics. There are some interesting consequences --- in particular, like the Reeh-Schlieder theorem
to be discussed below, the type ${\rm III}_1$ property of the local von Neumann algebras has as one of its consequences the ubiquity of states which
are entangled across acausally separated spacetime regions. The reader is referred to the references \cite{CliftonHalvorson,SummersWerner,VerchWerner,BuchholzStoermer}
for further material related to
that theme. 

Next, we shall compile what is known about states of linear quantum field models from the point of view of locally covariant quantum
field theory, providing examples for the properties of states listed above, and for Thms.\ \ref{thm:state-space} and \ref{thm:typeIII}.

\begin{proposition} ${}$ Assume that $\Af$ is the $C^*$-algebra version of locally covariant quantum field theory of the quantized linear Klein-Gordon field with a
 general curvature coupling, corresponding to the field equation $(\Box_\Mb + \xi R_\Mb + m^2)\phi = 0$, where $R_\Mb$ is the scalar
 curvature of the underlying spacetime $\Mb$ of $\Loc$, and $m^2 \ge 0$ and $\xi \ge 0$ are fixed constants (the same for all
$\Mb$). (That is, $\Af(\Mb)$  is the Weyl algebra of the 
quantized Klein--Gordon field on each $\Mb\in\Loc$.)
Then the following hold:
\begin{itemize}
\item[{\rm (1)}] ${}$ \
Any quasifree Hadamard state on $\Af(\Mb)$ fulfils
point-triviality, intermediate factoriality, existence of affiliated $n$-point functions, the $\mu$SC,
and for a certain class of spacetime regions: Split-property, primarity, and Haag-duality \cite{Verch97}.  
\item[{\rm (2)}] ${}$ \ 
Any two quasifree Hadamard states $\omega_1$ and $\omega_2$ on $\Af(\Mb)$ are locally quasiequivalent, i.e.\ the
condition \eqref{eqn:locquasieq} holds for any $O \in \OO(\Mb)$ \cite{Verch:1994}. 
\item[{\rm (3)}] ${}$ \
A quasifree Hadamard state $\omega$ on $\Af(\Mb)$ fulfils the Reeh-Schlieder property with respect to any spacetime region
$O \in \OO(\Mb)$ if the two-point function $w_2^\omega$ fulfils the {\em analytic microlocal spectrum condition} \cite{StVeWo:2002}. Without assuming the analytic microlocal spectrum condition, there are also
spacetime regions $O \in \OO(\Mb)$ and quasifree Hadamard states $\omega$ on $\Af(\Mb)$ such that $\omega$ has the Reeh-Schlieder property
with respect to $O$ \cite{Verch:1993,Sanders_ReehSchlieder,Stroh:2000}.
\item[{\rm (4)}] ${}$ \
Setting $\boldsymbol{S}_0(\Mb)$ to coincide with the set of all quasifree Hadamard states in the case of the locally covariant
quantized Klein-Gordon field, the assumptions stated for Thm.\ \ref{thm:state-space} are fulfilled \cite{BrFrVe03}.
\end{itemize}
\end{proposition}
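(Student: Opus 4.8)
The plan is to treat the Proposition as an orchestration of results from the cited literature, establishing each clause by isolating the one structural fact that drives it --- namely that the Hadamard condition is local, covariant, and pins down the singular part of $w_2^\omega$ --- and then reading off (4) as a consequence of (1)--(3). First, for the existence of affiliated $n$-point functions and the $\mu$SC in (1): given $\Mb\in\Loc$ and a quasifree Hadamard state $\omega$, I would take $\mathcal{V}_\Mb$ to be the trivial line bundle and $\Af_0(\Mb)$ the $*$-algebra generated by the smeared fields $\Phi_\Mb(f)$ affiliated to $\mathcal{N}_\omega(\Mb)$; the quasifree property expresses every $\omega(A_1\cdots A_k)$ as a sum of products of $w_2^\omega$, so the $w_n^\omega$ are determined by $w_2^\omega$ through the standard Wick combinatorics and the positive-type and convergence requirements are immediate. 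Radzikowski's theorem \cite{Radzikowski_ulocal1996} gives that the Hadamard form of $w_2^\omega$ is equivalent to the stated bound on $\WF(w_2^\omega)$, and H\"ormander's calculus for sums and products of distributions propagates this to all $\WF(w_n^\omega)$, which is the $\mu$SC.

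For point-triviality I would invoke the scaling analysis of \cite{Verch97,SahlmannVerch:2000RMP}: after the rescaling $f\mapsto f^{[\lambda]}$ the leading singularity of $w_2^\omega$ at any $x$ matches the Minkowski vacuum, so $\omega$ has a regular scaling limit everywhere, hence is point-trivial; feeding this into Theorem~\ref{thm:typeIII} together with Einstein causality for the causal complement of a double cone $O$ shows $\mathcal{N}_\omega(O)$ is a type $\mathrm{III}_1$ factor, so $\omega$ is primary on such $O$, and since any relatively compact set lies in some such $O$, intermediate factoriality follows by the remark after the definition of primarity. The split property and Haag duality for the relevant class of regions are the nuclearity/funnel and duality results of \cite{Verch97}. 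For local quasiequivalence (2), the difference $w_2^{\omega_1}-w_2^{\omega_2}$ of two quasifree Hadamard two-point functions is a smooth symmetric bilinear form, the singular part being fixed by the metric and field equation; on any $O\in\OO(\Mb)$ with $\overline O$ compact the associated one-particle structures then differ by a smoothing perturbation and the comparison argument of \cite{Verch:1994} yields equality of the local folia, i.e.\ \eqref{eqn:locquasieq}, the general case following by exhausting $O$ by such regions. For Reeh--Schlieder (3), under the analytic $\mu$SC the $n$-point functions have analytic wavefront sets in the appropriate tube and the Edge-of-the-Wedge argument of \cite{StVeWo:2002} gives density of $\pi_\omega(\Af(\Mb;O))\Omega_\omega$ in $\Hc_\omega$ for every $O$; without analyticity I would use the deformation (rigidity) argument of Section~\ref{sect:rigidity} --- deform $\Mb$ to an ultrastatic spacetime whose static ground state is Hadamard and has Reeh--Schlieder by Minkowski-type reasoning, then transport back along Cauchy morphisms via the timeslice axiom and local covariance, recovering the statements of \cite{Verch:1993,Sanders_ReehSchlieder,Stroh:2000}.

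Finally, for (4) I would verify the hypotheses of Theorem~\ref{thm:state-space} with $\boldsymbol S_0(\Mb)$ the set of quasifree Hadamard states on $\Af(\Mb)$: local quasiequivalence and intermediate factoriality are precisely (2) and (1), and the contravariance $\Af(\psi)^*\boldsymbol S_0(\Nb)\subset\boldsymbol S_0(\Mb)$ holds because the Hadamard condition is characterised by the manifestly covariant wavefront-set bound while the pull-back of a quasifree state under the monomorphism $\Af(\psi)$ is again quasifree with two-point function $(\psi\times\psi)^*w_2^\omega$, hence Hadamard; equality when $\psi(\Mb)=\Nb$ follows since $\Af(\psi)$ is then an isomorphism. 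Theorem~\ref{thm:state-space} then produces the $\Af$-state space, and its final clause transfers point-triviality to all of $\mathscr S(\Mb)$.

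The main obstacle in making this genuinely self-contained --- rather than a guided tour of references --- is the split property and Haag duality of clause (1) and the deformation argument for Reeh--Schlieder in clause (3): the former needs the detailed nuclearity estimates for the two-point function of a Hadamard state, and the latter the geometric deformation machinery of the rigidity argument. Everything else is bookkeeping once Radzikowski's equivalence and the locality and covariance of the Hadamard condition are in hand.
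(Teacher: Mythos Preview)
Your outline is correct and tracks the content of the cited references accurately. Note, however, that the paper does not give a proof of this proposition at all: it is stated as a compilation of results from the literature, with each item carrying its own citation, and the text moves on immediately to a remark about analogous results for Dirac, Proca and electromagnetic fields. So there is no ``paper's own proof'' to compare against beyond the bare list of references.

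That said, your sketch is faithful to what those references actually do, and you have correctly identified the single structural fact that organises everything --- the Hadamard condition fixes the local singular structure of $w_2^\omega$ covariantly, so that differences of Hadamard two-point functions are smooth. A couple of minor sharpenings: for the $\mu$SC on the higher $n$-point functions you should cite the Wick-polynomial microlocal estimates of Brunetti--Fredenhagen--K\"ohler rather than bare H\"ormander calculus, since the positivity/commutator structure of the quasifree combinatorics is what keeps the wavefront sets in the right cone; and for point-triviality the scaling-limit argument in \cite{Verch97} is logically prior to the type $\mathrm{III}_1$ conclusion, not a consequence of it, so your ordering is right but the dependence should be stated as ``regular scaling limit $\Rightarrow$ point-trivial'' directly, with Theorem~\ref{thm:typeIII} then giving primarity as a separate consequence. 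Your treatment of (4) --- covariance of the wavefront-set form of the Hadamard condition under $\psi^*$, quasifreeness preserved under pull-back by $\Af(\psi)$ --- is exactly the argument in \cite{BrFrVe03}.
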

We remark that similar results have also been obtained
for the quantized Dirac, Proca, and (partially) electromagnetic fields, choosing in each case the
set of quasifree Hadamard states as the set $\boldsymbol{S}_0(\Mb)$ \cite{DAnHol:2006,SahlmannVerch:2000RMP}.

\section{Spacetime deformation and the rigidity argument}

Techniques based on deformations of globally hyperbolic spacetimes
go back to the work of Fulling, Narcowich and Wald~\cite{FullingNarcowichWald} in which the existence
of Hadamard states on ultrastatic spacetimes was used to deduce
their existence on general globally hyperbolic spacetimes. 
As first recognised in~\cite{Verch01}, the same idea can be used to great effect in 
locally covariant QFT.

\subsection{Spacetime deformation}

There are two basic components to the spacetime deformation construction:
the existence of a standard form for globally hyperbolic spacetimes, 
and the actual deformation procedure.
Consider any object $\Mb= (\Mc,g,\ogth,\tgth)$ of $\Loc$. 
An important property of globally hyperbolic spacetimes is that
$\Mb$ admits foliations into smooth spacelike Cauchy surfaces.
Moreover, every spacelike Cauchy surface $\Sigma$ of $\Mb\in\Loc$ also carries an 
orientation $\wgth$ fixed by the requirement that $\tgth\wedge\wgth$ is the
restriction of $\ogth$ to $\Sigma$,\footnote{Recall that $\tgth$, $\ogth$
and $\wgth$ are all regarded as connected components of certain sets of
nowhere zero forms; by $\tgth\wedge\wgth$ we denote the
set of all possible exterior products from within $\tgth$ and $\wgth$.}
and all such oriented Cauchy surfaces are oriented-diffeomorphic (i.e., diffeomorphic via
an orientation-preserving map). 
These facts may be used to prove the following structure theorem for $\Loc$ (see \cite[\S 2.1]{FewVer:dynloc_theory}).
\begin{proposition}\label{prop:BS}
Supposing that $\Mb\in\Loc$, let $\Sigma$ be a 
smooth spacelike Cauchy surface of $\Mb$ with induced orientation $\wgth$, and let $t_*\in\RR$. Then there
is a $\Loc$-object  $\Mb_{\text{st}}=(\RR\times\Sigma,g, \tgth\wedge\wgth,\tgth)$
and an isomorphism $\rho:\Mb_{\text{st}}\to\Mb$ in $\Loc$ such that
\begin{itemize}
\item the metric $g$ is of split form 
\begin{equation}\label{eq:split_metric}
g = \beta  dt\otimes dt -h_t
\end{equation}
where  $t$ is the coordinate corresponding to the first factor of 
the Cartesian product $\RR\times\Sigma$, the function $\beta\in C^\infty(\RR\times\Sigma)$ is strictly positive and $t\mapsto h_t$ is a smooth choice of (smooth) Riemannian metrics on $\Sigma$; 
\item each $\{t\}\times\Sigma$ is a smooth spacelike Cauchy surface, and $\rho(t_*,\cdot)$ is the inclusion of $\Sigma$ in $\Mb$;
\item the vector $\partial/\partial t$ is future-directed according to $\tgth$.
\end{itemize}  
We refer to $\Mb_{\text{st}}$ as a \emph{standard form} for $\Mb$.
\end{proposition}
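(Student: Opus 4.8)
The plan is to reduce everything to the classical structure theorem of Bernal and S\'anchez, which asserts that any globally hyperbolic spacetime is isometric to $\RR\times\Sigma$ with a metric of the split form $\beta\,dt\otimes dt-h_t$, the slices $\{t\}\times\Sigma$ being smooth spacelike Cauchy surfaces; moreover, given a prescribed smooth spacelike Cauchy surface $\Sigma$, the splitting can be arranged so that $\Sigma$ sits as one of the slices (after an obvious shift, the slice at parameter value $t_*$). The content of the proposition over and above this is simply the bookkeeping of orientations and time-orientations so that the resulting object genuinely lies in $\Loc$ and $\rho$ is a $\Loc$-morphism.

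First I would invoke Bernal--S\'anchez to produce a diffeomorphism $\rho_0:\RR\times\Sigma\to\Mc$ pulling $g$ back to a metric $\tilde g$ of split form~\eqref{eq:split_metric}, with $\rho_0(t_*,\cdot)$ the given inclusion of $\Sigma$; positivity of $\beta$ and smoothness of $t\mapsto h_t$ are part of that statement, as is the fact that each $\{t\}\times\Sigma$ is a smooth spacelike Cauchy surface. Next I would fix the time-orientation on $\RR\times\Sigma$: exactly one of $\pm\partial/\partial t$ lies in the time-orientation $\rho_0^*\tgth$, since $\partial/\partial t$ is timelike for $\tilde g$; composing $\rho_0$ with the orientation-preserving diffeomorphism $(t,x)\mapsto(-t,x)$ if necessary — and replacing $t_*$ by $-t_*$, or relabelling so that the marked slice is still at $t_*$ — I may assume $\partial/\partial t$ is future-directed, and I set $\tgth_{\text{st}}=[dt]$ accordingly. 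For the spacetime orientation I take $\ogth_{\text{st}}$ to be the component of nowhere-zero $n$-forms containing $\tgth_{\text{st}}\wedge\wgth$, where $\wgth$ is the induced orientation on $\Sigma$ from the hypothesis; since $t\mapsto h_t$ is a smooth family of Riemannian metrics, $\wgth$ transports to a consistent transverse orientation on every slice, so this is well defined, and by construction the object $\Mb_{\text{st}}=(\RR\times\Sigma,\tilde g,\ogth_{\text{st}},\tgth_{\text{st}})$ has the advertised form $(\RR\times\Sigma,g,\tgth\wedge\wgth,\tgth)$.

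It remains to check that $\rho:=\rho_0$ (with the above adjustments) is an isomorphism in $\Loc$. It is a diffeomorphism, hence an isometric embedding with causally convex (indeed, full) image; by the choice of $\tgth_{\text{st}}$ it preserves time-orientation; and it preserves orientation because $\ogth_{\text{st}}$ was \emph{defined} as the pullback component. The inverse $\rho^{-1}$ is a $\Loc$-morphism by the same reasoning, so $\rho$ is an isomorphism. Finally $\rho(t_*,\cdot)$ is the inclusion of $\Sigma$ by construction, and the three displayed bullet points have all been arranged along the way. The only genuine work is the Bernal--S\'anchez theorem itself, which I would quote; the orientation/time-orientation bookkeeping is the step most prone to sign slips but is otherwise routine, so the main obstacle is merely packaging an external deep theorem cleanly rather than proving anything new.
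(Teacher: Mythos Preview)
Your proposal is correct and matches the paper's approach exactly: the paper does not give a proof at all but simply remarks that the statement ``is a slight elaboration of results due to Bernal and S\'anchez (see particularly, \cite[Thm~1.2]{Bernal:2005qf} and \cite[Thm~2.4]{Bernal:2004gm})''. Your write-up supplies precisely that elaboration --- the orientation and time-orientation bookkeeping needed to upgrade the Bernal--S\'anchez isometry to a $\Loc$-isomorphism --- and your observation that the only substantive input is the cited external theorem is on the nose.
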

This statement  is a slight elaboration of results due to 
Bernal and S\'anchez (see particularly, \cite[Thm 1.2]{Bernal:2005qf} and \cite[Thm 2.4]{Bernal:2004gm}), which were previously long-standing folk-theorems.  
The main deformation result can now be stated (see~\cite[Prop.~2.4]{FewVer:dynloc_theory}):
\begin{proposition}\label{prop:Cauchy_chain}
Spacetimes $\Mb$, $\Nb$ in $\Loc$ have oriented-diffeomorphic Cauchy surfaces
if and only if there is a chain of Cauchy morphisms in $\Loc$ forming a diagram
\begin{equation} \label{eq:Cauchy_chain}
\Mb\xleftarrow{\alpha} \Pb \xrightarrow{\beta} \Ib \xleftarrow{\gamma} \Fb \xrightarrow{\delta} \Nb.
\end{equation} 
\end{proposition}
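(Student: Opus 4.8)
The plan is to prove the two implications separately; the forward implication contains essentially all the content.

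\textbf{``Only if''.} I would use the standard geometric fact that a Cauchy morphism maps smooth spacelike Cauchy surfaces of its domain to smooth spacelike Cauchy surfaces of its codomain, restricting there to an orientation-preserving diffeomorphism: a smooth spacelike Cauchy surface of the codomain may be chosen inside the image of the morphism (after smoothing, using that the image already contains one), its preimage is a smooth spacelike Cauchy surface of the domain because the morphism is an isometric embedding with causally convex image, and the morphism transports the induced orientation since $\Loc$-morphisms preserve spacetime and time orientation. Applying this to each of $\alpha,\beta,\gamma,\delta$ in a chain $\Mb\xleftarrow{\alpha}\Pb\xrightarrow{\beta}\Ib\xleftarrow{\gamma}\Fb\xrightarrow{\delta}\Nb$ of Cauchy morphisms, and invoking the fact recorded just before Proposition~\ref{prop:BS} that any two smooth spacelike Cauchy surfaces of a fixed object of $\Loc$ are oriented-diffeomorphic, one obtains oriented diffeomorphisms linking Cauchy surfaces of $\Mb$, $\Pb$, $\Ib$, $\Fb$ and $\Nb$ in turn, hence of $\Mb$ and $\Nb$.

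\textbf{``If''.} Let $\phi$ be an oriented diffeomorphism from a smooth spacelike Cauchy surface of $\Mb$ to one of $\Nb$. Using Proposition~\ref{prop:BS} to pass to standard forms and then transporting one of them along $\id_\RR\times\phi$, I may assume (up to $\Loc$-isomorphisms, to be absorbed into $\alpha$ and $\delta$ at the end) that $\Mb=(\RR\times\Sigma,g_\Mb,\tgth\wedge\wgth,\tgth)$ and $\Nb=(\RR\times\Sigma,g_\Nb,\tgth\wedge\wgth,\tgth)$ on one and the same oriented, time-oriented manifold $\RR\times\Sigma$, with $g_\Mb,g_\Nb$ both of split form~\eqref{eq:split_metric}, every slice $\{t\}\times\Sigma$ a Cauchy surface for each, and $\partial/\partial t$ future-directed for both; it is exactly the orientation-preservation of $\phi$ that makes the orientations and time-orientations match. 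Fix $\chi\in C^\infty(\RR)$ non-decreasing with $\chi\equiv0$ on $(-\infty,1]$ and $\chi\equiv1$ on $[2,\infty)$, and put $\hat g:=\chi(t)\,g_\Nb+(1-\chi(t))\,g_\Mb$. A convex combination of split-form metrics is again of split form, so $\partial/\partial t$ is $\hat g$-timelike and future-directed, and $\hat g$ coincides with $g_\Mb$ on $(-\infty,1]\times\Sigma$ and with $g_\Nb$ on $[2,\infty)\times\Sigma$. Assuming for the moment that the slices remain Cauchy for $\hat g$, so that $\Ib:=(\RR\times\Sigma,\hat g,\tgth\wedge\wgth,\tgth)$ is an object of $\Loc$, set $\Pb:=\Ib|_{(-\infty,1)\times\Sigma}$ and $\Fb:=\Ib|_{(2,\infty)\times\Sigma}$. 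Because $\hat g$ equals $g_\Mb$, respectively $g_\Nb$, on these regions, one has equalities of $\Loc$-objects $\Pb=\Mb|_{(-\infty,1)\times\Sigma}$ and $\Fb=\Nb|_{(2,\infty)\times\Sigma}$; each of the four subset inclusions $\Pb\hookrightarrow\Mb$, $\Pb\hookrightarrow\Ib$, $\Fb\hookrightarrow\Ib$, $\Fb\hookrightarrow\Nb$ is a $\Loc$-morphism (its image is the region $\{t<1\}$ or $\{t>2\}$, which is causally convex since $t$ is monotone along causal curves) and contains a slice, hence a Cauchy surface, so all four are Cauchy. Composing the outer two with the standard-form isomorphisms yields the required chain $\Mb\xleftarrow{\alpha}\Pb\xrightarrow{\beta}\Ib\xleftarrow{\gamma}\Fb\xrightarrow{\delta}\Nb$.

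\textbf{The main obstacle} is the suppressed claim that the slices of $\hat g$ are Cauchy; this is where essentially all the work lies. Since $\partial/\partial t$ is $\hat g$-timelike, $t$ is strictly monotone along any $\hat g$-causal curve, so the slices are acausal and spacelike and it suffices to show that every inextendible $\hat g$-causal curve $\gamma(t)=(t,x(t))$ has $t$ ranging over all of $\RR$. The key point is that any $\hat g$-causal vector is $g_\Mb$-causal or $g_\Nb$-causal, because $\hat g(v,v)=\chi g_\Nb(v,v)+(1-\chi)g_\Mb(v,v)\ge0$ forces $g_\Mb(v,v)\ge0$ or $g_\Nb(v,v)\ge0$. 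Where $\gamma$ lies in $\{t\le1\}$ (resp. $\{t\ge2\}$) it is a $g_\Mb$- (resp. $g_\Nb$-) causal curve, and if it is past- (resp. future-) inextendible there, then $t\to-\infty$ (resp. $t\to+\infty$) along it, by global hyperbolicity of $\Mb$ (resp. $\Nb$) and the fact that their slices are Cauchy. The only remaining possibility to exclude is that $\gamma$ stays confined to a bounded range of $t$ while escaping to infinity in $\Sigma$; this is ruled out using the $g_\Mb$-or-$g_\Nb$-causality of $\gamma$ together with compactness of causal diamonds in the globally hyperbolic spacetimes $\Mb$ and $\Nb$. The rigorous form of this argument — and the Cauchy-surface smoothing used in the ``only if'' part — is the Bernal–S\'anchez technology, and the details are exactly those of~\cite[Prop.~2.4]{FewVer:dynloc_theory}, on which this proof is patterned.
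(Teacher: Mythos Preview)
Your proposal is correct and follows essentially the same approach as the paper, which does not give an in-text proof but simply cites \cite[Prop.~2.4]{FewVer:dynloc_theory}; your sketch reconstructs that argument faithfully, including the reduction to common standard form on $\RR\times\Sigma$, the convex interpolation of the split-form metrics, and the identification of global hyperbolicity of the interpolant as the main technical step. Your explicit acknowledgement that the rigorous verification of the Cauchy property of the slices (and the smoothing in the ``only if'' direction) is where the Bernal--S\'anchez machinery enters, with details deferred to the cited reference, matches the paper's own level of detail.
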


The importance of the deformation result is that a locally covariant theory $\Af$ obeying
the time-slice condition maps every Cauchy morphism of $\Loc$ to an isomorphism of $\Alg$,
so the chain of Cauchy morphisms in \eqref{eq:Cauchy_chain} induces an
isomorphism 
\begin{equation}
\Af(\delta)\circ\Af(\gamma)^{-1}\circ\Af(\beta)\circ
\Af(\alpha)^{-1}:\Af(\Mb)\to\Af(\Nb) .
\end{equation}
This isomorphism is not canonical, owing to the many choices used to construct it. 
Nonetheless, we will see that it is possible to use results of this type to transfer information and
structures between the instantiations of the theory on $\Mb$ and $\Nb$.

In the following, a few more definitions will be needed (\cite[Def.~2.5]{FewVer:dynloc_theory}). 
A \emph{Cauchy ball} in a Cauchy surface $\Sigma$  
of $\Mb\in\Loc$ is a subset $B\subset\Sigma$ for which there is a chart $(U,\phi)$ of $\Sigma$ such that $\phi(B)$ a nonempty open ball in $\RR^{n-1}$ whose closure is contained in $\phi(U)$. A {\em diamond} in $\Mb$ is any open relatively compact subset of the form $D_\Mb(B)$,
where $B$ is a Cauchy ball. The diamond is said to have \emph{base} $B$ and be \emph{based on} any Cauchy surface in which $B$ is a Cauchy ball. 
A {\em multi-diamond} $D$ is a union of finitely many causally disjoint diamonds; there exists
a common Cauchy surface on which each component is based, and the intersection of $D$
with an open causally convex neighbourhood of any such Cauchy surface 
is called a \emph{truncated (multi)-diamond}.

\subsection{The rigidity argument and some applications}\label{sect:rigidity}

It is often the case that if a locally covariant QFT has a property in one spacetime, 
then the same is true in all spacetimes, a phenomenon that we 
call \emph{rigidity}. This testifies to the strength of
the hypotheses given in section~\ref{sect:assumptions}, particularly the 
timeslice property. A simple example is provided by Einstein causality, 
which was originally included as Assumption~\ref{ax:Einstein_causality} for a locally covariant quantum field theory. 
However, this assumption is largely redundant:
Provided a theory is Einstein causal in one 
spacetime (e.g., Minkowski), it must be so in all spacetimes. 

Let $\Af:\Loc\to\Alg$ obey local covariance and the timeslice condition, but not necessarily
Einstein causality, and for each $\Mb\in\Loc$ let $\OO^{(2)}(\Mb)$ denote the
set of all ordered pairs $\langle O_1,O_2\rangle\in \OO(\Mb)\times\OO(\Mb)$ such that
the $O_i$ are causally disjoint in the sense that $O_1\subset O_2'$. 
For any such pair $\langle O_1,O_2\rangle\in\OO^{(2)}(\Mb)$ let $P_\Mb(O_1,O_2)$ be the proposition that Einstein causality holds for $O_1$ and $O_2$, i.e., that $\Af^\kin(\Mb;O_1)$ and $\Af^\kin(\Mb;O_2)$ commute.  These propositions have some simple properties:  
\begin{description}
\item[R1] for all $\langle O_1,O_2\rangle\in\OO^{(2)}(\Mb)$,  
\[
P_\Mb(O_1,O_2) \iff P_\Mb(D_\Mb(O_1),D_\Mb(O_2)).
\] 
\item[R2] given $\psi:\Mb\to\Nb$ then, for all $\langle O_1,O_2\rangle\in\OO^{(2)}(\Mb)$, \[P_\Mb(O_1,O_2) \iff P_\Nb(\psi(O_1),\psi(O_2)).
\]
\item[R3] for all $\langle O_1,O_2\rangle\in\OO^{(2)}(\Mb)$
and all $\widetilde{O}_i\in\OO(\Mb)$ with $\widetilde{O}_i\subset O_i$ ($i=1,2$)
\[
P_\Mb(O_1,O_2) \implies P_\Mb(\widetilde{O}_1,\widetilde{O}_2) .
\]
\end{description}
Here, R1 holds trivially because $\Af(\Mb;O)=\Af(\Mb;D_\Mb(O))$,
while to prove R2 we recall from \eqref{eq:kin_covariance} that $\Af^\kin(\Nb;\psi(O_i))=\Af(\psi)(\Af^\kin(\Mb;O_i))$, 
and use the equality
\[
[\Af(\Nb;\psi(O_1)),\Af(\Nb;\psi(O_2))] = \Af(\psi)([\Af(\Mb;O_1),\Af(\Mb;O_2)])
\]
together with injectivity of $\Af(\psi)$. R3 is also trivial as $\Af^\kin(\Mb;\widetilde{O}_i)\subset\Af^\kin(\Mb;O_i)$.

These facts allow us to prove the following. 

\begin{theorem} \label{thm:causality}
Let the theory $\Af:\Loc\to\Alg$ obey local covariance and timeslice (Assumptions~\ref{ax:loc_cov} and~\ref{ax:timeslice}) 
but not necessarily Einstein causality (Assumption~\ref{ax:Einstein_causality}).
Suppose that, in the theory $\Af$, Einstein causality holds for some 
pair of causally disjoint regions $O_1,O_2\in\OO(\Mb)$ in some spacetime $\Mb\in\Loc$. Then Einstein causality holds in theory $\Af$ for every pair of causally disjoint regions $\widetilde{O}_1,\widetilde{O}_2\in\OO(\widetilde{\Mb})$ in every spacetime $\widetilde{\Mb}\in\Loc$ for which either of the following hold:
\item (a) the Cauchy surfaces of $\widetilde{O}_i$ are oriented diffeomorphic to those of $O_i$ for $i=1,2$;
\item (b) each component of $\widetilde{O}_1\cup \widetilde{O}_2$ has Cauchy surface topology $\RR^3$ (e.g., if the $\widetilde{O}_i$ are truncated multi-diamonds.) 
\end{theorem}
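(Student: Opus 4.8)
The plan is to merge a causally disjoint pair of regions into a single, in general disconnected, object of $\Loc$, apply the spacetime deformation of Proposition~\ref{prop:Cauchy_chain} to that object, and transport the Einstein causality proposition along the resulting chain of Cauchy morphisms using the properties R1--R3. The bookkeeping is kept clean by one elementary observation: if $A_1,A_2\in\OO(\Mb)$ are causally disjoint, then $A_1$ and $A_2$ have disjoint closures inside $A_1\cup A_2$ (a point of the open set $A_2$ lying in $\overline{A_1}$ would force $A_1\cap A_2\neq\emptyset$), so in the spacetime $\Mb|_{A_1\cup A_2}$ the regions $A_1,A_2$ are clopen, i.e.\ unions of connected components. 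Consequently, under any $\Loc$-morphism $\psi:\mathsf X\to\mathsf Y$ the preimage of a clopen set is clopen and $\psi$ restricts to a morphism between the corresponding restricted spacetimes; moreover the deformation of Proposition~\ref{prop:Cauchy_chain} can be carried out separately on each component and the resulting chains reassembled by disjoint union.

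For part (a), set $\Mb_0:=\Mb|_{O_1\cup O_2}$ and $\widetilde{\Mb}_0:=\widetilde{\Mb}|_{\widetilde{O}_1\cup\widetilde{O}_2}$, objects of $\Loc$ in which $O_1,O_2$ (resp.\ $\widetilde{O}_1,\widetilde{O}_2$) are clopen. Applying R2 to the subset embeddings $\Mb_0\to\Mb$ and $\widetilde{\Mb}_0\to\widetilde{\Mb}$ reduces the claim to $P_{\Mb_0}(O_1,O_2)\iff P_{\widetilde{\Mb}_0}(\widetilde{O}_1,\widetilde{O}_2)$. By hypothesis (a), for each $i$ the spacetimes $\Mb|_{O_i}$ and $\widetilde{\Mb}|_{\widetilde{O}_i}$ have oriented-diffeomorphic Cauchy surfaces, so Proposition~\ref{prop:Cauchy_chain} furnishes chains of Cauchy morphisms
\[
\Mb|_{O_i}\xleftarrow{\alpha_i}\Pb_i\xrightarrow{\beta_i}\Ib_i\xleftarrow{\gamma_i}\Fb_i\xrightarrow{\delta_i}\widetilde{\Mb}|_{\widetilde{O}_i}\qquad(i=1,2).
\]
Putting $\Pb:=\Pb_1\sqcup\Pb_2$, $\Ib:=\Ib_1\sqcup\Ib_2$, $\Fb:=\Fb_1\sqcup\Fb_2$, the maps $\alpha:=\alpha_1\sqcup\alpha_2$ (which lands in $\Mb_0$ since the images $\alpha_i(\Pb_i)\subset O_i$ are causally disjoint), $\beta$, $\gamma$, $\delta:=\delta_1\sqcup\delta_2$ are again Cauchy morphisms, and assemble into a diagram $\Mb_0\xleftarrow{\alpha}\Pb\xrightarrow{\beta}\Ib\xleftarrow{\gamma}\Fb\xrightarrow{\delta}\widetilde{\Mb}_0$.

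Now transport $P$ along this chain. At each arrow, R2 moves the proposition to the image regions and R1 replaces those images by their Cauchy developments; since each component map is Cauchy, its image contains a Cauchy surface of the relevant clopen piece, so the Cauchy development is exactly that piece (e.g.\ $D_{\Mb_0}(\alpha_i(\Pb_i))=O_i$ and $D_{\Ib}(\beta_i(\Pb_i))=\Ib_i$). This yields
\[
P_{\Mb_0}(O_1,O_2)\iff P_{\Pb}(\Pb_1,\Pb_2)\iff P_{\Ib}(\Ib_1,\Ib_2)\iff P_{\Fb}(\Fb_1,\Fb_2)\iff P_{\widetilde{\Mb}_0}(\widetilde{O}_1,\widetilde{O}_2),
\]
hence $P_\Mb(O_1,O_2)\iff P_{\widetilde{\Mb}}(\widetilde{O}_1,\widetilde{O}_2)$; as the left side holds by hypothesis, so does the right. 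For part (b), let $\widetilde{O}_i$ have $k_i$ components, each with Cauchy surface diffeomorphic to $\RR^{n-1}$. Inside the nonempty open set $O_i\subset\Mb$ choose $k_i$ pairwise causally disjoint diamonds based on sufficiently small Cauchy balls, with union $\widehat O_i\in\OO(\Mb)$ a multi-diamond contained in $O_i$; then R3 gives $P_\Mb(\widehat O_1,\widehat O_2)$. Each component of $\widehat O_i$ has Cauchy surface an open ball $\cong\RR^{n-1}$, and for $\RR^{n-1}$ the induced orientation is immaterial (it admits orientation-reversing self-diffeomorphisms), so $\widehat O_i$ and $\widetilde{O}_i$ have oriented-diffeomorphic Cauchy surfaces; part (a), applied with $O_i$ replaced by $\widehat O_i$, then gives Einstein causality for $\widetilde{O}_1,\widetilde{O}_2$. (Truncated multi-diamonds fall under this, their components being diamonds with Cauchy surfaces $\cong\RR^{n-1}$.)

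The conceptual steps --- collapsing to a single deformation of the disconnected spacetime $\Mb_0$, and exploiting that R1--R3 are genuine logical equivalences/implications, so that the non-canonicity of the chain in Proposition~\ref{prop:Cauchy_chain} is harmless --- are straightforward. The part requiring care is entirely geometric: verifying that every region tracked through the chain remains a member of $\OO$ of the appropriate spacetime, with finitely many components and the correct Cauchy-surface type. This is precisely what the clopen observation secures; without it one would have to worry that preimages of regions along Cauchy morphisms might acquire infinitely many components, or that Cauchy developments of morphism images might fail to fill the expected clopen piece. I expect this bookkeeping, rather than the invocation of R1--R3 itself, to be the main obstacle.
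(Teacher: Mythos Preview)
Your proof is correct and follows essentially the same route as the paper's: restrict to $\Mb|_{O_1\cup O_2}$, transport the proposition $P$ along a Cauchy chain furnished by Proposition~\ref{prop:Cauchy_chain} using R1--R2, and for (b) shrink to multi-diamonds via R3 before invoking (a). The only difference is organizational --- you assemble the chain componentwise (applying Proposition~\ref{prop:Cauchy_chain} to each $\Mb|_{O_i}\leadsto\widetilde{\Mb}|_{\widetilde O_i}$ separately and taking disjoint unions), whereas the paper applies it once to the disconnected spacetime and then labels the components of the intermediate spacetimes; your version has the minor advantage that the $O_i\leftrightarrow\widetilde O_i$ correspondence is built in from the start rather than recovered by a choice of labeling.
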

\begin{remark}
The regions $O_i$ in the hypotheses might be  
a finite spacelike distance from one another. 
However, the regions $\widetilde{O}_i$ need not be separated in this way and could touch at their boundaries, or even link around each other if they have nontrivial topology. 
For example,  consider a theory which obeys Einstein causality for 
a pair of causally disjoint diamonds based on the $t=0$
hyperplane of Minkowski space $\Mb_0$. Within each diamond, 
choose a subregion $D_{\Mb_0}(T_i)$, where
$T_i$ is an open subset of the $t=0$ hyperplane with topology $\RR^{n-2}\times\TT$
(e.g., a thickened closed curve). Einstein causality holds 
for these regions (by R3) and thus holds for every pair of causally disjoint regions 
$D_{\Mb_0}(\widetilde{T}_i)$, where the $\widetilde{T}_i$ have topology $\RR^{n-2}\times\TT$, even if they
are linked through one another. 
\end{remark}
\begin{proof}\smartqed
(a) By Proposition~\ref{prop:Cauchy_chain} there is a chain of morphisms 
\[
\widetilde{\Mb} \xlongleftarrow{\widetilde{\iota}} \widetilde{\Mb}|_{\widetilde{O}_1\cup\widetilde{O}_2}
\xlongleftarrow{\widetilde{\psi}} \widetilde{\Lb} \xlongrightarrow{\widetilde{\varphi}} \Ib \xlongleftarrow{\varphi} \Lb 
\xlongrightarrow{\psi} \Mb|_{O_1\cup O_2}\xlongrightarrow{\iota} \Mb
\]
where $\psi,\widetilde{\psi},\varphi,\widetilde{\varphi}$ are Cauchy morphisms and $\iota=\iota_{\Mb;O_1\cup O_2}$, 
$\widetilde{\iota}= \iota_{\widetilde{\Mb};\widetilde{O}_1\cup\widetilde{O}_2}$. 
The spacetime $\Mb|_{O_1\cup O_2}$ has two connected components, which are just the subsets
$O_1$ and $O_2$ (recall that the underlying manifold of $\Mb|_{O_1\cup O_2}$ is just $O_1\cup O_2$ 
as a set); the same holds, {\em mutatis mutandis}, for $\widetilde{\Mb}|_{\widetilde{O}_1\cup\widetilde{O}_2}$. Each of the spacetimes
$\Ib,\Lb,\widetilde{\Lb}$ has two connected components, which we label $I_i$, $L_i$, $\widetilde{L}_i$ respectively ($i=1,2$)
so that 
\[
D_{\Mb|_{O_1\cup O_2}}(\psi(L_i))=O_i ,\qquad
D_{\widetilde{\Mb}|_{\widetilde{O}_1\cup\widetilde{O}_2}}(\widetilde{\psi}(\widetilde{L}_i))=\widetilde{O}_i,
\qquad 
D_\Ib(\varphi(L_i))=I_i=D_\Ib(\widetilde{\varphi}(\widetilde{L}_i)) 
\]
for $i=1,2$. Using properties R1 and R2 we may now argue
\begin{align*}
P_{\Mb}(O_1,O_2) &\xLongleftrightarrow[\iota]{R2} P_{\Mb|_{O_1\cup O_2}}(O_1,O_2)   \xLongleftrightarrow{R1}
P_{\Mb|_{O_1\cup O_2}}(\psi(L_1),\psi(L_2))\\
&\qquad\qquad
\xLongleftrightarrow[\psi]{R2} P_{\Lb}(L_1, L_2) 
\xLongleftrightarrow[\varphi]{R2} P_{\Ib}(\varphi(L_1), \varphi(L_2))  \xLongleftrightarrow{R1} P_{\Ib}(I_1, I_2)  
\end{align*}
where we have indicated the morphism involved in each use of R2. 
By a similar chain of reasoning, $P_{\widetilde{\Mb}}(\widetilde{O}_1,\widetilde{O}_2)\iff P_{\Ib}(I_1, I_2)$. As $P_{\Mb}(O_1,O_2)$
holds by hypothesis, we deduce that $P_{\widetilde{\Mb}}(\widetilde{O}_1,\widetilde{O}_2)$ also holds. 

For (b), we observe first that for $i=1,2$, $O_i$ certainly contains a truncated multi-diamond $D_i$ with 
the same number of components as $\widetilde{O}_i$. Then $P_\Mb(D_1,D_2)$ holds
by R3 and so $P_{\widetilde{\Mb}}(\widetilde{O_1},\widetilde{O_2})$ also holds by part (a).
\end{proof}
Note that this argument makes no specific reference to Einstein causality at all: it simply
uses the  \emph{rigidity hypotheses} R1--3, and therefore allows a number of other
results to be proved in a similar fashion. 
\begin{corollary} 
Assume that, in addition to the hypotheses of Theorem~\ref{thm:causality},  the theory $\Af$ is \emph{additive with respect to truncated multidiamonds}, i.e., each $\Af(\Mb)$ is generated
by its subalgebras $\Af^\kin(\Mb;D)$ as $D$ runs over the truncated multidiamonds of $\Mb$. 
Then $\Af$ obeys Einstein causality in full. 
\end{corollary}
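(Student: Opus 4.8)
The plan is to combine Theorem~\ref{thm:causality}(b) with the additivity hypothesis. Fix an arbitrary object $\Mb\in\Loc$ and an arbitrary pair of causally disjoint nonempty regions $O_1,O_2\in\OO(\Mb)$, so that $O_1\subset O_2'$; the goal is to show $[\Af^\kin(\Mb;O_1),\Af^\kin(\Mb;O_2)]=\{0\}$. Since the hypotheses of Theorem~\ref{thm:causality} are assumed, its conclusion is available, and in particular part~(b) gives that, in the theory $\Af$, Einstein causality holds for \emph{every} pair of causally disjoint truncated multi-diamonds in \emph{every} spacetime of $\Loc$. This is the only input from the geometry; everything after it is algebra.

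Next I would decompose each $\Af^\kin(\Mb;O_i)$ into contributions from truncated multi-diamonds lying inside $O_i$. Apply the additivity hypothesis to the object $\Mb|_{O_i}\in\Loc$: the algebra $\Af(\Mb|_{O_i})$ is generated --- as a $*$-algebra, or as a $C^*$-algebra in the $C^*$-valued case --- by the subalgebras $\Af^\kin(\Mb|_{O_i};D)$ as $D$ runs over the truncated multi-diamonds of $\Mb|_{O_i}$. Each such $D$ is open, relatively compact, has finitely many components, and is causally convex \emph{in $\Mb$}: a causal curve of $\Mb$ with endpoints in $D\subset O_i$ remains in $O_i$ by causal convexity of $O_i$, hence is a causal curve of $\Mb|_{O_i}$ and so remains in $D$. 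Thus $D\in\OO(\Mb)$, the induced spacetimes agree, $(\Mb|_{O_i})|_D=\Mb|_D$, and the subset inclusions factor as $\iota_{\Mb;D}=\iota_{\Mb;O_i}\circ\iota_{\Mb|_{O_i};D}$. Applying the functor and using injectivity of $\Af(\iota_{\Mb;O_i})$ gives $\Af^\kin(\Mb;D)=\Af(\iota_{\Mb;O_i})\bigl(\Af^\kin(\Mb|_{O_i};D)\bigr)$; transporting the generating relation for $\Af(\Mb|_{O_i})$ through the homomorphism $\Af(\iota_{\Mb;O_i})$ then shows that $\Af^\kin(\Mb;O_i)$ is generated by the family $\{\Af^\kin(\Mb;D):D\ \text{a truncated multi-diamond with}\ D\subset O_i\}$.

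Finally I would assemble the pieces. If $D\subset O_1$ and $D'\subset O_2$ are truncated multi-diamonds, then they are causally disjoint in $\Mb$: from $\overline{J_\Mb(D')}\subset\overline{J_\Mb(O_2)}$ we get $D\subset O_1\subset O_2'\subset (D')'$, so $\langle D,D'\rangle\in\OO^{(2)}(\Mb)$, and hence $[\Af^\kin(\Mb;D),\Af^\kin(\Mb;D')]=\{0\}$ by the first step. The two unions $\bigcup_{D\subset O_1}\Af^\kin(\Mb;D)$ and $\bigcup_{D'\subset O_2}\Af^\kin(\Mb;D')$ are each closed under the $*$-operation and generate $\Af^\kin(\Mb;O_1)$ and $\Af^\kin(\Mb;O_2)$ respectively; since each element of the first commutes with each element of the second, bilinearity of the commutator (together with continuity of the product, to pass to $C^*$-closures in the $C^*$ case) forces $[\Af^\kin(\Mb;O_1),\Af^\kin(\Mb;O_2)]=\{0\}$. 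As $\Mb$, $O_1$ and $O_2$ were arbitrary, $\Af$ obeys Einstein causality in full.

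The step most in need of care is the geometric bookkeeping in the second paragraph: one must verify that a truncated multi-diamond of the subspacetime $\Mb|_{O_i}$ is genuinely an admissible region of $\Mb$, with matching induced metric, orientation and time-orientation, so that the kinematic algebras are computed consistently and the inclusion morphisms factor as claimed; and that its Cauchy-surface topology is intrinsic, so that Theorem~\ref{thm:causality}(b) really does apply to the pair $\langle D,D'\rangle$ regarded inside $\Mb$. None of this is hard, but it is where the argument would break down if carried out sloppily.
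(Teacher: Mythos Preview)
Your proof is correct and follows essentially the same route as the paper's: reduce to truncated multidiamonds via additivity, then invoke Theorem~\ref{thm:causality}(b). The paper's version is terser --- it simply asserts that each $\Af^\kin(\widetilde{\Mb};\widetilde{O}_i)$ is generated by the $\Af^\kin(\widetilde{\Mb};D)$ for truncated multidiamonds $D\subset\widetilde{O}_i$, relegating the geometric bookkeeping you spell out in your second paragraph to a footnote citing the stability of (multi)-diamonds under $\Loc$ morphisms --- but the logical content is the same.
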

\begin{proof} \smartqed
Let $\langle\widetilde{O}_1,\widetilde{O}_2\rangle\in\OO(\widetilde{\Mb})$ be chosen
arbitrarily.  It follows from the additional hypothesis that each 
$\Af^\kin(\widetilde{\Mb};\widetilde{O}_i)$ is generated by subalgebras of the form $\Af^\kin(\widetilde{\Mb};D)$ where $D$ runs over the truncated multidiamonds in
$\widetilde{O}_i$.\footnote{Here, we use the stability
of (multi)-diamonds under $\Loc$ morphisms~
\cite[Lem.~2.8]{Br&Ru05}.} Applying Theorem~\ref{thm:causality}(b), it follows that
Einstein causality holds for $\widetilde{O}_1$ and $\widetilde{O}_2$. 
\end{proof} 
\begin{remark}
These results have an interesting consequence for free electromagnetism in $n=4$ dimensions.
Consider two observables, one of which is the magnetic flux $\Phi_1$ through a $2$-surface $S_1$ bounded by closed curve $C_1$, while the other is the electric flux $\Phi_2$ through $2$-surface $S_2$ bounded by $C_2$; 
we assume that these curves lie in the $t=0$ hyperplane and have thickenings $T_i$
that are causally disjoint.\footnote{The same arguments could be applied to more general
smearings of the field-strength.} Each observable can be written as a (gauge-invariant) line integral of suitable $1$-form potentials around the relevant bounding curve and it would be natural to expect that
$\Phi_i\in\Af^\kin(\Mb_0;U_i)$, where $U_i=D_{\Mb_0}(T_i)$. But these two algebras commute, 
while the commutator $[\Phi_1,\Phi_2]$ is proportional to the linking number of $C_1$ and $C_2$ \cite{Roberts:1977},
so at least one of these natural expectations is incorrect. Indeed, if the theory
respects electromagnetic duality as a local symmetry then neither $\Phi_1$ nor $\Phi_2$
can belong to the local algebra of the relevant thickened curve.\footnote{Of course, each $\Phi_i$  \emph{is} contained in the local algebra for regions containing the $2$-surfaces $S_i$.} Provided that $\Af$ maps spacetime embeddings
to injective maps, the algebras $\Af(\Mb_0|_{U_i})$ of the nonsimply
connected spacetimes $\Mb_0|_{U_i}$ also fail to contain observables corresponding
to the $\Phi_i$. 

Note that this conclusion required no discussion of how free electromagnetism should be
formulated in spacetimes other than Minkowski, beyond the requirements of local covariance
and the timeslice property. This explains why `topological observables'
are absent from quantizations of Maxwell theory obeying these properties \cite{FewLang:2014a}. To restore them, one must relax the assumption of local covariance to permit noninjective maps \cite{DappLang:2012,SandDappHack:2012,FewLang:2014a,BecSchSza:2014}.
\end{remark}

As mentioned above, the rigidity argument can be used for other purposes. 
For instance, the \emph{Schlieder property} \cite{Schlieder:1969} relates to the algebraic
independence of local algebras of spacelike separated regions: specifically, 
it demands that the product of elements taken from two such algebras can vanish
only if at least one of the elements vanishes.  In this case we will say that 
the Schlieder property holds for the given regions. Another example is 
\emph{extended locality}, which requires that local algebras corresponding to spacelike separated regions intersect only in multiples of the unit operator. In its original
formulation \cite{Schoch1968,Landau1969} extended locality 
was established for the local von Neumann algebras
of certain spacelike separated diamonds, under standard hypotheses of AQFT
plus an additional condition on the absence of translationally invariant quasi-local observables;
it is a necessary condition for the $C^*$-independence of the corresponding 
subalgebras~\cite[Def.~2.4]{Summers1990}.
Here we formulate extended locality in the category $\Alg$.  
\begin{theorem} \label{thm:Schlieder}
Let $\Af:\Loc\to\Alg$ obey local covariance and the timeslice condition. Then the statement of Theorem~\ref{thm:causality} holds with `Einstein causality'
replaced by (a) `the Schlieder property', or (b) `extended locality'. 
\end{theorem}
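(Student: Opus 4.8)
The plan is to exploit the observation recorded right after the proof of Theorem~\ref{thm:causality}: that argument never uses the content of Einstein causality, only the formal \emph{rigidity hypotheses} R1--R3. So for each of the two properties I would introduce, for every $\Mb\in\Loc$ and every ordered pair $\langle O_1,O_2\rangle\in\OO^{(2)}(\Mb)$, a proposition $P_\Mb(O_1,O_2)$ and verify R1--R3 for it. For part~(a) I take $P_\Mb(O_1,O_2)$ to be the Schlieder property for the pair, i.e.\ that $AB=0$ with $A\in\Af^\kin(\Mb;O_1)$ and $B\in\Af^\kin(\Mb;O_2)$ forces $A=0$ or $B=0$; for part~(b) I take $P_\Mb(O_1,O_2)$ to be extended locality, i.e.\ $\Af^\kin(\Mb;O_1)\cap\Af^\kin(\Mb;O_2)=\CC\II_{\Af(\Mb)}$. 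In both cases only the kinematic net enters, so the whole discussion stays inside $\Alg$, as the statement requires.

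First I would dispatch R1 and R3, which are as trivial as for Einstein causality: R1 is immediate from the timeslice identity~\eqref{eq:kin_timeslice}, $\Af^\kin(\Mb;O)=\Af^\kin(\Mb;D_\Mb(O))$, since both propositions are phrased purely in terms of kinematic algebras; and R3 follows from isotony~\eqref{eq:kin_isotony}, $\Af^\kin(\Mb;\widetilde O_i)\subset\Af^\kin(\Mb;O_i)$ --- for extended locality one also notes that the intersection always \emph{contains} $\CC\II$ because the kinematic algebras are unital, so the inclusion forced by isotony becomes an equality. The substantive check is R2. Given $\psi:\Mb\to\Nb$, I would combine the covariance relation~\eqref{eq:kin_covariance}, $\Af^\kin(\Nb;\psi(O_i))=\Af(\psi)\bigl(\Af^\kin(\Mb;O_i)\bigr)$, with the fact that $\Af(\psi)$ is an injective unit-preserving $*$-homomorphism (built into the definition of morphisms in $\Alg$). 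For the Schlieder case, a vanishing product in $\Nb$ of elements $\Af(\psi)A$, $\Af(\psi)B$ pulls back to $\Af(\psi)(AB)=0$, hence $AB=0$ by injectivity, and the implication runs both ways. For extended locality one additionally uses that an injective linear map respects intersections, $\Af(\psi)(X)\cap\Af(\psi)(Y)=\Af(\psi)(X\cap Y)$, together with $\Af(\psi)\bigl(\CC\II_{\Af(\Mb)}\bigr)=\CC\II_{\Af(\Nb)}$; then $P_\Nb(\psi(O_1),\psi(O_2))\iff P_\Mb(O_1,O_2)$ by cancelling the injective $\Af(\psi)$.

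With R1--R3 in hand, the conclusion follows by quoting the proof of Theorem~\ref{thm:causality} word for word: in case~(a) one threads $P_\Mb(O_1,O_2)$ along the chain of Cauchy morphisms furnished by Proposition~\ref{prop:Cauchy_chain} (passing through the spacetime $\Ib$ whose two components are the common Cauchy developments of the images of the $L_i$ and $\widetilde L_i$), and in case~(b) one first squeezes inside each $O_i$ a truncated multi-diamond with the right number of components, applies R3, and reduces to case~(a). The hard part --- such as it is --- will be R2 for extended locality, where one must be sure that the intersection of the $\Af(\psi)$-images is genuinely the image of the intersection; this is exactly the place where injectivity of $\Af(\psi)$ is indispensable, and it is worth flagging, since it is also why the argument (like that for Theorem~\ref{thm:causality}) breaks for theories whose structural morphisms are allowed to be non-injective, as happens for some gauge models. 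Everything else is bookkeeping already carried out above.
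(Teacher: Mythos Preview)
Your proposal is correct and follows essentially the same route as the paper: define $P_\Mb(O_1,O_2)$ for each property, note that R1 and R3 are immediate from \eqref{eq:kin_timeslice} and isotony exactly as for Einstein causality, and verify R2 using the covariance relation \eqref{eq:kin_covariance} together with injectivity of $\Af(\psi)$ --- including, in case~(b), the observation that an injective map respects intersections so that $\Af^\kin(\Nb;\psi(O_1))\cap\Af^\kin(\Nb;\psi(O_2))=\Af(\psi)\bigl(\Af^\kin(\Mb;O_1)\cap\Af^\kin(\Mb;O_2)\bigr)$. The rest is, as you say, quoting the proof of Theorem~\ref{thm:causality} verbatim.
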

\begin{proof}\smartqed 
Define $P_\Mb(O_1,O_2)$ to be the proposition that the Schlieder property
(in case (a)) or extended locality (in case (b)) holds for the
kinematic algebras associated with
$\langle O_1,O_2\rangle\in\OO^{(2)}(\Mb)$. To apply the argument in the proof of Theorem~\ref{thm:causality}
we need only check that the rigidity hypotheses R1--R3 hold. In
each case, R1 and R3 hold by the reasoning used for Einstein causality. 
To see that R2 holds in case (a), consider $\psi:\Mb\to\Nb$ and suppose
$A_i\in \Af^\kin(\Nb;\psi(O_i))$ obey $A_1A_2=0$. By   \eqref{eq:kin_covariance}, there exist $B_i\in\Af^\kin(\Mb;O_i)$ such that
$A_i=\Af(\psi)B_i$ for $B_i\in\Af^\kin(\Mb;O_i)$, which necessarily
obey $B_1 B_2=0$ because $\Af(\psi)$ 
is an injective algebra homomorphism. It follows that  
$P_\Mb(O_1,O_2)\implies P_\Nb(\psi(O_1),\psi(O_2))$. The converse is proved similarly.

In the case (b), injectivity of $\Af(\psi)$ and the covariance property \eqref{eq:kin_covariance} give  
\begin{equation}
\Af^\kin(\Nb;\psi(O_1))\cap  \Af(\Nb;\psi(O_2)) = 
\Af(\psi)\left(\Af^\kin(\Mb;O_1)\cap
\Af^\kin(\Mb;O_2)\right)
\end{equation}
and R2 is immediate.
\end{proof}
If Einstein causality and the Schlieder property both hold for $\langle O_1,O_2\rangle\in\OO^{(2)}(\Mb)$, then there is an $\Alg$-isomorphism
\begin{align}\label{eq:Roos}
\Af^\kin(\Mb;O_1)\odot \Af^\kin(\Mb;O_2) &\longrightarrow
\Af^\kin(\Mb;O_1)\vee \Af^\kin(\Mb;O_2)  
 \nonumber \\
\sum_{i} A_i\odot B_i &\longmapsto \sum_i A_i B_i
\end{align}
as shown by Roos~\cite{Roos1970}. Here, 
$\odot$ denotes the algebraic tensor product. 
If $\Af$ is finitely additive then the subalgebra on the right-hand side of
\eqref{eq:Roos} can be replaced by $\Af^\kin(\Mb;O_1\cup O_2)$,
which is isomorphic to $\Af(\Mb|_{O_1\cup O_2})$.
Now the spacetime $\Mb_{O_1\cup O_2}$ is $\Loc$-isomorphic to
(but distinct from) the disjoint union $\Mb|_{O_1}\sqcup \Mb|_{O_2}$,
so $\Af(\Mb|_{O_1\cup O_2})\cong\Af(\Mb|_{O_1}\sqcup \Mb|_{O_2})$.
The upshot is that there is an isomorphism 
\begin{equation}\label{eq:monoidal}
\Af(\Mb|_{O_1}) \odot \Af(\Mb|_{O_2}) \cong \Af(\Mb|_{O_1}\sqcup \Mb|_{O_2}).
\end{equation}
This idea may be 
extended to show that $\Af$ is a monoidal functor
between $\Loc$ (with $\sqcup$ as the monoidal product, and extended
to include an empty spacetime as the monoidal unit) and $\Alg$ (with
the algebraic tensor product); see \cite{BrFrImRe:2014}, which, however, proceeds from different assumptions. Note that the
monoidal property is not just a restatement of Einstein causality; as
shown above, it involves additional properties, notably
the Schlieder property (or, as in \cite{BrFrImRe:2014}, a form
of the split property).

In the $C^*$-algebraic setting, with $\Af:\Loc\to\CAlg$, the statements of
Theorems~\ref{thm:causality} and~\ref{thm:Schlieder} go through without change. 
However the isomorphism \eqref{eq:Roos} remains at the algebraic level, and
further conditions are needed to determine whether it can be extended to a 
$\CAlg$-isomorphism between a $C^*$-tensor product and the $C^*$-algebra
generated by the local algebras. In this context, it is most natural
to employ the minimal $C^*$-tensor product -- we refer to
\cite{BrFrImRe:2014} for more discussion.

\section{Analogues of the Reeh--Schlieder theorem and split property}

In this section, we discuss the (partial) Reeh--Schlieder and split properties
described in sect.~\ref{sect:states} in greater detail. In particular, 
we show how spacetime deformation arguments 
can be used to deduce the existence of states with (partial) Reeh--Schlieder and split properties 
on a spacetime of interest, if such states exist on a spacetime
to which it can be linked by Cauchy morphisms. The arguments are based
on those of \cite{Verch_nucspldua:1993} (for split) and~\cite{Verch:1993} (for
Reeh--Schlieder) which applied to the Klein--Gordon theory. A general treatment of Reeh--Schlieder results for locally covariant 
quantum field theories was given by Sanders~\cite{Sanders_ReehSchlieder}. 
Our treatment follows \cite{Few_split:2015}, in which the geometrical underpinnings of these
arguments were placed into a common streamlined form, yielding states
that have both the split and (partial) Reeh--Schlieder properties, in general locally
covariant theories. The Reeh-Schlieder theorem implies the ubiquity of long-range correlations in quantum field theory,
which, among other things, lead generically to entanglement across acausally separated regions or spacetime horizons
\cite{Haag,VerchWerner,Wald-CorrelBeyHorizon}. The split property implies, on the contrary, that it is also possible to
fully isolate a local system in quantum field theory such that it has no correlations with its environment and that the 
states by which this can be achieved lie locally in the folium of physical states \cite{Haag}.

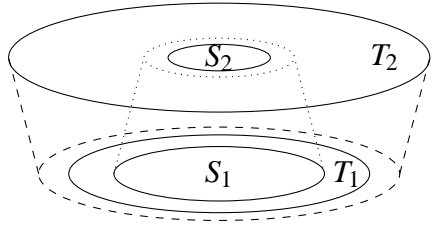
\begin{figure}[t]
\tdplotsetmaincoords{75}{90}
\pgfmathsetmacro{\rvec}{.8}
\pgfmathsetmacro{\thetavec}{15}
\pgfmathsetmacro{\phivec}{60}
\begin{center}
\begin{tikzpicture}[scale=4,tdplot_main_coords]
\coordinate (O) at (0,0,0);
\coordinate (Q) at (0,0,0.4);
\tdplotdrawarc{(O)}{0.35}{0}{360}{anchor=north}{}
\tdplotdrawarc{(O)}{0.5}{0}{360}{anchor=north}{}
\tdplotdrawarc[dashed]{(O)}{0.6}{0}{360}{anchor=north}{}
\tdplotdrawarc{(Q)}{0.17}{0}{360}{anchor=north}{}
\tdplotdrawarc[dotted]{(Q)}{0.25}{0}{360}{anchor=north}{}
\tdplotdrawarc{(Q)}{0.7}{0}{360}{anchor=north}{}
\draw[dotted] (0,0.35,0) -- (0,0.25,0.4);
\draw[dotted] (0,-0.35,0) -- (0,-0.25,0.4);
\draw[dashed] (0,0.6,0) -- (0,0.7,0.4);
\draw[dashed] (0,-0.6,0) -- (0,-0.7,0.4);
\node at (0,0.0,0.0) {$S_1$};
\node at (0,0.425,0) {$T_1$};
\node at (0,0.55,0.4) {$T_2$};
\node at (0,0.0,0.4) {$S_2$};
\end{tikzpicture}
\end{center}
\caption{\small Regular Cauchy pairs with $(S_1,T_1)\prec (S_2,T_2)$. Dotted (resp., dashed) lines
indicate relevant portions of $D_\Mb(S_1)$ (resp., $D_\Mb(T_2)$).}
\label{fig:preorder}
\end{figure}
We will make use of some particular subsets of Cauchy surfaces. 
\begin{definition} Let $\Mb\in\Loc$. A \emph{regular Cauchy pair} $(S,T)$ in $\Mb$
is an ordered pair of subsets of $\Mb$, that are nonempty open, relatively compact subsets of a common smooth spacelike Cauchy surface in which $\overline{T}$ has nonempty complement, and so that $\overline{S}\subset T$.
There is a preorder on regular Cauchy pairs so that $(S_1,T_1)\prec (S_2,T_2)$
if and only if $S_2\subset D_\Mb(S_1)$ and $T_1\subset D_\Mb(T_2)$ 
(see Fig.~\ref{fig:preorder}).\footnote{
The preorder is not a partial order, because $(S_1,T_1)\prec (S_2,T_2)\prec (S_1,T_1)$ implies $D_\Mb(S_1)=D_\Mb(S_2)$ and $D_\Mb(T_1)=D_\Mb(T_2)$, but not necessarily  $S_1=S_2$ and $T_1=T_2$.}
\end{definition}
These conditions ensure that $D_\Mb(S)$ and $D_\Mb(T)$ are 
open and casually convex, and hence elements of $\OO(\Mb)$.
Moreover, if $\psi:\Mb\to\Nb$ is a Cauchy morphism, then a pair of subsets $(S,T)$ of $\Mb$ is
a regular Cauchy pair if and only if $(\psi(S),\psi(T))$ is a regular Cauchy pair for $\Nb$.
The main property of the preorder that will be used is: 
\begin{lemma}{\cite{Few_split:2015}} \label{lem:step}
Suppose that $\Mb$ takes standard form with underlying manifold $\RR\times\Sigma$, 
and that $(S,T)$ is a regular Cauchy pair in $\Mb$, lying in the surface $\{t\}\times\Sigma$. 
Then there exists an $\epsilon>0$ such that every Cauchy surface $\{t'\}\times\Sigma$ with $|t'-t|<\epsilon$ contains a regular Cauchy pair preceding $(S,T)$ and also a regular Cauchy pair preceded by $(S,T)$.  
\end{lemma}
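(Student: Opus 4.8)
The plan is to produce the required neighbouring Cauchy pairs by slightly shrinking or enlarging $S$ and $T$ inside $\Sigma$ and then transplanting them onto the surface $\{t'\}\times\Sigma$ through the product structure $\RR\times\Sigma$ of the standard form; all the substantive work lies in controlling how Cauchy developments respond to such a small spatial deformation combined with a small time shift. First I would fix a complete auxiliary Riemannian metric on $\Sigma$ with distance function $d$, and identify (via the product structure) the components of the given pair with open relatively compact subsets $S,T\subseteq\Sigma$, so that the pair itself is $(\{t\}\times S,\{t\}\times T)$ in $\Mb$. For open $A\subseteq\Sigma$ and $\rho>0$ set $A^{\rho}=\{x:d(x,A)<\rho\}$ and $A_{\rho}=\{x\in A:d(x,\Sigma\setminus A)>\rho\}$. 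Using that $\overline S$ is compact, $\overline S\subset T$, $d(\overline S,\Sigma\setminus T)>0$ and $\Sigma\setminus\overline T$ is nonempty open, one fixes a single sufficiently small $\rho>0$ for which $S_\rho$ and $T_\rho$ are nonempty, $\overline{S_\rho}\subset S$, $\overline{T_\rho}\subset T$, $\overline{S^\rho}\subset T_\rho$ and $\Sigma\setminus\overline{T^\rho}\neq\emptyset$ (completeness of the auxiliary metric ensures $\overline{S^\rho}$ and $\overline{T^\rho}$ are compact). Since each $\{t'\}\times\Sigma$ is by hypothesis a smooth spacelike Cauchy surface, these properties show that for \emph{every} $t'$ both $(\{t'\}\times S_\rho,\{t'\}\times T^\rho)$ and $(\{t'\}\times S^\rho,\{t'\}\times T_\rho)$ are regular Cauchy pairs in $\Mb$.

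Next I would isolate two geometric assertions, to be applied with the pair $(A,C)$ taken successively to be $(S_\rho,S)$, $(T_\rho,T)$, $(T,T^\rho)$ and $(S,S^\rho)$. \emph{(i) A tube estimate:} if $A,C$ are open relatively compact subsets of $\Sigma$ with $\overline A\subset C$ and $C$ is a component of a regular Cauchy pair on $\{t\}\times\Sigma$, then $D_\Mb(\{t\}\times C)$ is open (as recalled just before the Lemma) and meets $\{t\}\times\Sigma$ precisely in $\{t\}\times C\supset\{t\}\times\overline A$, so $\{t\}\times\overline A$ is a compact subset of the open set $D_\Mb(\{t\}\times C)$; a tube-lemma argument on $\RR\times\Sigma$ then yields $\epsilon>0$ with $\{t'\}\times A\subset D_\Mb(\{t\}\times C)$ for all $|t'-t|<\epsilon$. \emph{(ii) Causal rigidity under a time shift:} if $A,C$ are open relatively compact with $\overline A\subset C$, then there is $\epsilon>0$ with $\{t\}\times A\subset D_\Mb(\{t'\}\times C)$ for all $|t'-t|<\epsilon$. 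To prove (ii) it suffices, by time reversal, to treat $t'>t$ and to show $J^{+}_\Mb(\{t\}\times\overline A)\cap(\{t'\}\times\Sigma)\subseteq\{t'\}\times C$ for $t'$ near $t$, since then every future-inextendible causal curve from a point of $\{t\}\times A$ meets the Cauchy surface $\{t'\}\times\Sigma$ at a point of $J^{+}_\Mb(\{t\}\times\overline A)$, hence of $\{t'\}\times C$, placing the original point in $D^{-}_\Mb(\{t'\}\times C)$. That inclusion I would establish by contradiction: offending points $(t'_n,y_n)\in J^{+}_\Mb(\{t\}\times\overline A)$ with $t'_n\downarrow t$ and $y_n\notin C$ are joined to $\{t\}\times\overline A$ by future-directed causal curves which must leave $\RR\times W$ for a fixed relatively compact open $W$ with $\overline A\subset W\subset\overline W\subset C$; their first exit points through $\RR\times\partial W$ lie in the compact set $[t,t'_1]\times\overline W$, have $t$-coordinate tending to $t$ because $t$ is nondecreasing along future-directed causal curves in standard form, and lie in $J^{+}_\Mb(\{t\}\times\overline A)$, which is closed by global hyperbolicity; a limit point then lies in $J^{+}_\Mb(\{t\}\times\overline A)\cap(\{t\}\times\Sigma)$, and this intersection equals $\{t\}\times\overline A$ because $t$ is \emph{strictly} increasing along nonconstant future-directed causal curves ($\partial/\partial t$ being timelike, future-directed and orthogonal to the slices) --- contradicting that the limit point lies on $\partial W$.

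Finally I would assemble the conclusion. Apply (i) with $(A,C)=(S_\rho,S)$ and with $(A,C)=(T_\rho,T)$, and apply (ii) with $(A,C)=(T,T^\rho)$ and with $(A,C)=(S,S^\rho)$ (all admissible, since $\overline{S_\rho}\subset S$, $\overline{T_\rho}\subset T$, $\overline T\subset T^\rho$, $\overline S\subset S^\rho$), and let $\epsilon>0$ be the minimum of the finitely many constants so obtained. Then for every $t'$ with $|t'-t|<\epsilon$, the regular Cauchy pair $(\{t'\}\times S_\rho,\{t'\}\times T^\rho)$ satisfies $\{t'\}\times S_\rho\subset D_\Mb(\{t\}\times S)$ and $\{t\}\times T\subset D_\Mb(\{t'\}\times T^\rho)$, i.e.\ it is preceded by $(S,T)$; while the regular Cauchy pair $(\{t'\}\times S^\rho,\{t'\}\times T_\rho)$ satisfies $\{t\}\times S\subset D_\Mb(\{t'\}\times S^\rho)$ and $\{t'\}\times T_\rho\subset D_\Mb(\{t\}\times T)$, i.e.\ it precedes $(S,T)$ --- exactly the two pairs demanded. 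I expect the only real obstacle to be assertion (ii): it is the sole place where the causal structure is used in an essential way, and making the ``first exit point'' compactness argument clean requires care with the standard-form facts that causal futures of compact sets are closed and that the coordinate $t$ is a Cauchy time function strictly increasing along causal curves; everything else is soft point-set topology on $\RR\times\Sigma$.
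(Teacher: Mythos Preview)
The paper does not actually prove this lemma; it merely states it and cites \cite{Few_split:2015} for the proof, so there is no in-paper argument to compare against. That said, your proposal is correct and follows the natural line one would expect in the cited reference: shrink and enlarge $S$ and $T$ by a fixed $\rho$ in an auxiliary complete Riemannian metric on $\Sigma$, transplant to nearby leaves via the product structure, and control the preorder by the two estimates (i) and (ii). Your verification that the four candidate sets form regular Cauchy pairs on every leaf is sound, the tube-lemma step (i) is routine, and your compactness/limit-point argument for (ii)---using closedness of $J^+_\Mb(\{t\}\times\overline A)$ in globally hyperbolic spacetimes and acausality of the spacelike Cauchy leaf---is the standard way to handle the one genuinely causal step. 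The only places to tidy are cosmetic: make explicit that acausality of $\{t\}\times\Sigma$ (not merely achronality) is what gives $J^+_\Mb(\{t\}\times\overline A)\cap(\{t\}\times\Sigma)=\{t\}\times\overline A$, and note that the ``time reversal'' in (ii) produces a separate $\epsilon$ for $t'<t$ which you then combine with the $t'>t$ case. None of this affects correctness.
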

Clearly, $\epsilon$ may be chosen uniformly for any finite collection of regular Cauchy pairs in the Cauchy surface $\{t\}\times\Sigma$.

In this section we consider a general locally covariant theory $\Af:\Loc\to\CAlg$, 
because the properties we describe are most naturally given in the $C^*$-context. 
We assume throughout that $\Af$ has the timeslice property and obeys Einstein causality.  
\begin{definition}
Let $\Mb\in\Loc$ and suppose that $\omega$ is a state of $\Af(\Mb)$ with  
GNS representation $(\HH_\omega,\pi_\omega,\Omega_\omega)$. Then $\omega$ 
is said to have the \emph{Reeh--Schlieder property}
for a regular Cauchy pair $(S,T)$ if the GNS vector $\Omega_\omega$ is cyclic for
$\Rc_S$ and separating for $\Rc_T$,\footnote{That is, we require $\Rc_S\Omega_\omega$ to be dense in $\HH_\omega$ and $\Rc_T\owns A\mapsto A\Omega_\omega\in\HH_\omega$ to be injective.}
where $\Rc_U = \pi_\omega(\Af^\kin(\Mb;D_\Mb(U)))''$
denotes the local von Neumann algebra corresponding to $U=S,T$. 
For brevity, we will sometimes say that \emph{$\omega$ is Reeh--Schlieder for $(S,T)$}.

The state $\omega$ is said to have the \emph{split property}
for $(S,T)$ (or to be `split for' $(S,T)$) 
if there is a type-$\text{I}$ factor
$\Nc$ such that $\Rc_S\subset \Nc \subset  \Rc_T$.  
\end{definition}
\begin{remark}\label{rem:RS}
If a vector is separating for an algebra, it is separating for any subalgebra
thereof; if it is cyclic for an algebra, it is cyclic for any algebra of which it is a subalgebra. 
Thus,  if $\omega$ has the Reeh--Schlieder property for $(S,T)$ then 
it does for every $(\tilde{S},\tilde{T})$ with $(\tilde{S},\tilde{T})\prec (S,T)$. 
Similarly, if $\omega$ has the split property for $(S,T)$ then 
it does for every $(\tilde{S},\tilde{T})$ with $(S,T)\prec (\tilde{S},\tilde{T})$,
because $\Rc_{\tilde{S}}\subset \Rc_S\subset\Nc\subset \Rc_T\subset
\Rc_{\tilde{T}}$. 
\end{remark}

The proof of Theorem~\ref{thm:splitRS} below relies on  a careful geometric construction 
together with the following result, which follows easily from the uniqueness of
the GNS representation~\cite{Few_split:2015}:
\begin{lemma}\label{lem:RS} 
Let $(S,T)$ be a regular
Cauchy pair in $\Mb\in\Loc$ and suppose 
$\psi:\Mb\to\Nb$ is Cauchy.  (a) A state $\omega_\Nb$ on $\Af(\Nb)$ is Reeh--Schlieder for a regular Cauchy pair $(\psi(S),\psi(T))$ if and only if $\Af(\psi)^*\omega_\Nb$ is
Reeh--Schlieder for $(S,T)$. As $\Af(\psi)$ is an isomorphism, 
this implies that $\omega_\Mb$ is 
Reeh--Schlieder for $(S,T)$ if and only if $(\Af(\psi)^{-1})^*\omega_\Mb$
is Reeh--Schlieder for $(\psi(S),\psi(T))$.
(b) The previous statement also holds if `Reeh--Schlieder' is replaced by `split'. 
\end{lemma}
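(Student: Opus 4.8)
The plan is to show that, once we fix a convenient realisation of the GNS representation, the pull‑back state $\Af(\psi)^*\omega_\Nb$ sees \emph{literally the same} GNS vector and \emph{literally the same} local von Neumann algebras as $\omega_\Nb$ does for the transported Cauchy pair $(\psi(S),\psi(T))$; all three conditions in play (cyclicity, the separating property, and the existence of an interpolating type‑$\mathrm{I}$ factor) then hold for one description precisely when they hold for the other.

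First I would set up the GNS realisation. Since $\psi$ is Cauchy, Assumption~\ref{ax:timeslice} makes $\Af(\psi):\Af(\Mb)\to\Af(\Nb)$ an isomorphism in $\Alg$. Hence $\bigl(\HH_{\omega_\Nb},\,\pi_{\omega_\Nb}\circ\Af(\psi),\,\Omega_{\omega_\Nb}\bigr)$ is a cyclic $*$‑representation of $\Af(\Mb)$ with vector state $\Af(\psi)^*\omega_\Nb$, so by uniqueness of the GNS representation we may identify it with the GNS representation of $\Af(\psi)^*\omega_\Nb$. Working in this realisation, for $U=S,T$ the local von Neumann algebra $\Rc_U$ associated with $\Af(\psi)^*\omega_\Nb$ is
\[
\bigl(\pi_{\omega_\Nb}\!\bigl(\Af(\psi)(\Af^\kin(\Mb;D_\Mb(U)))\bigr)\bigr)''
=\bigl(\pi_{\omega_\Nb}(\Af^\kin(\Nb;\psi(D_\Mb(U))))\bigr)''
\]
by covariance of the kinematic net \eqref{eq:kin_covariance}.

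Next comes the geometric identification $\Af^\kin(\Nb;\psi(D_\Mb(U)))=\Af^\kin(\Nb;D_\Nb(\psi(U)))$. By \eqref{eq:kin_timeslice} this reduces to $D_\Nb(\psi(D_\Mb(U)))=D_\Nb(\psi(U))$, and since $\psi(U)\subseteq\psi(D_\Mb(U))$ the only nontrivial point is the inclusion $\psi(D_\Mb(U))\subseteq D_\Nb(\psi(U))$. This I expect to be \emph{the main obstacle}, though it is a standard property of Cauchy morphisms: because $\psi(\Mb)$ is open, causally convex and contains a Cauchy surface of $\Nb$, every Cauchy surface of $\psi(\Mb)$ — in particular $\psi(\Sigma)$, where $\Sigma$ is the Cauchy surface of $\Mb$ containing $U$ — is a Cauchy surface of $\Nb$; then, for $p\in D^\pm_\Mb(U)$, the point $\psi(p)$ lies in $J^\pm_\Nb(\psi(\Sigma))$, a past/future‑inextendible causal curve of $\Nb$ through $\psi(p)$ meets $\psi(\Sigma)$, the segment up to that crossing descends through $\psi^{-1}$ (causal convexity of $\psi(\Mb)$) to a causal curve of $\Mb$ which, extended to be inextendible, must meet $U\subset\Sigma$ — necessarily at its unique crossing with $\Sigma$, whose image lies on the original curve — so $\psi(p)\in D^\pm_\Nb(\psi(U))$. (If one prefers, this inclusion can simply be quoted from the spacetime‑deformation references rather than reproved.) Consequently $\Rc_S$ and $\Rc_T$ computed from $\bigl(\Af(\psi)^*\omega_\Nb,(S,T)\bigr)$ coincide, as subalgebras of $\Bc(\HH_{\omega_\Nb})$, with those computed from $\bigl(\omega_\Nb,(\psi(S),\psi(T))\bigr)$, while the GNS vector is the same $\Omega_{\omega_\Nb}$.

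Finally I would simply read off the conclusions: the density of $\Rc_S\Omega_{\omega_\Nb}$ in $\HH_{\omega_\Nb}$, the injectivity of $A\mapsto A\Omega_{\omega_\Nb}$ on $\Rc_T$, and the existence of a type‑$\mathrm{I}$ factor $\Nc$ with $\Rc_S\subset\Nc\subset\Rc_T$ are verbatim the same statements in the two descriptions, which gives the `if and only if' of (a) and, since it involves the very same algebras on the very same Hilbert space, of (b). The second assertion of (a) follows by applying the first to the state $\omega_\Nb:=(\Af(\psi)^{-1})^*\omega_\Mb$, for which $\Af(\psi)^*\omega_\Nb=\omega_\Mb$.
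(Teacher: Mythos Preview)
Your proof is correct and follows precisely the route the paper indicates: the paper does not spell out an argument but states that the lemma ``follows easily from the uniqueness of the GNS representation'' and refers to~\cite{Few_split:2015}. You have supplied exactly that argument, together with the geometric identification $D_\Nb(\psi(D_\Mb(U)))=D_\Nb(\psi(U))$ needed to match the local von Neumann algebras; this extra detail is sound and goes slightly beyond what the paper records in-line.
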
 
The main result of this section is: 
\begin{theorem} \label{thm:splitRS}
Let $\Mb,\Nb\in\Loc$ have oriented-diffeomorphic Cauchy surfaces and suppose $\omega_\Nb$ is a state on $\Af(\Nb)$ 
that has the Reeh--Schlieder and split properties for all regular Cauchy pairs.  
Given any regular Cauchy pair $(S_\Mb,T_\Mb)$ in $\Mb$, there is a chain of Cauchy morphisms
between $\Mb$ and $\Nb$ inducing an isomorphism $\nu:\Af(\Mb)\to\Af(\Nb)$ such
that $\omega_\Mb =\nu^*\omega_\Nb$ is Reeh--Schlieder and split for $(S_\Mb,T_\Mb)$.  
\end{theorem}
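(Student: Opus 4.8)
The plan is to realise $\nu$ as a chain of Cauchy morphisms of the type given by Proposition~\ref{prop:Cauchy_chain}, but built carefully enough that the pair $(S_\Mb,T_\Mb)$ can be followed through it. First I would use Proposition~\ref{prop:BS} to put $\Mb$ into a standard form on $\RR\times\Sigma$ in which $(S_\Mb,T_\Mb)$ lies in the slice $\{0\}\times\Sigma$, and put $\Nb$ into a standard form on the \emph{same} $\RR\times\Sigma$ (possible because $\Mb$ and $\Nb$ have oriented-diffeomorphic Cauchy surfaces, and all spacelike Cauchy surfaces of a given spacetime are oriented-diffeomorphic). Then I would construct, directly in standard form, an interpolating object $\Ib\in\Loc$ on $\RR\times\Sigma$ whose metric agrees with that of $\Mb$ on $(-a,a)\times\Sigma$ and with a time-translate of that of $\Nb$ on $(t_2-b,t_2+b)\times\Sigma$, for suitable $a,b>0$ and large $t_2$; the existence of such an interpolation, keeping global hyperbolicity and the constant-$t$ slices Cauchy, is the content of the deformation lemma underlying Proposition~\ref{prop:Cauchy_chain}~\cite{FewVer:dynloc_theory}. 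Taking $\alpha\colon\Mb|_{(-a,a)\times\Sigma}\to\Mb$ and $\beta\colon\Mb|_{(-a,a)\times\Sigma}\to\Ib$ to be the two inclusions, and $\gamma\colon\Fb\to\Ib$, $\delta\colon\Fb\to\Nb$ (where $\Fb=\Nb|_{(t_2-b,t_2+b)\times\Sigma}$, suitably translated) likewise, all four are Cauchy morphisms; moreover $(S_\Ib,T_\Ib):=\beta(\alpha^{-1}(S_\Mb,T_\Mb))$ sits in the slice $\{0\}\times\Sigma$ of the standard form of $\Ib$, and $\gamma(\Fb)\supseteq\{t_2\}\times\Sigma$.

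Set $\nu=\Af(\delta)\circ\Af(\gamma)^{-1}\circ\Af(\beta)\circ\Af(\alpha)^{-1}$, $\omega_\Mb=\nu^*\omega_\Nb$, and $\omega_\Ib=(\Af(\gamma)^{-1})^*\Af(\delta)^*\omega_\Nb$. Since the property of being a regular Cauchy pair is preserved and reflected by Cauchy morphisms (noted after Lemma~\ref{lem:step}) and $\omega_\Nb$ is Reeh--Schlieder and split for \emph{every} regular Cauchy pair of $\Nb$, two applications of Lemma~\ref{lem:RS}, first to $\delta$ and then to $\gamma$, show that $\omega_\Ib$ is Reeh--Schlieder and split for every regular Cauchy pair of $\Ib$ whose closure lies inside $\gamma(\Fb)$; in particular for every regular Cauchy pair lying in the slice $\{t_2\}\times\Sigma$.

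It remains to transfer these two properties back to $(S_\Ib,T_\Ib)$ in the slice $\{0\}\times\Sigma$. Assuming without loss of generality that $t_2>0$, Lemma~\ref{lem:step} together with transitivity (and reflexivity) of $\prec$ and a standard finite-chain argument along $[0,t_2]$ shows that $\{t_2\}\times\Sigma$ contains regular Cauchy pairs $(S^-,T^-)\prec(S_\Ib,T_\Ib)$ and $(S_\Ib,T_\Ib)\prec(S^+,T^+)$. Since $\omega_\Ib$ is split for $(S^-,T^-)$ and Reeh--Schlieder for $(S^+,T^+)$, Remark~\ref{rem:RS} yields that $\omega_\Ib$ is both split and Reeh--Schlieder for $(S_\Ib,T_\Ib)$. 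Finally, applying Lemma~\ref{lem:RS} to $\beta$ and then to $\alpha$ transports both properties to $\omega_\Mb=(\Af(\alpha)^{-1})^*\Af(\beta)^*\omega_\Ib$ for the pair $\alpha(\beta^{-1}(S_\Ib,T_\Ib))=(S_\Mb,T_\Mb)$, which is the assertion.

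I expect the main obstacle to be the geometric construction of the first paragraph: one must produce the interpolating spacetime $\Ib$ and the four Cauchy morphisms so that a \emph{single} standard form of $\Ib$ carries $(S_\Ib,T_\Ib)$ on its zero-slice and, at a later constant-time slice, looks isometrically like $\Nb$, so that $\gamma$ is both isometric and Cauchy. The streamlined version of this construction, with the detailed verifications, is carried out in~\cite{Few_split:2015} (building on~\cite{FewVer:dynloc_theory}). The remainder is bookkeeping with Lemmas~\ref{lem:RS} and~\ref{lem:step} and Remark~\ref{rem:RS}; the only conceptual subtlety there is that the Reeh--Schlieder and split properties have \emph{opposite} monotonicity under $\prec$, which forces one to walk both an increasing and a decreasing chain of regular Cauchy pairs up to the common slice $\{t_2\}\times\Sigma$ --- which is possible precisely because $\omega_\Nb$ is assumed to enjoy \emph{both} properties for \emph{all} regular Cauchy pairs.
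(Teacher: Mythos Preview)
Your overall strategy matches the paper's: put both spacetimes in standard form on $\RR\times\Sigma$, build an interpolating $\Ib$ with a chain of Cauchy morphisms, pull $\omega_\Nb$ back to $\omega_\Ib$, use that the Reeh--Schlieder and split properties are monotone under $\prec$ in opposite directions, and finally pull back to $\Mb$. Your identification of the opposite monotonicities as the conceptual point is exactly right.

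The gap is in your ``standard finite-chain argument''. Because $\prec$ is transitive, any chain $(S_\Ib,T_\Ib)\prec(S_1,T_1)\prec\cdots\prec(S^+,T^+)$ collapses to $(S_\Ib,T_\Ib)\prec(S^+,T^+)$, which in particular forces $S^+\subset D_\Ib(S_\Ib)$. But $D_\Ib(S_\Ib)$ is the Cauchy development of a relatively compact set with nontrivial edge, hence itself relatively compact; once $t_2$ lies beyond its future extent there is \emph{no} regular Cauchy pair on $\{t_2\}\times\Sigma$ succeeding $(S_\Ib,T_\Ib)$, however many intermediate steps you insert. The dual obstruction $T^-\subset D_\Ib(T_\Ib)$ blocks the $-$ chain in the same way. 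So iterating Lemma~\ref{lem:step} cannot bridge a generic interpolation region; the cumulative reach of the iteration is bounded by the time-extent of $D_\Ib(S_\Ib)$ and $D_\Ib(T_\Ib)$, which your construction of $\Ib$ does not control.

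The paper avoids this by reversing the order of operations. It first applies Lemma~\ref{lem:step} once in $\Mb$ to obtain pairs $(\sS,\sT)\prec_\Mb(S_\Mb,T_\Mb)\prec_\Mb(S_*,T_*)$ on a nearby slice $\{t_*\}\times\Sigma$, and then once in $\Nb$ (viewing those same subsets of $\Sigma$ as Cauchy pairs for $\Nb$) to obtain pairs on $\{t_\Nb\}\times\Sigma$ related to them under $\prec_\Nb$. Only after these four preorder relations are fixed is $\Ib$ constructed, and the point of the construction in~\cite{Few_split:2015} is precisely to choose the interpolating metric so that all four relations persist as $\prec_\Ib$-relations. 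The chain in $\Ib$ therefore has length two on each side and holds \emph{by construction}, not by iteration. What you called bookkeeping is in fact the geometric heart of the argument: the interpolation must be tailored to the chosen pairs, not built first and then walked through.
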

\begin{proof}\smartqed (Sketch) Assume, without loss,
that $\Mb$ is in standard form $\Mb=(\RR\times\Sigma,g_\Mb,\ogth,\tgth_\Mb)$ so
that $S_\Mb$ and $T_\Mb$ are contained in the Cauchy surface  $\{t_\Mb\}\times\Sigma$
for some $t_\Mb\in\RR$, and that $\Nb$ is also in standard form with
$\Nb=(\RR\times\Sigma,g_\Nb,\ogth,\tgth_\Nb)$. 

By Lemma~\ref{lem:step} there exists $t_*>t_\Mb$ so that $\{t_*\}\times\Sigma$ contains 
regular Cauchy pairs $(\sS ,\sT)$ and $(S_*,T_*)$ with
\begin{equation}\label{eq:splitRSM}
(\sS ,\sT)\prec_\Mb (S_\Mb,T_\Mb)\prec_\Mb (S_*,T_*) ,
\end{equation}
where $\prec_\Mb$ indicates the preorder given by the causal structure of $\Mb$. 
As both $(\sS ,\sT)$ and $(S_*,T_*)$  are regular Cauchy pairs in a common
Cauchy surface of $\Nb$, we may also choose 
$t_\Nb>t_*$ so that $\{t_\Nb\}\times\Sigma$
contains regular Cauchy pairs $(S_\Nb,T_\Nb)$ and $(\leftidx{_\Nb}{}{S},\leftidx{_\Nb}{}{T})$
obeying
\begin{equation}\label{eq:splitRSN}
(\leftidx{_\Nb}{}{S},\leftidx{_\Nb}{}{T})\prec_\Nb (\sS ,\sT),\qquad
(S_*,T_*)\prec_\Nb (S_\Nb,T_\Nb).
\end{equation}
Using the method of Proposition~\ref{prop:Cauchy_chain}, an interpolating metric $g_\Ib$ may now 
be constructed (see~\cite{Few_split:2015}) so that
\begin{itemize}
\item $\Ib=(\RR\times\Sigma,g_\Ib,\ogth,\tgth_\Ib)$ is a $\Loc$-spacetime in standard form;
\item there is a chain of Cauchy morphisms of the form~\eqref{eq:Cauchy_chain}
between $\Mb$ and $\Nb$, via $\Ib$;
\item the orderings \eqref{eq:splitRSM} and \eqref{eq:splitRSN}
hold with $\prec_\Mb$, $\prec_\Nb$ replaced by $\prec_\Ib$.
\end{itemize}
The last item, together with transitivity of $\prec_\Ib$, entails
\begin{equation}\label{eq:splitRSI}
(\leftidx{_\Nb}{}{S},\leftidx{_\Nb}{}{T})\prec_\Ib (S_\Mb,T_\Mb)
\prec_\Ib  (S_\Nb,T_\Nb).
\end{equation}

Now $\omega_\Nb$ has the Reeh--Schlieder property for $(S_\Nb,T_\Nb)$ 
and is split for $(\leftidx{_\Nb}{}{S},\leftidx{_\Nb}{}{T})$ in $\Nb$,
and hence the same is true for $\Af(\delta)^*\omega_\Nb$ in $\Fb$
and  for $(\Af(\gamma)^{-1})^*\Af(\delta)^*\omega_\Nb$ in
$\Ib$, using Lemma~\ref{lem:RS} twice.  
By \eqref{eq:splitRSI} and Remarks~\ref{rem:RS} the latter state is both
Reeh--Schlieder and split for $(S_\Mb,T_\Mb)$, as a regular Cauchy pair in $\Ib$. 
Using Lemma~\ref{lem:RS} twice again, the same is true for 
$$ \Af(\beta)^*(\Af(\gamma)^{-1})^*\Af(\delta)^*\omega_\Nb$$
in $\Pb$
and finally for $\nu^*\omega_\Nb$ in $\Mb$, where $\nu= \Af(\delta)\circ\Af(\gamma)^{-1}\circ\Af(\beta)\circ
\Af(\alpha)^{-1}$.
\end{proof}
\begin{remark} See~\cite{Few_split:2015} for discussion, expanding on the following points:
\begin{enumerate}
\item The statement of Theorem~\ref{thm:splitRS} holds if modified so
as to refer the split or Reeh--Schlieder properties separately. 
\item We have combined the cyclic and separating aspects of the
Reeh--Schlieder results for convenience. However, if the GNS vector of
the state $\omega_\Nb$ is known to be cyclic for every local von Neumann
algebra corresponding to relatively compact $O\in\OO(\Nb)$ with nontrivial
causal complement, then it is Reeh--Schlieder for all regular Cauchy 
pairs in $\Nb$ and the conclusions of Theorem~\ref{thm:splitRS} apply.
\item The conclusions of Theorem~\ref{thm:splitRS} also
apply to more general regions:
if $O\in\OO(\Mb)$ is relatively compact with nontrivial causal complement
then one may find a regular Cauchy pair 
$(S_\Mb,T_\Mb)$ with $D_\Mb(S_\Mb)\subset O\subset D_\Mb(T_\Mb)$,
whereupon Theorem~\ref{thm:splitRS} yields a state that is both
cyclic and separating for $\pi_{\omega_\Mb}(\Af^\kin(\Mb;O))''$. 
Similarly, if $O_i\in\OO(\Mb)$ can be separated by a regular Cauchy pair
$(S_\Mb,T_\Mb)$, such that $O_1\subset D_\Mb(S_\Mb)$, $D_\Mb(T_\Mb)\subset O_2$, 
the local von Neumann algebras corresponding to the $O_i$ form a split inclusion
in the GNS representation induced by Theorem~\ref{thm:splitRS}. 
\item Suppose that $\omega_\Nb\in\Sf(\Nb)$, where $\Sf$ is a state space for 
$\Af$ obeying the timeslice condition. Then we also have $\omega_\Mb \in\Sf(\Mb)$,
because the isomorphism $\nu$ is formed from a chain of Cauchy morphisms. 
In the case of the Klein--Gordon field, for example, if $\omega_\Nb$ is
Hadamard, then so is $\omega_\Mb$. 
If the state space $\Sf$ also obeys local quasiequivalence,
then \emph{every} state of $\Sf(\Mb)$ has the split
property for \emph{every} regular Cauchy pair of $\Mb$~\cite{Few_split:2015}. If,  more strongly,
each $\Sf(\Mb)$ is a complete local quasiequivalence class, then 
there exists a full Reeh--Schlieder state in $\Sf(\Mb)$,
i.e., its GNS vector is cyclic and separating for every local von Neumann algebra
of a relatively compact region with nontrivial causal complement~\cite{Sanders_ReehSchlieder}.
\end{enumerate}
\end{remark}
Various applications of the partial Reeh--Schlieder result are discussed in~\cite{Sanders_ReehSchlieder}.
The ability to assert both partial Reeh--Schlieder and split properties simultaneously allows
one to show that local von Neumann algebras (in suitable representations) form
\emph{standard split inclusions}~\cite{DopLon:1984}, 
leading to various consequences, including the local implementation of gauge transformations
and to the classification of the local von Neumann algebras as the unique hyperfinite $\text{III}_1$ factor (up to isomorphism, and possibly tensored with an abelian centre)~\cite{Few_split:2015}. 
For the latter application, one must additionally assume the existence of a scaling limit
as described in Theorem~\ref{thm:typeIII}. 

Finally, Theorem~\ref{thm:splitRS} would be of little utility in the absence of  
spacetimes $\Nb$ for which $\Af(\Nb)$ admits
states that have the Reeh--Schlieder and split properties. Minkowski space provides
the canonical example, but one may give reasonable physical conditions that
would guarantee the existence of such states in connected ultrastatic spacetimes~\cite{Few_split:2015}. As
every connected spacetime may be linked to a connected ultrastatic spacetime by
a chain of Cauchy morphisms, one expects that Theorem~\ref{thm:splitRS}  applies
nontrivially at least in connected spacetimes $\Mb$ for most physically reasonable
locally covariant theories.

\section{Quantum energy inequalities, passivity, $\mu$SC and all that}

As already mentioned, the microlocal spectrum condition appears to be the most
promising criterion for specifying physical states (and state spaces) in quantum field theory in curved spacetime. While the $\mu$SC originated in the study of linear quantum fields, it has the potential to be relevant for interacting quantum fields and 
has proved instrumental in the perturbative construction of locally covariant
interacting quantum field theories \cite{BrFr2000,Ho&Wa01,Ho&Wa02,Hollands:2008}. One of the central points is that the
$\mu$SC permits the definition of renormalized Wick-ordered and time-ordered operators of the quantized linear
Klein-Gordon field (and its derivatives) as operator-valued distributions, with finite
fluctuations \cite{BrFr2000}. A converse of that statement has also recently been proved
for ultrastatic spacetimes: In order that the Wick-products of derivatives of the quantized linear Klein-Gordon field have finite fluctuations, it is necessary that
the Wick-ordering is defined with respect to a state obeying the $\mu$SC \cite{FewsterVerch-NecHad}. Thus, the $\mu$SC seems inevitable as
the basis for any perturbative construction of interacting quantum fields on generic globally hyperbolic spacetimes.

In this section, we review various relations between the $\mu$SC and other conditions on physical states which appear reasonable to demand in locally covariant quantum field theory and which in some way express dynamical stability.  We will also mention
some other states which have been proposed for consideration as special states, mainly for the quantized linear Klein-Gordon field.

\subsection{Quantum energy inequalities}

The locally covariant setting of quantum field theory is the theoretical basis for the semiclassical Einstein equation,
\begin{align} \label{eqn:SclEinst}
G_\Mb(h)= 8 \pi G \omega(T_\Mb[h]) \,. 
\end{align}
In this equation, $G_\Mb(h) = \int_M G_{ab}(x) h^{ab}(x)\, d{\rm vol}_\Mb(x)$ is the Einstein tensor
corresponding to the spacetime $\Mb$ (as usual, considered as an object of $\Loc$) smeared with a 
$C_0^\infty$ test-tensor field $h$. $T_\Mb[h]$ is the stress-energy tensor of a locally covariant
theory obeying the time-slice property, so it arises from the relative Cauchy evolution of the locally
covariant theory as indicated in Sect.~\ref{sect:rce}. (Actually, the relative Cauchy evolution only
specifies $[T_\Mb[h],A]$, i.e.\ the commutator of $T_\Mb[h]$ with elements $A$ in $\Af(\Mb)$, so fixing $T_\Mb[h]$ at best up to scalar multiples of the unit operator depending linearly on $h$.\footnote{In Hilbert-space representations, $T_\Mb[h]$ is an unbounded operator, one also has to consider the domain of algebra elements $A$ for which the commutator can be formed, or in which precise
mathematical sense the commutator is to be understood.} Similarly, in models
where $\omega(T_\Mb[h])$  is defined by a process of renormalization, there is a
residual finite renormalization ambiguity. As there does not seem
to be a general, locally covariant way to fix these ambiguities, \eqref{eqn:SclEinst} needs further input. We refer to \cite{Wald_qft,DapFrePin-Cosmo,VerchRegensburg} for
further discussion.)

For linear quantum field models and states $\omega$ fulfilling the $\mu$SC, it holds that
$\omega(T_\Mb[h])$ is actually given by a smooth, symmetric tensor field $\omega(T_{\Mb(ab)}(x))$ $(x \in M)$
on $\Mb$ so that $\omega(T_\Mb[h]) = \int_M \omega(T_{\Mb(ab)}(x))h^{ab}(x)\,d{\rm vol}_\Mb(x)$.
Then for any smooth future-directed timelike curve $\gamma$ in $\Mb$ parametrized by proper time $\tau$,
$$ \varrho_{\omega,\gamma}(\tau) = 
\omega(T_{\Mb(ab)}
(\gamma(\tau)))\dot{\gamma}{}^a(\tau) \dot{\gamma}{}^b(\tau) $$
is the expectation value of the energy density along $\gamma$ at $\gamma(\tau)$. 
It is known that
$$ \inf_{\omega}\, \varrho_{\omega,\gamma}(\tau) = -\infty $$
as $\omega$ ranges over the set of states fulfilling the $\mu$SC
with $\gamma$ and $\tau$ fixed. That means, at a given spacetime point
$x$, the expectation value of the energy density, for any observer, is unbounded below as a functional
of the (regular) states $\omega$ \cite{Few:Bros}. Consequently, the weak energy condition usually
assumed in the macroscopic description of matter in general relativity fails to hold in general for the expectation value of the stress-energy tensor of quantized fields. (There is an argument showing
such a behaviour also for stress-energy tensor expectation values of general Wightman-type quantum
fields on Minkowski spacetime \cite{EGJ}.) 

This violation of the weak energy condition for quantized fields is not unconstrained, however. At least for
linear quantum fields, there are quantum energy inequalities which provide restrictions on the magnitude and 
duration of the violation of the weak energy condition. Here, one says that a set of states $\boldsymbol{S}_{qei}(\Mb)$
on $\Af(\Mb)$ fulfils a \emph{quantum energy inequality (QEI)} if for any smooth, future-directed
timelike curve $\gamma$, defined on some open proper time interval $I$, there is for any 
$f \in C_0^\infty(I,\mathbb{R})$ some constant $c_{\gamma}(f) > -\infty$ such that
\begin{align} \label{QWEI}
\inf_{\omega \in \boldsymbol{S}_{qei}(\Mb)}\, \int_I \varrho_{\omega,\gamma}(\tau)f^2(\tau) \, d\tau 
 \ge c_\gamma(f) \,.
\end{align}
In other words, when averaging the energy density expectation values 
with a smooth quadratic weight function along any 
timelike curve, one obtains a quantity which is bounded below as long as the states
range over the set of states $\boldsymbol{S}_{qei}(\Mb)$.

Quantum energy inequalities were first discussed by Ford~\cite{Ford78},
initially motivated on thermodynamic grounds, but then later derived
in free models on Minkowski space~\cite{Ford:1991,FordRoman:1997,Flanagan:1997,FewsterEveson:1998}
and some curved spacetimes, e.g.,~\cite{PfenningFord_static:1998,FewsterTeo:1999}.
They been established rigorously for the quantized (minimally coupled) Klein-Gordon \cite{Fews00}, Dirac \cite{FewsterVerch-DiracQWEI,DawsFews06} and free
electromagnetic fields \cite{Few&Pfen03}, in all cases for all $\Mb$ of $\Loc$, and for
$\boldsymbol{S}_{qei}(\Mb)$ coinciding with the set of states fulfilling the $\mu$SC on $\Af(\Mb)$.
The status of QEIs for interacting quantum field theories remains to be clarified,
but it is not expected that they will hold in general without further modification \cite{OlumGraham}. Nonetheless, results are known for some
interacting models in two spacetime dimensions \cite{Fe&Ho05,BosCadFew13}
and some model-independent results are known in Minkowski space
\cite{BostelmannFewster09}.
One of the main applications of QEIs is to put restrictions on the occurrence of spacetimes with unusual causal behaviour as solutions to the semiclassical Einstein equations, e.g.\ spacetimes with closed timelike curves. We refer to the reference \cite{Fews05} and literature cited there 
for considerable further discussion.

It has been shown (for the quantized, minimally coupled Klein-Gordon field) that
the lower bounds $c_{\Mb,\gamma}(f)$ in \eqref{QWEI} can be chosen such that they comply with local covariance
\cite{FewsterSmith,Fewster2007}, i.e., for any morphism $\psi : \Mb \to \Nb$, they obey 
\begin{align} \label{cov-qei-bound}
    c_{\Nb,\psi \circ \gamma}(f) = c_{\Mb,\gamma}(f) \,.
\end{align}

Conversely, one can use QEIs as the basis of a selection 
criterion for a locally covariant state space. To be specific, suppose that for all 
objects $\Mb$ of $\Loc$ and timelike curves $\gamma : I \to \Mb$ (where $I$ is an open interval), a map
$f \mapsto c_{\Mb,\gamma}(f) \in \mathbb{R}$ ($f \in C_0^\infty(I,\mathbb{R})$) has been selected such that the
covariance condition \eqref{cov-qei-bound} is fulfilled. Let $\Af$ be a locally covariant theory and define $\boldsymbol{S}_{qei,c}(\Mb)$ to consist of all the states $\omega$ on $\Af(\Mb)$ for which the expectation value of the stress-energy tensor is defined 
and obeys
\begin{align} \label{qei-cov}
 \int_I \varrho_{\omega,\gamma}(\tau)f^2(\tau) \, d\tau 
 \ge c_{\Mb,\gamma}(f) \,, \quad f \in C_0^\infty(I,\mathbb{R})\,.
\end{align}
Evidently  $\boldsymbol{S}_{qei,c}(\Mb)$ is stable under
formation of convex combinations; if it is also stable under
 operations induced by elements in
$\Af(\Mb)$ (and this is generally the case for quantized linear fields),
then one can define a state space $\mathscr{S}$ for $\Af$ by setting $\mathscr{S}(\Mb) = \boldsymbol{S}_{qei,c}(\Mb)$.
As mentioned before, this definition is consistent (up to some details not spelled out here in full) with the microlocal spectrum condition as a selection criterion for a state space for locally covariant linear quantum fields, upon appropriate
choice of the $c_{\Mb,\gamma}(f)$. In fact, the two criteria result in the same state space, as we will indicate next, with the help of yet another selection criterion.

\subsection{Passivity}

Another very natural selection criterion is that the physical states of a locally covariant quantum field 
theory should be locally in the folia of ground states, or thermal equilibrium states, in spacetimes
which admit sufficient time-symmetry that such states exist. This is the case for ultrastatic globally hyperbolic spacetimes, i.e., those spacetimes in 
standard form $\Mb=(\mathbb{R} \times \Sigma, dt \otimes dt -  h, \ogth,
\tgth)$ 
where we write spacetime points as $(t,{\bf x})$ with $t \in \mathbb{R}$ and ${\bf x} \in \Sigma$,  
the metric $h$ is a ($t$-independent) complete Riemannian metric on $\Sigma$, and $\tgth$ is chosen so that $dt$ is future-directed. 
Then there is a global time-symmetry on
that spacetime, i.e.\ a Killing-flow $\vartheta_t : (t_0,{\bf x}) \mapsto (t_0 + t,{\bf x})$, $t \in \mathbb{R}$.
Given a locally covariant theory $\Af$, this leads to an induced 1-parametric group $\{\alpha_t\}_{t \in \mathbb{R}}$
of unital $*$-automorphisms of $\Af(\Mb)$ for any ultrastatic $\Mb$. An invariant 
 state $\omega$ on $\Af(\Mb)$ ($\omega \circ \alpha_t = \omega$) is called
a \emph{ground state} for $\{\alpha_t\}_{t \in \mathbb{R}}$ if there is a dense unital $*$-subalgebra $\Af_0(\Mb)$ of $\Af(\Mb)$ such that
$$   \frac{1}{i} \left. \frac{d}{dt} \right|_{t = 0}\, \omega(A \alpha_t(B)) \ge 0 \,, \quad A,B \in \Af_0(\Mb)\,.$$
An invariant state $\omega$ on $\Af(\Mb)$ is called \emph{passive} for $\{\alpha_t\}_{t \in \mathbb{R}}$ if there is a
dense unital $*$-subalgebra $\Af_0(\Mb)$ of $\Af(\Mb)$ with the property that
$$ \frac{1}{i} \left. \frac{d}{dt} \right|_{t=0} \, \omega(U^* \alpha_t(U))  \ge 0 $$
holds for all unitary elements of $\Af_0(\Mb)$ which are continuously connected to the unit element of
$\Af_0(\Mb)$ (cf.\ \cite{FewVer-Passivity,PuszWoronowicz} for further details).
Passive states are generalizations of KMS-states (which can be regarded as thermal equilibrium states
in the setting of ultrastatic spacetimes); see \cite{PuszWoronowicz}
for further discussion on this point.

Ground states and KMS-states of the quantized linear scalar Klein-Gordon field
on ultrastatic spacetimes are Hadamard states, i.e.\ they fulfil the microlocal spectrum condition,
as was shown in \cite{SahlmannVerch-KMS}. The same holds for convex mixtures of KMS-states at different
temperatures which are the generic examples of passive states.\footnote{In fact, the result on the Hadamard property of
ground states and KMS-states on ultrastatic spacetimes holds for more general types of quantized linear
fields, and more generally also on static (not necessarily ultrastatic) spacetimes.}  As a consequence,
ground states and KMS-states also fulfil quantum energy inequalities.

For the quantized linear Klein-Gordon field on an ultrastatic spacetime, one can show that the converse holds as well.
This was established in \cite{FewVer-Passivity} under certain additional
technical assumptions on which we suppress here, contenting ourselves with a simplified statement which
will now be outlined. The assumption is that the algebra
$\Af(\Mb)$ assigned to an ultrastatic spacetime $\Mb$ with underlying
manifold $\RR\times\Sigma$ admits
a set of states $\boldsymbol{S}_{qei}(\Mb)$, closed under convex combinations and operations, for which the stress-energy expectation values are well-defined, and
obeying a QEI of the form \eqref{QWEI}. The QEI then holds in particular for time-flow trajectories of the ultrastatic spacetime i.e.\ for all
$\gamma_{\bf x}(\tau) = (\tau,{\bf x})$, $t \in \mathbb{R}$, ${\bf x} \in \Sigma$. It is then assumed that in this case,
the quantum energy inequality holds in the form
$$ \inf_{\omega \in \boldsymbol{S}_{qei}(\Mb)}\, \int \varrho_{\omega}(\tau,{\bf x}) f^2(\tau)\,d\tau \ge c_{\Mb}(f,{\bf x})\,,
 \quad f \in C_0^\infty(\mathbb{R},\mathbb{R})\,,$$
using the abbreviations $\varrho_\omega(\tau,{\bf x})$ for $\varrho_{\omega,\gamma_{\bf x}}(\tau)$, and 
$c_{\Mb}(f,{\bf x})$ for $c_{\Mb,\gamma_x}(f)$. Making the assumption that $c_{\Mb}(f,{\bf x})$
is locally integrable and that $\Sigma$ is compact, it has been shown in \cite{FewVer-Passivity} that (i) there is
a passive state on $\Af(\Mb)$, (ii) assuming a form of energy-compactness, there is a passive state which lies in the 
folium of some state in $\boldsymbol{S}_{qei}(\Mb)$, (iii) assuming clustering properties in time, there is a ground
state in $\boldsymbol{S}_{qei}(\Mb)$.\footnote{While the results in \cite{FewVer-Passivity} have only been established for 
compact $\Sigma$, the results could be extended to noncompact $\Sigma$ upon making suitable integrability
assumptions on $c_{\Mb}(f,{\bf x})$ with respect to ${\bf x} \in \Sigma$.}  This shows that --- apart from some further technical details --- one can generally expect that the imposition of  quantum energy inequalities entails the existence of a ground (or passive) state in the folium of those obeying the QEI.

\subsection{And all that --- relations between the conditions on states}

Once a set of physical states has been specified on ultrastatic spacetimes, then, if a locally covariant
quantum field theory $\Af$ satisfies the time-slice property, the specification can be carried over to each spacetime $\Mb$ of $\Loc$.
For there certainly is an ultrastatic spacetime $\Nb$ with
Cauchy surfaces oriented-diffeomorphic to those of $\Mb$ and
therefore a chain of Cauchy morphisms linking $\Mb$ to $\Nb$, by Proposition~\ref{prop:Cauchy_chain}. The set of physical states
on $\Nb$ can then be pulled back along this chain to give
a set of states on $\Mb$. This raises a question (not addressed
in the literature) of whether
the state space on $\Mb$ obtained in this way depends on the details
of the construction; 
evidently a necessary condition is that the chosen physical states
on ultrastatic spacetimes are dynamically stable in the sense
that $\rce_\Nb[h]^*\boldsymbol{S}(\Nb) = \boldsymbol{S}(\Nb)$,
for arbitrary metric perturbations with time-compact support.  

Alternatively, one may have a specification of the state spaces
in all spacetimes, but without knowing whether any such states
exist. Here, again, the deformation argument can be used, if
(a) existence can be established in ultrastatic spacetimes, and 
(b) one has $\Af(\psi)^*\Sf(\Nb)\subset\Sf(\Mb)$ and $(\Af(\psi)^{-1})^*\Sf(\Mb)
\subset\Sf(\Nb)$ for all Cauchy morphisms $\psi:\Mb\to\Nb$. Indeed this was how Fulling, Narcowich and Wald originally proved existence of Hadamard states for the quantized
linear Klein-Gordon field on globally hyperbolic spacetimes: 
By proving that ground states on ultrastatic spacetimes have the Hadamard
property, and then making use of
the fact that the Hadamard property propagates throughout any globally hyperbolic spacetime once it is known to hold in
the neighbourhood of a Cauchy-surface \cite{FullingNarcowichWald}.      
Using the equivalence of Hadamard property and microlocal spectrum condition,
this propagation  of the Hadamard property is equivalent to the propagation of the wavefront set along bicharacteristics via
the bicharacteristic flow \cite{DuistermaatHoermander1}. 
Thus the requirements that ground states and KMS-states for ultrastatic spacetimes should
be counted among the physical states, and that all physical states should be locally quasiequivalent, 
are consistent with the demand that all physical states should be locally quasiequivalent to the states fulfilling the microlocal spectrum condition.
But relying on the results
of \cite{FewVer-Passivity}, one can even show more: The microlocal spectrum condition implies QEIs, even with
a locally covariant lower bound, and this guarantees for locally covariant quantum field theories (up to some additional technical assumptions) that on ultrastatic
spacetimes there will be ground states which are locally quasiequivalent to the states which fulfil locally covariant QEIs. 
In other words, the following three selection criteria:
\begin{itemize}
\item microlocal spectrum condition
\item locally covariant quantum energy inequalities
\item ground- or KMS-states on ultrastatic spacetimes
\end{itemize}
for the local folia of physical states are \emph{equivalent} for the locally covariant theory of the quantized linear (minimally coupled) Klein-Gordon field.
This is interesting since  these selection criteria have different motivations and implications. 
In fact, these results can be generalized at least to a larger
class of locally covariant linear quantized fields, and potentially also to certain perturbatively constructed quantum fields in a suitable version ---
a key result in this context may be that some perturbatively constructed interacting quantum fields have been shown to satisfy the time-slice property
\cite{ChiFre:2009}. However, there are also examples of non-minimally coupled quantized linear scalar fields which do not fulfil quantum energy
inequalities \cite{FewsterOsterbrink}, and arguments for interacting models
\cite{OlumGraham}, indicating that quantum energy inequalities as we have stated them are a less general property than, e.g.\ the $\mu$SC. Therefore, as mentioned
earlier, the status of quantum energy inequalities, especially in interacting quantum field theories, remains yet to be fully understood. 

\subsection{Other special states}

For quantum fields in curved spacetime, particularly for the quantized linear Klein-Gordon field, several other types of states have been proposed
as physical states, or states with special interpretation, from the very beginning of the development of
the theory.  
However, they are in some cases restricted to special spacetime geometries, or are at variance with
local covariance, or fail to be locally quasiequivalent to states fulfilling the microlocal spectrum condition in general. We shall list some
of them.
\\[4pt]
{\bf Adiabatic vacuum states}. This class of states was originally introduced by Parker in his seminal approach to particle creation in quantum
fields on expanding cosmological spacetimes \cite{Parker:1969,Parker:1971}. Adiabatic vacua for the quantized linear Klein-Gordon field have
been shown to define a single local quasiequivalence class of states on Friedmann-Lema\^{i}tre-Robertson-Walker spacetimes \cite{LuedersRoberts}
and to be locally quasiequivalent to states fulfilling the $\mu$SC \cite{JunkerSchrohe}; the latter reference extends the definition from
cosmological spacetimes to general globally hyperbolic spacetimes.  
\\[4pt]
{\bf Instantaneous vacuum states}. This class of states is essentially defined by picking a Cauchy-surface in a globally hyperbolic spacetime
and defining a two-point function for the quantized field in terms of the Cauchy-data as that two-point function which would correspond to
the ultrastatic vacuum defined by the Riemannian geometry on the Cauchy-surface obtained from the ambient spacetime metric \cite{AshtekarMagnon:1975}.
However, these states in general fail to be locally quasiequivalent to states satisfying the $\mu$SC unless
the Cauchy-surface is actually part of an ultrastatic (or at least stationary) foliation \cite{TorreMad:1999,Junker:1996}.
\\[4pt]
{\bf States of low energy}. This is a class of homogeneous, isotropic states on Friedmann-Lema\^{i}tre-Robertson-Walker spacetimes which
minimize the averaged energy density \eqref{QWEI} for given averaging function $f$. These states fulfil the microlocal spectrum condition 
\cite{Olbermann} and have several interesting properties \cite{DegVer2010}.
\\[4pt]
{\bf Local thermal equilibrium states}. This is a class of states to which one can, approximately, ascribe a temperature at each point in spacetime,
where the temperature together with the temperature rest frame is a function of the spacetime point. This class of states was introduced in
\cite{BuOjiRoos} and further investigated in \cite{Bu-hhb}; some applications to quantum fields in curved spacetime appear in 
\cite{VerchRegensburg,SchlemmerVerch,Solveen:2010,Solveen:2012}.
\\[4pt]
{\bf BMS-invariant states at conformal lightlike infinity}. For a class of asymptotically flat spacetimes, conformal lightlike infinity is a boundary manifold
which is invariant under the Bondi-Metzner-Sachs (BMS) group. Certain types of quantum fields induce quantum fields of ``conformal characteristic data'' on 
conformal lightlike infinity; then specifying a BMS-invariant, positive energy state on the ``conformal boundary'' quantum field determines a 
state of the quantum field on the original spacetime. This state fulfils the
$\mu$SC under very general assumptions \cite{DMP:05,Moretti:2008}
and has been instrumental in proving that there are solutions to the semiclassical 
Einstein equations for cosmological spacetimes with the non-conformally coupled massive quantized Klein-Gordon field \cite{Pinamonti:2011}.  
\\[4pt]
{\bf SJ-states and FP-states}. Recently, an interesting proposal for distinguished states of the quantized linear Klein-Gordon field has been made in
\cite{AAS}. There, the two-point function of such a state on $\Af(\Mb)$ is determined from $E_{\Mb}$, the propagator of the Klein-Gordon operator on
the globally hyperbolic hyperbolic spacetime $\Mb$, 
regarded as an operator on the $L^2$ space of scalar functions on $\Mb$ induced by the volume form of $\Mb$. The two-point function arises from a polar decomposition of $iE_{\Mb}$ by taking the
positive spectral part $(1/2)(|iE_{\Mb}| + iE_{\Mb})$ of $iE_{\Mb}$ as its $L^2$ kernel. The resulting (quasifree) states were named
\emph{SJ-states} in \cite{AAS}. This construction can be shown to yield a well-defined 
pure state if there is a morphism
$\psi : \Mb \to \Nb$ in $\Loc$ such that $\psi(\Mb)$ is 
relatively compact in $\Nb$  \cite{FewsterVerch-SJ}.
A heuristic argument in \cite{AAS} (corroborated in \cite{FewsterVerch-SJ}) shows that the SJ-state of Minkowski spacetime agrees with the Minkowski vacuum state of
the quantized linear Klein-Gordon field. This provided motivation in \cite{AAS} to regard the SJ-states as distinguished ``vacuum states'' for the 
quantized linear Klein-Gordon field in any spacetime. However,
explicit calculation for the case of ``ultrastatic slab'' spacetimes
shows that SJ-states in general fail to fulfil the $\mu$SC,
and even fail to be locally quasiequivalent to states fulfilling the $\mu$SC \cite{FewsterVerch-SJ}. Furthermore, derivatives of Wick-ordered quantum fields in general fail to
have finite fluctuations in SJ-states \cite{FewsterVerch-NecHad}. A modified construction of SJ-states, using a smoothing procedure on $E_{\Mb}$ and yielding states fulfilling the $\mu$SC  (at least on ultrastatic slabs and similar slabs
of cosmological spacetimes) 
has been proposed in \cite{BruFre:2014} for the
quantized linear Klein-Gordon field; this construction requires smoothing functions which parametrize the states. For the case of the quantized Dirac field, there is a construction method for states which is conceptually related to SJ states,
they are called Fermionic projector (FP) states (and incidentally predate SJ states); see \cite{Finster-DiracSea,FinsterReintjes:2013,FewsterLang-FP}. 
It has been shown in \cite{FewsterLang-FP} that FP states also fail to fulfil the $\mu$SC in general, and that a smoothing procedure again leads to a modified
FP state construction rendering states fulfilling the $\mu$SC.

\section{Locally covariant fields}\label{sect:fields} 

Conventional approaches to QFT, and the Wightman axiomatic framework, 
focus on quantum fields as the primary object of study. 
In the algebraic approach, the emphasis is rather on the
local algebras of observables; quantum fields are regarded as ways of parameterising
those local algebras. 
In the locally covariant framework, quantum fields take on a new aspect --  
they not only parameterise the local algebras in one given spacetime, 
but do so in all spacetimes in a compatible way. This idea was present
in the literature on QFT in curved spacetimes for a long time in
the context of the stress-energy tensor (see, e.g.,~\cite{Kay79,Wald_qft}); 
its use in the locally covariant context began with the treatment of the 
spin--statistics connection in curved spacetime~\cite{Verch01} and
the treatment of perturbation theory~\cite{Ho&Wa01}. It was 
one of the motivating ideas behind~\cite{BrFrVe03}, in which it took
on a more functorial form.

\subsection{General considerations in $\Loc$}\label{sect:fields_gen}

The basic idea can be illustrated by the Klein--Gordon theory $\Af:\Loc\to\Alg$.
As described in Section~\ref{sec:KG}, each algebra  $\Af(\Mb)$ is
generated by elements $\Phi_\Mb(f)$ carrying the interpretation of smeared fields;
under a morphism $\psi:\Mb\to\Nb$ the smeared fields on $\Mb$ and $\Nb$
are related by equation~\eqref{eq:covariantscalarfield}, which
can be rewritten as an equality of functions from $\CoinX{\Mb}$
to $\Af(\Mb)$
\begin{equation}\label{eq:covariantscalarfield2}
\Af(\psi)\circ\Phi_\Mb = \Phi_\Nb\circ\psi_*.
\end{equation}
Indeed, as the discussion of Section~\ref{sec:KG} makes clear, much of the
general theory has been structured on the basis of this observation,
which can be given a more categorical form as follows. 
First, the assignment of test function spaces to spacetimes may be
formalised as a functor
\begin{equation}\label{eq:Df_def}
\Df:\Loc\to\Set,\qquad \Df(\Mb)=\CoinX{\Mb}, \qquad \Df(\Mb\stackrel{\psi}{\to}\Nb) = \psi_*,
\end{equation}
where $\Set$ is the category of sets and (not necessarily injective) functions.  
Second, when we regard $\Af(\psi)$ as a function, we are appealing to the
existence of a forgetful functor $\Uf:\Alg\to\Set$ that maps each $\Alg$
object to its underlying set and each $\Alg$-morphism to its underlying function. 
Then `$\Af(\psi)$ regarded as a function' can be represented formally
by $\Uf(\Af(\psi)) = (\Uf\circ\Af)(\psi)$. Equation~\eqref{eq:covariantscalarfield2}
now becomes an equality of $\Set$-morphisms:
\begin{equation}\label{eq:covariantscalarfield3}
(\Uf\circ\Af)(\psi)\circ\Phi_\Mb = \Phi_\Nb\circ \Df(\psi)  
\end{equation}
which is required to hold for every $\Loc$-morphism $\psi:\Mb\to\Nb$,
and asserts precisely that the functions $\Phi_\Mb$ form the components
of a \emph{natural transformation} $\Phi:\Df\nto\Uf\circ\Af$. The naturality condition
can be represented diagrammatically as the requirement that the diagram
\begin{equation}
\begin{tikzpicture}[baseline=0 em, description/.style={fill=white,inner sep=2pt}]
\matrix (m) [ampersand replacement=\&,matrix of math nodes, row sep=3em,
column sep=2.5em, text height=1.5ex, text depth=0.25ex]
{\Mb \&  \Df(\Mb) \&  (\Uf\circ\Af)(\Mb) \\
\Nb \& \Df(\Nb) \&  (\Uf\circ\Af)(\Nb)\\ };
\path[->,font=\scriptsize]
(m-1-1) edge node[auto] {$ \psi $} (m-2-1)
(m-1-2) edge node[auto] {$ \Phi_\Mb $} (m-1-3)
        edge node[auto] {$ \Df(\psi) $} (m-2-2)
(m-2-2) edge node[auto] {$ \Phi_\Nb $} (m-2-3)
(m-1-3) edge node[auto] {$ (\Uf\circ\Af)(\psi) $} (m-2-3);
\end{tikzpicture}
\end{equation}
commutes for every morphism $\psi:\Mb\to\Nb$. 

\begin{definition} 
A locally covariant scalar field
of theory $\Af$ is a natural transformation $\Phi:\Df\nto\Uf\circ\Af$. 
The collection of all locally covariant scalar fields is denoted $\Fld(\Df,\Af)$. 
\end{definition}
This definition encompasses both linear and nonlinear fields -- for example, in the Weyl formulation of the
Klein--Gordon theory, the map from test functions to generators $f\mapsto W_\Mb(f_\sim)$ defines a nonlinear locally covariant field.

Unlike many structures in locally covariant QFT, individual locally
covariant fields are not stable under the evolution entailed by 
the timeslice property (in particular, relative Cauchy evolution). 
Fixing $\Phi_\Mb(f)\in\Af(\Mb)$, it is of course true that
$\Phi_\Mb(f)\in\Af^\kin(\Mb;S)$ for any subset $S\in\OO(\Mb)$
that contains a Cauchy surface of $\Mb$. In general, however,
we cannot write $\Phi_\Mb(f)$ in the form $\Phi_\Mb(h)$ for $h$ supported in $S$; while this can be done for linear fields such as the Klein--Gordon
model, the evolution induced by the timeslice assumption becomes much more involved as soon as Wick powers are included~\cite{ChiFre:2009}.  

The description of fields at the functorial level, rather
than that of individual spacetimes, opens new ways of manipulating them as mathematical objects. As shown in~\cite{Fewster2007},
$\Fld(\Df,\Af)$ can be given the structure of a unital $*$-algebra: given 
$\Phi,\Psi\in\Fld(\Df,\Af)$, and $\lambda\in\CC$, we may define new
fields $\Phi+\lambda\Psi$, $\Phi\Psi$, $\Phi^*$ by
\begin{align}
(\Phi+\lambda\Psi)_\Mb(f) &= \Phi_\Mb(f) + \lambda\Psi_\Mb(f)  \label{eq:fld_lin}
\\ 
(\Phi\Psi)_\Mb(f) &= 
\Phi_\Mb(f)\Psi_\Mb(f), \label{eq:fld_prod} \\
(\Phi^*)_\Mb(f) &= \Phi_\Mb(f)^* \label{eq:fld_star}
\end{align}
and the unit field may be defined by $\II_\Mb(f) = \II_{\Ff(\Mb)}$, for all $f\in\CoinX{\Mb}$. Furthermore, in the $C^*$-algebraic setting, one may
even find a $C^*$-norm on a $*$-subalgebra of $\Fld(\Df,\Af)$. 
The abstract algebra of fields has a number of interesting features: 
for example, it carries an action of the global gauge group (see Sec.~\ref{sect:gauge}). 
 
For many quantum fields of interest the maps $\Phi_\Mb$
are linear. Such {\em linear fields} may be singled out as follows:
regarding $\Df$ now as a functor from $\Loc$ to $\Vect$, the category of complex vector spaces and (not necessarily injective) linear maps, and writing $\Vf$
for the forgetful functor $\Vf:\Alg\to\Vect$, the linear fields of 
the theory $\Af$ are natural transformations $\Phi:\Df\nto\Vf\circ\Af$,
and form the collection $\Fldlin(\Df,\Af)$. While $\Fldlin(\Df,\Af)$
can be given the structure of a vector space, by~\eqref{eq:fld_lin},
it does not in general admit a product structure, because~\eqref{eq:fld_prod}
creates a field that is nonlinear in its argument. Similarly the $*$-operation 
of~\eqref{eq:fld_star} creates an antilinear field and so cannot be
defined as a map of $\Fldlin(\Df,\Af)$ to itself. However, we can define
an antilinear involution $\star$ on $\Fldlin(\Df,\Af)$ by
\begin{equation}\label{eq:fldlin_star}
\Phi^\star_\Mb(f) = \Phi_\Mb(\overline{f})^* \qquad (f\in\Df(\Mb))
\end{equation}
and it is natural to do so in this context. For example,
$\Phi$ is \emph{hermitian} if $\Phi^\star=\Phi$.\footnote{The reader might wonder
why this is not adopted for $\Fld(\Df,\Af)$ in place of~\eqref{eq:fld_star}.
The reason is that $\Phi^*\Phi$ is a positive element of $\Fld(\Df,\Af)$
in the sense that $(\Phi^*\Phi)_\Mb(f) = \Phi_\Mb(f)^*\Phi_\Mb(f)$ is
a positive element in $\Af(\Mb)$ for every $f\in\Df(\Mb)$, while $\Phi^\star\Phi$ need not be positive in this way. Order 
structure and functional calculus for abstract fields is discussed
in \cite{Fewster2007}.}  

The definition of a locally covariant scalar field can be generalised in various ways, by modifying the functor $\Df$ (as well as choosing whether to work in $\Set$ or $\Vect$). For example, $\Df$ could be the functor assigning compactly supported smooth sections of tensor fields of some specified (but arbitrary) type; the
corresponding linear fields can be regarded as tensor fields of the dual type.
Again, in some applications one might allow
certain subsets of compactly supported distributions (see, e.g.,~\cite{Fewster2007}). 
In all such cases, we will use the notation $\Fld(\Df,\Af)$ for the natural
transformations from $\Df$ (in $\Set$) to $\Uf\circ\Af$ and 
$\Fldlin(\Df,\Af)$ for those from $\Df$ (in $\Vect$) to $\Vf\circ\Af$. 
Spinorial fields require additional structure beyond those of $\Loc$
and will be discussed briefly in Sect.~\ref{sect:spin}. 

We can also define multilocal fields, replacing $\Df$ by
$\Df^{\times k}(\Mb) = \CoinX{\Mb}^{\times k}$, $\Df^{\times k}(\psi) = 
\psi_*^{\times k}$, or by $\Df^{\otimes k}(\Mb) = \CoinX{\Mb}^{\otimes k}$, $\Df^{\otimes k}(\psi) =  \psi_*^{\otimes k}$ in the linear case. 
Local fields can be combined to form multilocal fields in obvious ways. 
For example, given $\Phi,\Psi\in\Fldlin(\Df,\Af)$, we may define bilocal
fields $\Phi\stackrel{\rightarrow}{\otimes}\Psi$  and 
$\Phi\stackrel{\leftarrow}{\otimes}\Psi$,
i.e., natural transformations from $\Df^{\otimes 2}$ to $\Vf\circ\Af$,
by
\begin{align}\label{eq:bilocal}
(\Phi\stackrel{\rightarrow}{\otimes}\Psi)_\Mb(f\otimes h) &= \Phi_\Mb(f)\Psi_\Mb(h) \nonumber \\
(\Phi\stackrel{\leftarrow}{\otimes}\Psi)_\Mb(f\otimes h) &= \Psi_\Mb(h)\Phi_\Mb(f)
\end{align}
for $f,h\in\Df(\Mb)$, $\Mb\in\Loc$. 
These structures will be useful in  Sec.~\ref{sec:KGreform},
where it will be shown that the abstract viewpoint on fields allows
the Klein--Gordon theory to be specified directly at the functorial
level in terms of its generating field.

\subsection{The inclusion of spin}
\label{sect:spin}
 
The inclusion of fields with spin requires a modification of the category
of spacetimes to incorporate spin structures. For
definiteness, let us work in $n=4$ spacetime dimensions,
in which there are some simplification~\cite{Isham_spinor:1978}. 
In particular, every globally hyperbolic manifold $\Mb\in\Loc$ admits
a unique spin bundle  (up to equivalence), namely 
the trivial bundle $S\Mb:=\Mb\times\SL(2,\CC)$, regarded
as a right principal bundle. A spin
structure in this context is a smooth double covering   
$\sigma$ from $S\Mb$ to the bundle $F\Mb$ of oriented and
time-oriented orthonormal frames on $\Mb$ (also a right-principal bundle with structure group given by the proper orthochronous Lorentz group
$\Lc^\uparrow_+$), such that $\sigma\circ R_A=R_{\Lambda(A)}\circ\sigma$,
where $\Lambda:\SL(2,\CC)\to\Lc^\uparrow_+$ is the standard
double cover of groups and we use $R$ for each right action.

There is always at least one spin structure in our current
setting, and the distinct possibilities are classified up to equivalence
by the cohomology group $H^1(\Mb;\ZZ_2)$. 
We replace $\Loc$ by a new category $\SpLoc$, whose
objects are pairs $(\Mb,\sigma)$
where $\sigma$ is a spin structure for $\Mb$; a morphism
between objects $(\Mb,\sigma)$ and $(\Mb',\sigma')$
of $\SpLoc$ is a bundle morphism $\Psi:S\Mb\to S\Mb'$ such that
\begin{itemize}
\item $\Psi(p,A) = (\psi(p), \Xi(p)A)$ for some $\Loc$-morphism 
$\psi:\Mb\to\Mb'$ and smooth function $\Xi:\Mb\to\SL(2,\CC)$;
\item $\sigma'\circ\Psi = \psi_*\circ\sigma$, where
$\psi_*:F\Mb\to F\Mb'$ is induced by the tangent map of $\psi$.  
\end{itemize}
It is convenient to write $\Psi=(\psi,\Xi)$ under these circumstances.

A locally covariant quantum field theory
is now a functor $\Af:\SpLoc\to\Alg$ (or $\CAlg$, or some other
category as in~\cite{Verch01}). Note
that such a functor encodes both geometric embeddings
and spin rotations. The timeslice property can be defined
as before, regarding $\Psi=(\psi,\Xi)$ as Cauchy in 
$\SpLoc$ whenever $\psi$ is Cauchy in $\Loc$. 

It is now possible to introduce locally covariant fields
of different spin, starting with the construction of 
appropriate test function spaces. Let $\rho$ be any (real or complex)
representation of $\SL(2,\CC)$ on vector space $V_\rho$,
and write $\KK=\RR$ (resp., $\CC$) in the real (resp., complex) case.
Given any object $(\Mb,\sigma)$ of $\SpLoc$, 
let $\Df_\rho(\Mb,\sigma)=\CoinX{\Mb;V_\rho}$ be the 
space of compactly supported functions on $\Mb$ with values in $V_\rho$;\footnote{Somewhat
more technically, $\Df_\rho(\Mb,\sigma)$ may be regarded
as the space of compactly supported sections of the bundle $\Mb\ltimes_\rho V_\rho$ associated to $S\Mb$ and $\rho$.} 
given any $\SpLoc$ morphism $\Psi:(\Mb,\sigma)\to (\Mb',\sigma')$, 
define $\Df_\rho(\Psi):\Df_\rho(\Mb,\sigma)\to
\Df_\rho(\Mb',\sigma')$ by
\[
\Df_\rho(\Psi)f =\psi_*\left(\rho(\Xi)f\right),
\]
where $\Psi=(\psi,\Xi)$ as above, and $\CoinX{\Mb;V_\rho}\owns\rho(\Xi)f:p\mapsto \rho(\Xi(p))f(p)$. It is easily checked that
$\Df_\rho$ is a functor from $\SpLoc$ to the category of vector spaces over $\KK$.  The $\KK$-linear locally covariant fields $\Fldlin(\Af,\Df_\rho)$ 
are now naturally regarded as fields of `type $\rho$'.\footnote{Terminology
here is parallel to \cite{Verch01} but one could equally make
a case for labelling the type by the dual (also known as contragredient) representation $\rho^*$,
which was our convention in Sect.~\ref{sect:fields_gen}.}

Particular interest attaches to the irreducible complex
representations $D^{(k,l)}$ ($k,l\in\NN_0$) of $\SL(2,\CC)$ on 
the vector space $V^{(k,l)}=(\textcircled{s}^k\CC^2)\otimes 
(\textcircled{s}^l\CC^2)$, where $\textcircled{s}$ denotes
a symmetrised tensor product, and 
\[
D^{(k,l)}(A) = 
A{}^{\textcircled{s} k}\otimes \overline{A}{}^{\textcircled{s} l}
\qquad (A\in \SL(2,\CC)),
\]
with the bar denoting complex conjugation. These
representations exhaust the finite-dimensional complex
irreducible representations of $\SL(2,\CC)$ up to equivalence, and 
are familiar from the Minkowski space theory~\cite{StreaterWightman}.
Irreducible real-linear representations of interest
are formed from $D^{(k,l)}\oplus D^{(l,k)}$ ($k\neq l$) or $D^{(l,l)}$
restricted to suitable subspaces. In an obvious way, we will 
call locally covariant $\KK$-linear fields associated with these
representations \emph{fields of type $(k,l)$}.
An important distinction is between the cases in which $\frac{1}{2}(k+l)$
is integer or half-integer, which (assuming the normal spin-statistics
relation) corresponds to bosonic or fermionic fields. We now
turn to a more detailed discussion of this point.

\subsection{The spin--statistics connection} \label{sect:spinstats}

It is an observed fact that particles of integer spin
display bosonic statistics and those of half-integer spin
obey fermionic statistics. One of the major early successes
of axiomatic QFT (see, e.g., the classic presentation in~\cite{StreaterWightman}) was to prove that these observed
facts are consequences of the basic axioms, and therefore
true in a model-independent fashion. Of course, these
proofs are formulated for Minkowski space QFT, and make full use of the 
Poincar\'e symmetry group, while the experimental
observations take place in a curved spacetime, so
it is important to understand how the spin-statistics theorem 
can be extended to general backgrounds. 

While a number of authors had demonstrated the inconsistency
of various free models with incorrect statistics in 
curved spacetimes~\cite{Wald_Smatrix:1979,ParkerWang:1989},\footnote{An interesting variant shows what happens if
negative-normed states are allowed~\cite{HiguchiParkerWang:1990}.} no model-independent
result was available until~\cite{Verch01}, 
which introduced a number of ideas that now form the
basis of locally covariant QFT. The 
framework employed in~\cite{Verch01} differs
in two important respects from that of~\cite{BrFrVe03}:
the category of spacetimes is $\SpLoc$ rather than $\Loc$, 
and the target category is neither
$\Alg$ nor $\CAlg$, but rather a category 
whose objects are nets of von Neumann algebras
indexed over relatively compact spacetime subsets. 
For these reasons (particularly the second) it would
be a departure from our development to describe
the results of~\cite{Verch01} in detail. 

In broad terms, however, the spin--statistics connection proved in~\cite{Verch01} is as follows. We consider a theory on $\SpLoc$ in which,
on each spacetime, the net of local von Neumann algebras is generated by a field of type $(k,l)$. It is also assumed that the
instantiation of the theory in Minkowski
space\footnote{Here, the trivial
spin structure $\sigma_0(A) = R_{\Lambda(A)}e$ is intended, where
$e=(\partial/\partial x^\mu)_{\mu=0,\ldots,3}$ is the orthonormal
frame on Minkowski space associated with standard inertial coordinates $x^\mu$.} is a Wightman theory with the corresponding component of $\Phi$ as a Wightman field. One supposes that there is a
spacetime $(\Mb,\sigma)$ and a pair of causally
disjoint and relatively compact regions $\langle O_1,O_2\rangle\in\OO^{(2)}(\Mb)$
so that $\Phi$ exhibits anomalous statistics
\[
\Phi_{(\Mb,\sigma)}(f_1) \Phi_{(\Mb,\sigma)}(f_2)^*
+ (-1)^{k+l} \Phi_{(\Mb,\sigma)}(f_2)^*  \Phi_{(\Mb,\sigma)}(f_1)
= 0
\]
for all $f_i\in\CoinX{O_i;V^{(k,l)}}$ (or if the same holds with the adjoint removed). By a prototype of the rigidity arguments
discussed in Sect.~\ref{sect:rigidity},
it is proved that there must be a violation of the spin--statistics
connection in Minkowski space, which can only happen if all
smearings of $\Phi$ are trivial in Minkowski space~\cite[Thm 4--10]{StreaterWightman}. Thus the 
local algebras in Minkowski space, generated by $\Phi$, consist only of multiples of the unit and it follows that the same is true of local
algebras in all spacetimes. As these algebras are generated
by $\Phi$, one may conclude that \emph{in every spacetime}, 
all smearings of $\Phi$ are multiples of the unit operator
(vanishing in Minkowski space). 

To close this section, we note a number of potential
extensions. First, it is a slightly unsatisfactory feature  that 
the use of spin structures invokes unobservable geometric
structures from the start; similarly, the idea of spin is,  
to an extent, inserted by hand at
the start of the construction. It would be desirable
to understand more clearly why spin (which is tightly linked to rotations
in Minkowski space) continues to be an appropriate
notion in general curved spacetimes, and how it can
be incorporated in a more operational way. Second,
one would also like a spin--statistics connection
that is not based on algebras generated by a single 
field.  An account addressing these points is sketched in
\cite{Few_Regensburg:2015} and will appear in full shortly. 
The key idea is to base the framework on a category of spacetimes with global coframes
(i.e.,   a `rods and clocks' account of 
spacetime measurements). The spin--statistics theorem
that emerges from this analysis is again proved by
rigidity methods on the assumption that the
theory obeys standard statistics in Minkowski space.

\section{Subtheory embeddings and the global gauge group}\label{sect:gauge}

In category theory, it is often the morphisms between functors, 
i.e., natural transformations, that are the main point of interest.
Natural transformations appear in locally covariant QFT with important
physical interpretations: they are used in the description of
locally covariant fields and in order to compare theories.

The idea that equivalences of functors denote physically equivalent
theories was already present in~\cite{BrFrVe03}; the use of general
natural transformations to indicate subtheory embeddings was 
introduced in~\cite{FewVer:dynloc_theory}, while a systematic
study of endomorphisms and automorphisms of locally covariant
theories is given in~\cite{Fewster:gauge}, on which our presentation is based. 
In this section we consider
theories obeying Assumptions \ref{ax:loc_cov}--\ref{ax:timeslice} throughout.

\subsection{Subtheory embeddings: definition}
\begin{definition}
Let $\Af,\Bf:\Loc\to\Alg$ be locally covariant theories. 
Any natural transformation $\eta:\Af\nto\Bf$ is said to \emph{embed $\Af$ as a subtheory of $\Bf$}.
\end{definition}
The requirement that $\eta$ be natural means that there
is a collection of morphisms $\eta_\Mb:\Af(\Mb)\to\Bf(\Mb)$ ($\Mb\in\Loc$) such that the following diagram commutes
for every morphism $\psi:\Mb\to\Nb$ of $\Loc$:
\begin{equation}
\begin{tikzpicture}[baseline=0 em, description/.style={fill=white,inner sep=2pt}]
\matrix (m) [ampersand replacement=\&,matrix of math nodes, row sep=3em,
column sep=2.5em, text height=1.5ex, text depth=0.25ex]
{\Mb \&  \Af(\Mb) \&  \Bf(\Mb) \\
\Nb \& \Af(\Nb) \&  \Bf(\Nb)\\ };
\path[->,font=\scriptsize]
(m-1-1) edge node[auto] {$ \psi $} (m-2-1)
(m-1-2) edge node[auto] {$ \eta_\Mb $} (m-1-3)
        edge node[auto] {$ \Af(\psi) $} (m-2-2)
(m-2-2) edge node[auto] {$ \eta_\Nb $} (m-2-3)
(m-1-3) edge node[auto] {$ \Bf(\psi) $} (m-2-3);
\end{tikzpicture}
\end{equation}
That is, the transition between theories commutes
with the transitions between spacetimes. 
An example of a subtheory embedding is given by the 
even Klein--Gordon theory $\Af^{\text{ev}}$  defined 
in Sect.~\ref{sect:assumptions}. For each $\Mb$, let $\eta_\Mb:\Af^{\text{ev}}(\Mb)\to \Af(\Mb)$
be the inclusion of the subalgebra. As $\Af^{\text{ev}}(\Mb)$ is generated
by the unit and bilinear expressions $\Phi_\Mb(f)\Phi_\Mb(h)$ ($f,h\in\Df(\Mb)$) and $\Af^{\text{ev}}(\psi)$ is
a restriction of $\Af(\psi)$,
it is easily seen that $\eta_\Nb\circ\Af^{\text{ev}}(\psi)= \Af(\psi)\circ\eta_\Mb$ 
for all $\psi:\Mb\to\Nb$, and hence that $\eta:\Af^{\text{ev}}\nto\Af$. 
 
The physical interpretation of natural transformations as subtheory embeddings
 is supported by the following observations.
\begin{proposition}\label{prop:subtheory}
If $\eta:\Af\nto\Bf$, then (a)
\begin{equation}
\eta_\Mb\Af^\kin(\Mb;O)\subset \Bf^\kin(\Mb;O)
\end{equation}
for all nonempty $O\in\OO(\Mb)$ and $\Mb\in\Loc$,
with equality if $\eta$ is a natural isomorphism (also
called an \emph{equivalence});
(b) if $\psi\in\End(\Mb)$ then
\begin{equation}
\eta_\Mb\circ\Af(\psi) = \Bf(\psi)\circ\eta_\Mb;
\end{equation}
(c) for all $h\in H(\Mb)$, 
\begin{equation}
\eta_\Mb \circ \rce_\Mb^{(\Af)}[h]
=\rce_\Mb^{(\Bf)}[h] \circ  \eta_\Mb .
\end{equation}
\end{proposition}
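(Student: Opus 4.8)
The plan is to prove the three parts of Proposition~\ref{prop:subtheory} in order, since each uses the naturality of $\eta$ together with earlier structural results. For part (a), I would first recall that $\Af^\kin(\Mb;O)=\Af(\iota_{\Mb;O})(\Af(\Mb|_O))$ by definition, and similarly for $\Bf$. Naturality of $\eta$ applied to the $\Loc$-morphism $\iota_{\Mb;O}:\Mb|_O\to\Mb$ gives the commuting square $\eta_\Mb\circ\Af(\iota_{\Mb;O})=\Bf(\iota_{\Mb;O})\circ\eta_{\Mb|_O}$. Applying both sides to $\Af(\Mb|_O)$ yields
\[
\eta_\Mb\bigl(\Af^\kin(\Mb;O)\bigr) = \Bf(\iota_{\Mb;O})\bigl(\eta_{\Mb|_O}(\Af(\Mb|_O))\bigr) \subset \Bf(\iota_{\Mb;O})\bigl(\Bf(\Mb|_O)\bigr) = \Bf^\kin(\Mb;O),
\]
with equality when $\eta_{\Mb|_O}$ is onto, i.e.\ when $\eta$ is a natural isomorphism. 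For part (b), the statement $\eta_\Mb\circ\Af(\psi)=\Bf(\psi)\circ\eta_\Mb$ for $\psi\in\End(\Mb)$ is literally an instance of the naturality square of $\eta$ in the special case $\Nb=\Mb$, so nothing is required beyond citing the definition of natural transformation.

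Part (c) is the substantive one. The plan is to unfold both relative Cauchy evolutions using their definition from Section~\ref{sect:rce}. Fix $\Mb\in\Loc$ and $h\in H(\Mb)$, and choose Cauchy morphisms $\imath^\pm:\Mb^\pm\to\Mb$ and $j^\pm:\Mb^\pm\to\Mb[h]$ as in the construction, so that
\[
\rce_\Mb^{(\Af)}[h] = \Af(\imath^-)\circ\Af(j^-)^{-1}\circ\Af(j^+)\circ\Af(\imath^+)^{-1},
\]
and likewise for $\Bf$. The key point is that naturality of $\eta$ holds for \emph{every} $\Loc$-morphism, in particular for $\imath^\pm$ and $j^\pm$: these are morphisms into $\Mb$ or into $\Mb[h]$, both of which are objects of $\Loc$. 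So I would write down the four naturality squares
\[
\eta_\Mb\circ\Af(\imath^\pm) = \Bf(\imath^\pm)\circ\eta_{\Mb^\pm},\qquad
\eta_{\Mb[h]}\circ\Af(j^\pm) = \Bf(j^\pm)\circ\eta_{\Mb^\pm}.
\]
Since $\Af$ and $\Bf$ satisfy the timeslice property and the $\imath^\pm, j^\pm$ are Cauchy, all of $\Af(\imath^\pm),\Af(j^\pm),\Bf(\imath^\pm),\Bf(j^\pm)$ are isomorphisms, so the squares can be inverted: e.g.\ from $\eta_{\Mb[h]}\circ\Af(j^\pm)=\Bf(j^\pm)\circ\eta_{\Mb^\pm}$ we obtain $\eta_{\Mb^\pm}\circ\Af(j^\pm)^{-1}=\Bf(j^\pm)^{-1}\circ\eta_{\Mb[h]}$, and similarly $\Af(\imath^\pm)^{-1}$ intertwines $\eta_\Mb$ and $\eta_{\Mb^\pm}$ in the appropriate direction. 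Then one simply composes the four intertwining relations along the clockwise path, so that the intermediate $\eta$'s (namely $\eta_{\Mb^+}$, $\eta_{\Mb[h]}$, $\eta_{\Mb^-}$) cancel in turn, leaving $\eta_\Mb\circ\rce_\Mb^{(\Af)}[h]=\rce_\Mb^{(\Bf)}[h]\circ\eta_\Mb$.

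I do not anticipate a genuine obstacle here; the whole proposition is a formal consequence of naturality, functoriality and the timeslice property. The one point that deserves care is the bookkeeping in part (c): one must make sure that the domains and codomains of the various $\eta_{\Mb^\pm}$, $\eta_{\Mb[h]}$, $\eta_\Mb$ match up correctly as one traverses the diagram, and that each inversion step is legitimate — this is exactly where the timeslice property is used. It is also worth noting explicitly that $\rce$ is independent of the choice of Cauchy morphisms, so the identity established for one convenient choice holds in general; since both $\rce^{(\Af)}_\Mb[h]$ and $\rce^{(\Bf)}_\Mb[h]$ are defined independently of these choices, no ambiguity arises. A clean way to present part (c) is as a single chain of equalities built from the four intertwining relations, which I would display in an \texttt{align*} environment to make the cancellations transparent.
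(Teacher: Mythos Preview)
Your proposal is correct and follows essentially the same approach as the paper: parts (a) and (b) are obtained directly from the naturality square applied to $\iota_{\Mb;O}$ and to $\psi\in\End(\Mb)$ respectively, and part (c) is obtained by writing down the naturality squares for the four Cauchy morphisms $\imath^\pm,j^\pm$, inverting via timeslice, and composing. The paper's own proof is more terse---it simply cites \cite[Prop.~3.8]{FewVer:dynloc_theory} for part (c) with the remark that it is ``simply a matter of employing the basic definitions a number of times''---so your write-up actually supplies the details the paper omits.
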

\begin{proof}\smartqed
Part~(b) is simply a special case of the definition.
Similarly, applying the definition to the embedding $\iota_{\Mb;O}$
we have $\Bf(\iota_{\Mb;O})\circ\eta_{\Mb|_O}= \eta_\Mb\circ
\Af(\iota_{\Mb;O})$ from which part~(a) follows on taking images.
Part~(c) is proved in \cite[Prop.~3.8]{FewVer:dynloc_theory} and is again simply a matter of 
employing the basic definitions a number of times. 
\end{proof}
The above result shows that subtheory
embeddings act locally (part (a)) and intertwine both geometric
symmetries (part (b)) and the dynamics of the theory (c).
In particular, if the relative Cauchy evolution can be differentiated
to yield a stress-energy tensor, acting as a derivation, then (c)
implies
\begin{equation}
[T_\Mb^{(\Bf)}[\fb],\eta_\Mb A] = \eta_\Mb [T_\Mb^{(\Af)}[\fb],A]
\end{equation}
which shows clearly that $\eta_\Mb$ identifies degrees of
freedom of $\Af$ with some of those of $\Bf$ in a physically meaningful way.
 
An equivalence of a theory $\Af$ with itself -- an automorphism of the theory -- 
has a special significance. The automorphisms of any functor $\Af$ form a group $\Aut(\Af)$
under composition, and it is a pleasing aspect of locally covariant quantum
field theories that their automorphism groups can be interpreted as
global gauge groups~\cite{Fewster:gauge}: as Prop.~\ref{prop:subtheory}(a),(b)
shows, any $\zeta\in\Aut(\Af)$ has components $\zeta_\Mb$ that map each
local algebra  $\Af^\kin(\Mb;O)$ isomorphically to itself  and commute with
the action of spacetime symmetries. These are natural generalisations of
conditions set down by Doplicher, Haag and Roberts~\cite{DHRi} for global gauge symmetries 
in Minkowski AQFT.\footnote{DHR work
in the Hilbert space representation of the Poincar\'e invariant vacuum state,
and require that global gauge transformations should leave the vacuum vector
invariant. This also has an analogue in the present setting~\cite{Fewster:gauge}.} 
In the DHR analysis, nets of local algebras with nontrivial global gauge group are
called \emph{field algebras}. By contrast, a local algebra of observables is the 
subalgebra of the corresponding field algebra consisting of fixed elements under 
the action of the gauge group. One may make a similar construction in the
locally covariant context~\cite{Fewster:gauge}: defining $\Af_{\text{obs}}(\Mb)$
to be the subalgebra of $\Af(\Mb)$ fixed under the action of all $\zeta_\Mb$ ($\zeta\in\Aut(\Af)$),
each $\psi:\Mb\to\Nb$ induces a $\Af_{\text{obs}}(\psi):\Af_{\text{obs}}(\Mb)\to
\Af_{\text{obs}}(\Nb)$ by restriction of $\Af(\psi)$, and overall yields
a new locally covariant theory $\Af_{\text{obs}}$ that can be taken as the theory of observables'
relative to the `field functor' $\Af$. This interpretation is not entirely satisfactory  
(see \cite[\S 3.3]{Fewster:gauge} for some cautionary remarks) but works well
in a number of examples.

\subsection{Subtheory embeddings: classification}
 
The introduction of natural transformations raises the question of whether
they are operationally meaningful, given the need to discuss relationships
between theories on all possible spacetimes. This question is answered
by a rigidity argument similar to those used in Section~\ref{sect:rigidity}. 
\begin{theorem} \label{thm:subtheory_rigidity}
Suppose $\eta,\zeta:\Af\nto\Bf$, for theories $\Af,\Bf$, with $\Af$ assumed additive with respect to truncated multidiamonds.
If, for some $\Mb\in\Loc$ and nonempty $O\in\OO(\Mb)$,  $\eta_\Mb$ and $\zeta_\Mb$ agree on the local kinematic algebra $\Af^\kin(\Mb;O)$, then $\eta=\zeta$. 
\end{theorem}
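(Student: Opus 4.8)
The plan is to run a rigidity argument of the kind used for Theorem~\ref{thm:causality}, applied to the family of propositions $P_\Mb(O)$ ($\Mb\in\Loc$, $O\in\OO(\Mb)$) asserting that $\eta_\Mb$ and $\zeta_\Mb$ agree on $\Af^\kin(\Mb;O)$, followed by an appeal to the additivity hypothesis. First I would record that $P$ enjoys analogues of the rigidity hypotheses R1--R3: (R1) $P_\Mb(O)\iff P_\Mb(D_\Mb(O))$ by \eqref{eq:kin_timeslice}; (R2) for $\psi:\Mb\to\Nb$, $P_\Mb(O)\iff P_\Nb(\psi(O))$ --- one writes an arbitrary element of $\Af^\kin(\Nb;\psi(O))$ as $\Af(\psi)a$ with $a\in\Af^\kin(\Mb;O)$ using \eqref{eq:kin_covariance}, applies naturality $\eta_\Nb\circ\Af(\psi)=\Bf(\psi)\circ\eta_\Mb$ (and likewise for $\zeta$), and uses injectivity of $\Bf(\psi)$; (R3) $\widetilde O\subset O\implies\big(P_\Mb(O)\implies P_\Mb(\widetilde O)\big)$ by isotony \eqref{eq:kin_isotony}. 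In addition, if $\psi:\Mb\to\Nb$ is a \emph{Cauchy} morphism then the timeslice property makes $\Af(\psi)$ and $\Bf(\psi)$ isomorphisms, so naturality gives $\eta_\Nb=\Bf(\psi)\circ\eta_\Mb\circ\Af(\psi)^{-1}$ and similarly for $\zeta$; hence $\eta_\Mb=\zeta_\Mb$ (on all of $\Af(\Mb)$) iff $\eta_\Nb=\zeta_\Nb$. Finally, applying R2 to the subspacetime inclusion $\iota_{\Mb;O}:\Mb|_O\to\Mb$ identifies $P_\Mb(O)$ with the statement $\eta_{\Mb|_O}=\zeta_{\Mb|_O}$.

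The core step is to show that the single hypothesis $P_\Mb(O)$ forces $P_{\widetilde\Mb}(D)$ for every truncated multi-diamond $D$ in every $\widetilde\Mb\in\Loc$. Fix such a $D$, with $k$ connected components, each of Cauchy surface topology $\RR^{n-1}$. As $O$ is nonempty it contains a diamond $D_0=D_\Mb(B_0)$, and choosing $k$ distinct points of the Cauchy ball $B_0$ together with sufficiently small Cauchy balls about them whose domains of dependence are pairwise causally disjoint produces a $k$-component multi-diamond $\widehat D\subset D_0\subset O$, each of whose components again has Cauchy surface topology $\RR^{n-1}$. By R3, $P_\Mb(\widehat D)$, i.e.\ $\eta_{\Mb|_{\widehat D}}=\zeta_{\Mb|_{\widehat D}}$. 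Now $\Mb|_{\widehat D}$ and $\widetilde\Mb|_D$ are $\Loc$-objects whose Cauchy surfaces --- each a disjoint union of $k$ copies of $\RR^{n-1}$ --- are oriented-diffeomorphic, so Proposition~\ref{prop:Cauchy_chain} yields a chain of Cauchy morphisms $\Mb|_{\widehat D}\xleftarrow{\alpha}\Pb\xrightarrow{\beta}\Ib\xleftarrow{\gamma}\Fb\xrightarrow{\delta}\widetilde\Mb|_D$. Transferring the equality of $\eta$ and $\zeta$ across each link by the Cauchy-morphism property above gives $\eta_{\widetilde\Mb|_D}=\zeta_{\widetilde\Mb|_D}$, that is, $P_{\widetilde\Mb}(D)$.

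To conclude, note that $\{a\in\Af(\widetilde\Mb):\eta_{\widetilde\Mb}(a)=\zeta_{\widetilde\Mb}(a)\}$ is a unital $*$-subalgebra of $\Af(\widetilde\Mb)$; by the previous step it contains $\Af^\kin(\widetilde\Mb;D)$ for every truncated multi-diamond $D$, and since $\Af$ is additive with respect to truncated multi-diamonds these subalgebras generate $\Af(\widetilde\Mb)$, whence $\eta_{\widetilde\Mb}=\zeta_{\widetilde\Mb}$; as $\widetilde\Mb$ was arbitrary, $\eta=\zeta$. I expect the only step requiring genuine care to be the geometric one --- that a nonempty causally convex open set contains a multi-diamond with an \emph{arbitrary} prescribed number of causally disjoint components, each of Cauchy surface topology $\RR^{n-1}$ --- since everything else is bookkeeping with naturality, the timeslice property, and Proposition~\ref{prop:Cauchy_chain}. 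It is exactly the necessity of reaching multi-diamonds of unbounded component number (a single diamond would not control a highly disconnected target spacetime) that makes additivity with respect to \emph{truncated multidiamonds}, rather than a weaker additivity assumption, the right hypothesis to invoke at the last step.
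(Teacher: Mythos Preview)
Your argument is correct and is precisely the rigidity argument the paper has in mind (the paper's own proof simply cites \cite[Thm~2.6]{Fewster:gauge} as a straightforward generalisation). One refinement worth making: you invoke the timeslice property of $\Bf$ when you say ``the timeslice property makes $\Af(\psi)$ and $\Bf(\psi)$ isomorphisms'', but the paper explicitly remarks that the timeslice property of $\Bf$ is \emph{not} used. Indeed your own formula $\eta_\Nb=\Bf(\psi)\circ\eta_\Mb\circ\Af(\psi)^{-1}$ requires only that $\Af(\psi)$ be invertible, and the reverse implication in your ``iff'' needs only that $\Bf(\psi)$ be injective --- which holds for every $\Alg$-morphism by definition. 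So you can drop the assumption on $\Bf$ entirely and your argument still goes through unchanged.
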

\begin{proof} \smartqed 
This is a straightforward generalization
of \cite[Thm~2.6]{Fewster:gauge}. We remark that 
the timeslice property of $\Bf$ is not used.
\end{proof}
This result shows that the local behaviour of a subtheory embedding in one
spacetime is enough to fix it uniquely. In any individual spacetime, moreover, 
Prop.~\ref{prop:subtheory}(c) gives strong constraints and facilitates the 
classification of subtheory embeddings. 

Two examples have been worked out in full detail. For 
the example of finitely many independent minimally coupled Klein--Gordon fields, 
with $\nu_m$ denoting the number of fields of mass $m$,  the gauge group is a direct product of factors $G_m$ over the mass spectrum, with $G_m=\text{O}(\nu_m)$ for $m>0$ and  
$G_0 =\text{O}(\nu_0) \ltimes \RR^{\nu_0*}$,  
where  $\RR^{k*}$
denotes the additive group of $k$-dimensional real row vectors, and the semidirect
product is given by $\left(R,\ell\right)\cdot \left(R',\ell'\right) = \left(RR',\ell R_0' +\ell'\right)$  \cite{Fewster:gauge}.
For example, the theory $\Af^{(\nu)}$ consisting of $\nu$ Klein--Gordon fields $\Phi^{(j)}$ ($1\le j\le \nu$) of common mass $m>0$, has automorphisms $\zeta_R$ labelled by $R\in\text{O}(\nu)$, acting so that 
\begin{equation}\label{eq:On_action}
(\zeta_R)_\Mb\Phi_\Mb^{(j)}(f) = R_{i}^{\phantom{i}j}\Phi_\Mb^{(i)}(f)
\end{equation}
(summing on $i$), while in the massless case, there are automorphisms 
$\zeta_{(R,\ell)}$ labelled by $(R,\ell)\in \text{O}(\nu_0) \ltimes \RR^{\nu_0*}$, so that
\[
(\zeta_{(R,\ell)})_\Mb\Phi_\Mb^{(j)}(f) = R_{i}^{\phantom{i}j}\Phi_\Mb^{(i)}(f) +
\left(\int_\Mb  f\,\dvol_\Mb\right)\ell^j \II_{\Af^{(\nu)}(\Mb)}.
\]
It is not hard to verify that these formulae define automorphisms of $\Af^{(\nu)}(\Mb)$
that are components of natural transformations. What was shown in~\cite{Fewster:gauge}
was a rather more: every \emph{endo}morphism $\eta$ of $\Af^{(\nu)}$, at least 
under the additional assumption of regularity that every $\eta_\Mb^*$ maps states with
distributional $k$-point functions to states with distributional $k$-point functions, 
is one of the \emph{auto}morphisms described above. 

The second case studied was the Klein--Gordon theory with external sources~\cite{FewSchenkel:2014},
which is formulated on a category of spacetimes with sources. 
Here, the gauge group can be determined at the purely algebraic level,
without additional regularity conditions. As might be expected, the effect of the external sources
is to break the $\text{O}(\nu_m)$ symmetries for $m\ge 0$, leaving only a $\RR^{\nu_0*}$ symmetry
for $m=0$.  

In both examples just mentioned, every endomorphism of the theory turns out to be an automorphism;
there is no way of properly embedding the theory as a subtheory of itself. It is not hard 
to give examples of locally covariant theories where this is not the case: for example, 
the theory of countably many independent scalar fields $\Af^{(\aleph_0)}$ of common mass and
coupling constant has an endomorphism $\eta$ acting on the generating fields by
$\eta_\Mb \Phi^{(j)}_\Mb(f) = \Phi^{(j+1)}_\Mb(f)$ for all $f\in\CoinX{\Mb}$, $\Mb\in\Loc$.
However, under a condition of \emph{energy compactness} (weaker than either of the nuclearity~\cite{BucWic:1986} or Haag--Swieca~\cite{HaaSwi:1965} criteria) 
it may be shown that proper endomorphisms are excluded and that all endomorphisms
are automorphisms~\cite[Thm 4.6]{Fewster:gauge}. The additional  assumptions
required are that the instantiation of the locally covariant theory in Minkowski space
should comply with standard assumptions of AQFT, and also that there are
no `accidental symmetries' of the Minkowski space theory.  
The result also shows that, if the gauge group is given 
a natural topology (which requires the introduction of a state space) then it is compact. 

The gauge group provides a useful invariant of locally covariant theories, because
the automorphism groups of isomorphic functors are isomorphic. This allows one
to read off, for example, that the theories $\Af^{(j)}$ described above are
inequivalent for distinct values of $j$,  by virtue of their inequivalent gauge groups.  
In the same vein, the computation of the gauge group in~\cite{FewSchenkel:2014} was used to show that an earlier quantization of the Klein--Gordon theory with sources was incorrect,
because its gauge group contained unexpected symmetries. Regarding subtheory
embeddings, if $\eta:\Af\nto\Bf$ and $\zeta:\Bf\nto\Af$, and (say) $\Af$ obeys
the hypotheses of \cite[Thm 4.6]{Fewster:gauge}, then $\zeta\circ\eta$ must be
an automorphism of $\Af$, so (as $\zeta$ is monic) $\eta$ and $\zeta$ are both isomorphisms~\cite[Cor. 4.7]{Fewster:gauge} (cf.\ the 
Cantor--Schröder--Bernstein theorem for sets).
 
A good example of \emph{inequivalent} theories is given by Klein--Gordon theories $\Af_1$ and $\Af_2$ with
distinct masses $m_1$ and $m_2$~\cite{BrFrVe03} (we give a slightly different argument). 
Let $\omega_2$ be the Poincar\'e-invariant vacuum state on the Minkowski space theory 
$\Af_{2}(\Mb_0)$, which means that $\Af(\psi)^*\omega_2=\omega_2$ for every Poincar\'e transformation $\psi:\Mb_0\to\Mb_0$.
If there is an equivalence $\zeta:\Af_{1}\nto\Af_{2}$ then 
Proposition~\ref{prop:subtheory}(b) implies that $\omega=\zeta_{\Mb_0}^*\omega_2$ satisfies
\[
\Af_1(\psi)^*\omega=
\Af_1(\psi)^*\zeta_{\Mb_0}^*\omega_2= \zeta_{\Mb_0}^*\Af_2(\psi)^*\omega_2=\zeta_{\Mb_0}^*\omega_2=\omega
\]
for all Poincar\'e transformations $\psi$, so $\omega$ is a Poincar\'e-invariant state on $\Af_1(\Mb_0)$. 
Indeed, the GNS representation of $\Af_1(\Mb_0)$ induced by $\omega$ can be taken as $(\HH_2,\pi_2\circ\zeta_{\Mb_0},\DD_2,\Omega_2)$
where $(\HH_2,\pi_2 ,\DD_2,\Omega_2)$ is the GNS representation induced by $\omega_2$. 
Crucially, the unitary implementation $U_2(\psi)$ of Poincar\'e transformations in $\HH_2$ 
also implements the Poincar\'e transformations on $\Af_1(\Mb_0)$: 
setting $\pi=\pi_2\circ\zeta_{\Mb_0}$,
\begin{align*}
U_2(\psi)\pi(A)U_2(\psi)^{-1} &= 
U_2(\psi)\pi_2(\zeta_{\Mb_0}A)U_2(\psi)^{-1}= \pi_2(\Af_2(\psi)\circ\zeta_{\Mb_0}A) \\
&=
\pi_2(\zeta_{\Mb_0}\circ\Af_1(\psi)A) = \pi(\Af_1(\psi)A).
\end{align*}
Therefore $\omega$ is not only Poincar\'e-invariant but also obeys the spectrum condition,
i.e., the momentum operators $P_a$ corresponding to the unitary
representation of the translations have joint spectrum
in the forward lightcone.
If $\omega$ has a distributional $2$-point function, one may show
$P_a P^a \pi(\Phi_1(f))\Omega_2=\pi(-\Phi_1(\Box f))\Omega_2=m_1^2\pi(\Phi_1(f))\Omega_2$
for all $f\in\CoinX{\Mb_0}$ (cf.~e.g., the proof of \cite[Prop. 5.6]{Fewster:gauge}). Using the Reeh--Schlieder property of $\Omega_2$, we see that
$P_a P^a$ has an eigenvalue $m_1^2$. But $P_a P^a$ is the mass-squared operator for the
vacuum representation of $\Af_2(\Mb_0)$ and so has discrete spectrum $\{0,m_2^2\}$, a
contradiction.  Accordingly, there is no equivalence $\zeta$ between $\Af_1$ and $\Af_2$ so that $\zeta_{\Mb_0}^*\omega_2$
has distributional $2$-point function. 

In a similar way, but considering e.g., de Sitter spacetime instead of Minkowski space, one can rule out 
the possibility of (sufficiently regular) equivalences between Klein--Gordon theories with differing curvature couplings.

\subsection{Action on fields}

As shown in Section~\ref{sect:fields}, the locally covariant (linear) fields of a theory of a given type
form an abstract algebras (resp., vector spaces) $\Fld(\Df,\Af)$ (resp., $\Fldlin(\Df,\Af)$). 
The gauge group acts on these algebras/spaces in a natural fashion: Given any $\eta\in G$, and $\Phi\in\Fld(\Df,\Af)$, define the transformed field $\eta\cdot\Phi\in\Fld(\Df,\Af)$ by
\begin{equation}\label{eq:Gaction}
(\eta\cdot\Phi)_\Mb(f) = \eta_\Mb\Phi_\Mb(f)\qquad (f\in\CoinX{\Mb},~\Mb\in\Loc),
\end{equation}
which clearly obeys the naturality condition 
\begin{align}
\Af(\psi)(\eta\cdot\Phi)_\Mb(f) &= \Af(\psi)\circ \eta_\Mb(\Phi_\Mb(f)) = 
\eta_\Nb \circ \Af(\psi) \Phi_\Mb(f) = \eta_\Nb \Phi_\Nb(\psi_*f) \nonumber \\ 
&= (\eta\cdot\Phi)_\Nb(\psi_*f) 
\end{align}
for all $\psi:\Mb\to\Nb$, $f\in\CoinX{\Mb}$. Moreover, it is easily seen that
$\Phi\mapsto\eta\cdot\Phi$ is a $*$-automorphism of $\Fld(\Df,\Af)$, 
so we have defined a 
group homomorphism $G\mapsto \Aut(\Fld(\Df,\Af))$. Restricting to linear fields, \eqref{eq:Gaction} defines a representation of $G$ on 
$\Fldlin(\Df,\Af)$, which obeys $\eta\cdot\Phi^\star= (\eta\cdot\Phi)^\star$,
and is thus
a real linear representation. These representations are continuous with respect to a natural topology on $\Aut(\Af)$. 

In either case, we may define a {\em multiplet of fields} as any subspace of
$\Fld(\Df,\Af)$ (or \break $\Fldlin(\Df, \Af)$) transforming under an indecomposable representation of $G$.
Every field can then be associated with an equivalence class of $G$-representations.
Let $\rho, \sigma$ be the equivalence classes corresponding to fields
$\Phi$, $\Psi$. Then $\Phi^*$ transforms in the
complex conjugate representation $\bar{\rho}$ to $\rho$, while 
any linear combination of $\Phi$ and
$\Psi$ transforms in a subrepresentation of a quotient of $\rho\oplus\sigma$.
Here, the quotient allows for algebraic relationships; for example, if 
$\Phi$ and $\Psi$ belong to a common multiplet, then their linear combinations belong to the same multiplet. Similarly, $\Phi\Psi$ and $\Psi\Phi$ transform in (possibly different)
subrepresentations of quotients of $\rho\otimes\sigma$.  

For example, consider a locally covariant theory $\Af^{(3)}$ consisting of three independent massive scalar fields of common mass $m>0$ (and, for simplicity, minimal coupling), which has
an $\text{O}(3)$ of automorphisms described in \eqref{eq:On_action}. The scalar fields $\Phi^{(j)}$ ($j=1,2,3$)
span a $3$-dimensional multiplet associated with the defining representation $\sigma$ of $\text{O}(3)$, while the nonlinear fields $\Psi^{(S)}$  defined by 
\[
\Psi_\Mb^{(S)}(f) =  S_{ij}\Phi_\Mb^{(i)}(f)\Phi_\Mb^{(j)}(f) ,
\]
where $S$ is a complex symmetric $3\times 3$ matrix, 
span a $6$-dimensional subspace of $\Fld(\Df,\Af^{(3)})$ 
(carrying a subrepresentation of $\sigma\otimes\sigma$) and decomposes into a $1$-dimensional multiplet spanned by $\Psi^{(I)}$, where $I$ is the identity matrix, and a $5$-dimensional multiplet spanned by the $\Psi^{(S)}$, where $S$ is symmetric and traceless.

\subsection{Universal formulation of the free scalar field}\label{sec:KGreform}

Our treatment of the scalar field so far has followed
the traditional route of constructing algebras in each individual spacetime
and then specifying suitable morphisms between them in order to obtain a functor. 
One might characterise this as a bottom-up approach. 
We now describe an alternative top-down description, in which 
one specifies the theory directly at the functorial level.

First, observe that the Klein--Gordon operators
$P_\Mb\phi:=(\Box_\Mb+m^2+\xi R_\Mb)\phi=0$ 
form the components of a natural transformation
$P:\Df\nto\Df$, where $\Df:\Loc\to\Vect$ is 
as in \eqref{eq:Df_def} (but viewed as a functor to $\Vect$). 
This follows directly from the 
fact that $\psi$ is an isometry, and consequently $P_\Nb\psi_*f =
\psi_*P_\Mb f$ for all $f\in\Df(\Mb)$.
We may also define a bilocal natural scalar 
$E:\Df^{(2)}\nto \underline{\CC}$, whose component
in each $\Mb$ is precisely the advanced-minus-retarded bidistribution $E_\Mb$ for $P_\Mb$.
Here, $\underline{\CC}:\Loc\to\Vect$ is the constant functor
giving $\CC$ on all objects and $\id_\CC$ on all morphisms.
We also define $\II^{(\Af)}:\underline{\CC}\nto \Vf\circ\Af$ by
$\II^{(\Af)}_\Mb(z) =z\II_{\Af(\Mb)}$ ($z\in\CC$), where
$\Vf:\Alg\to\Vect$ is the forgetful functor.

Given these definitions, the Klein--Gordon theory may be
given a universal form:
\begin{definition}\label{def:univ_KG}
A \emph{Klein--Gordon theory} with field equation $P$ is a pair $(\Af,\Phi)$,
where $\Af:\Loc\to\Alg$ is a functor and $\Phi\in\Fldlin(\Df,\Af)$
is a linear field such that
\begin{itemize}
\item $\Phi^\star= \Phi$, i.e., $\Phi$ is hermitian
\item $\Phi\circ P = 0$, the zero field
\item $\Phi\stackrel{\rightarrow}{\otimes}\Phi -
\Phi\stackrel{\leftarrow}{\otimes}\Phi = i \II^{(\Af)}\circ E$ (see \eqref{eq:bilocal}),
\end{itemize} 
and which is universal in the sense that, if $(\Bf,\Psi)$ is any other pair with these properties,
then there is a unique subtheory embedding $\eta:\Af\nto\Bf$ such that
$\Psi = \eta\cdot\Phi$.
\end{definition}
This definition specifies Klein--Gordon theories up to equivalence,\footnote{Suppose $(\Af,\Phi)$ and $(\Bf,\Psi)$ both 
satisfy Definition~\ref{def:univ_KG}. Then there are naturals
$\eta:\Af\nto\Bf$ and $\zeta:\Bf\nto\Af$ such that
$\Psi=\eta\cdot\Phi$ and $\Phi=\zeta\cdot\Psi$. Hence
also $\Phi=(\zeta\circ\eta)\cdot \Phi$. But by the universal
property yet again, the only natural $\xi:\Af\nto\Af$ such
that $\Phi=\xi\cdot\Phi$ is the identity, $\xi_\Mb=\id_{\Af(\Mb)}$ for
all $\Mb\in\Loc$. Hence $\zeta\circ\eta=\id_{\Af}$ and by similar
reasoning applied to $(\Bf,\Psi)$, we also have $\eta\circ\zeta=
\id_{\Bf}$. Hence $\eta$ is an equivalence.}
so it is reasonable to speak of \emph{the} Klein--Gordon theory.
The original construction of the theory is needed to show that
the theory exists, but beyond that, it ought to be possible to work
with Definition~\ref{def:univ_KG} alone. 
Other models of locally covariant QFT can be given similar universal formulations. 

We now prove that $(\Af,\Phi)$ has the universal property, where $\Af$ is our standard Klein--Gordon functor and $\Phi$ its standard associated locally covariant field.
Suppose $(\Bf,\Psi)$ satisfies the other axioms. 
For each $\Mb\in\Loc$, we define a unital $*$-homomorphism
$\eta_\Mb:\Af(\Mb)\to\Bf(\Mb)$ by $\eta_\Mb\Phi_\Mb(f) = \Psi_\Mb(f)$ ($f\in\CoinX{\Mb}$),
which is well-defined because
the $\Phi_\Mb(f)$ generate $\Af(\Mb)$ and because both
$\Phi_\Mb$ and $\Psi_\Mb$ obey the relations itemized in
Definition~\ref{def:univ_KG}. Furthermore, $\Af(\Mb)$ is simple, so $\eta_\Mb$
is either monic or the zero map, and the latter case
is excluded because units and zeros are distinct for objects of $\Alg$.
Thus $\eta_\Mb:\Af(\Mb)\to\Bf(\Mb)$ is well-defined as an $\Alg$-morphism. Suppose that $\psi:\Mb\to\Nb$, then
\begin{align}
\eta_\Nb\Af(\psi) \Phi_\Mb(f) &= \eta_\Nb\Phi_\Nb(\psi_* f) 
= \Psi_\Nb(\psi_* f)  = \Bf(\psi) \Psi_\Mb(f) \notag \\ &=
\Bf(\psi)\eta_\Mb\Phi_\Mb(f)
\end{align}
for all $f\in\CoinX{\Mb}$. As the $\Phi_\Mb(f)$ generate
$\Af(\Mb)$, it follows that the components $\eta_\Mb$
cohere to form a natural transformation $\eta:\Af\nto\Bf$.
Uniqueness is clear from the foregoing argument, because
$\eta$ was fixed completely by requiring $\eta\cdot\Phi=\Psi$.

\section{Dynamical locality and SPASs}\label{sect:SPASs}

Our last topic brings us back to the fundamental
purpose of locally covariant QFT, namely, the description of 
common `physical content' in all (reasonable) spacetimes. 
The functorial definition of a theory certainly gives a common
mathematical definition across different spacetimes, but
under what circumstances can this be said to represent the
same physics?\footnote{In theories based on a classical Lagrangian 
one usually proceeds simply to use the `same' Lagrangian (modulo some
subtleties~\cite{FewsterRegensburg}) but this option is not
open in a general AQFT context.}
This question was addressed in~\cite{FewVer:dynloc_theory}
and will be briefly summarised here. 

We have already described how a pathological locally covariant theory $\Bf$ 
may be constructed from a basic theory $\Af$ (which can be as well-behaved
as one likes) -- see equation~\eqref{eq:oneortwo_obj} and~\eqref{eq:oneortwo_mor}.
In spacetimes with compact Cauchy surfaces the theory corresponds to
two copies of $\Af$, while in those with noncompact Cauchy surfaces we have
a single copy. Suppose we accept that $\Af$ represents the same physics
in all spacetimes (SPASs) according to some notion of what that might mean. 
Then $\Bf$ surely cannot also represent SPASs according to the same notion
as $\Af$ -- as their physical content coincides in some, but not all, spacetimes. 
 
This may be put into a more mathematical form as follows.  
For each $\Mb\in\Loc$, define $\zeta_\Mb:\Af(\Mb)\to\Bf(\Mb)$
and $\eta_\Mb:\Bf(\Mb)\to\Af(\Mb)\otimes\Af(\Mb)$ by  
\begin{equation}
\zeta_\Mb A = 
\begin{cases}
A &  \text{if $\Mb$ has noncompact Cauchy surfaces} \\
A\otimes\II & \text{if $\Mb$ has compact Cauchy surfaces}
\end{cases}
\end{equation}
\begin{equation} 
\eta_\Mb A = 
\begin{cases}
A & \text{if $\Mb$ has compact Cauchy surfaces} \\
A\otimes\II & \text{if $\Mb$ has noncompact Cauchy surfaces}.
\end{cases}
\end{equation} 
It is straightforward to check that these are well-defined morphisms and that,
for every $\psi:\Mb\to\Nb$, the naturality conditions $\zeta_\Nb\circ\Af(\psi) = 
\Bf(\psi)\circ\zeta_\Mb$ and $\eta_\Nb\circ\Bf(\psi) = 
(\Af\otimes\Af)(\psi)\circ\eta_\Mb$ hold.\footnote{Recall that if $\Mb$ has compact
Cauchy surfaces then so does $\Nb$.} Thus we have subtheory embeddings
\begin{equation}
\Af\xlongrightarrow[\zeta]{\cdot}\Bf \xlongrightarrow[\eta]{\cdot}\Af\otimes\Af,
\end{equation}
which are \emph{partial isomorphisms} -- meaning that there is at least one spacetime
for which the component of $\zeta$ is an isomorphism, and likewise for $\eta$.
However,  there are also spacetimes for which the corresponding component is not an
isomorphism, so neither $\zeta$ nor $\eta$ is an equivalence of theories.\footnote{This assumes
that $\Af$ is not isomorphic to $\Af\otimes\Af$. A convenient way of ruling out
such isomorphisms is to check that $\Af$ and $\Af\otimes\Af$ have nonisomorphic gauge groups.}
The situation just described gives a formal expression to the idea that $\Bf$ coincides
with $\Af$ in some spacetimes and with $\Af\otimes\Af$ in others. 

To summarize the discussion so far, let $\Tgth$ be a class of locally covariant quantum field theories.
We have argued that a necessary condition for $\Tgth$ to represent theories conforming to some particular notion of SPASs is that \emph{every partial isomorphism
between theories in $\Tgth$ is an isomorphism}. We refer to this necessary condition as the \emph{SPASs property}. Our example has shown immediately that the full class of locally covariant theories
does not have the SPASs property. 

The local structure of the pathological theory $\Bf$ is instructive. Suppose
$\Mb\in\Loc$ and choose $O\in\OO(\Mb)$ so that $\Mb|_O$ has noncompact
Cauchy surfaces. Then $\Bf^\kin(\Mb;O)\cong
\Bf(\Mb|_O)=\Af(\Mb|_O)$ whether or not $\Mb$ has compact Cauchy surfaces --
the kinematic local algebras only `sense' one copy of $\Af$. 
However,  another way to
sense the local degrees of freedom, based on dynamics, was introduced in \cite{FewVer:dynloc_theory}.

Let $\Cf$ be a locally covariant QFT and $K$ be a compact subset of $\Mb\in\Loc$. Define 
\[
\Cf^\bullet(\Mb;K) := \{C\in \Cf(\Mb): \rce_\Mb^{(\Cf)}[h]C = C~\text{for all $h\in H(\Mb)$ with
$\supp h\subset K^\perp$}\},
\]
where $K^\perp = \Mb\setminus J_\Mb(K)$ is the causal complement of $K$. 
Elements of $\Cf^\bullet(\Mb;K)$ are precisely those unaffected by any metric perturbation
supported in the causal complement of $K$.  In Sec.~\ref{sect:rce} we observed that
$\rce^{(\Cf)}_\Mb[h]$ acts trivially on any $\Cf^\kin(\Mb;O)$ with $O$ causally disjoint
from the support of $h$ -- here, we turn this around to give a new definition
of the local content of the theory. For each $O\in\OO(\Mb)$, we define 
\emph{dynamical algebra} $\Cf^\dyn(\Mb;O)$ to be
subalgebra of $\Cf(\Mb)$ generated by the $\Af^\bullet(\Mb;K)$
as $K$ ranges over all compact subsets of $O$ that have a multidiamond neighbourhood
with base in $O$ -- see  \cite[\S 5]{FewVer:dynloc_theory} for details.  

In the case of our pathological theory $\Bf$, we see that $\rce^{\Bf}_\Mb[h]= \rce^{\Af}_\Mb[h]$ if $\Mb$ has noncompact Cauchy surfaces and
$\rce^{\Bf}_\Mb[h]= \rce^{\Af}_\Mb[h]^{\otimes 2}$ if they are compact.  Correspondingly, 
we see that $\Bf^\dyn(\Mb;O)=\Af^\dyn(\Mb;O)$ in the noncompact case
and $\Bf^\dyn(\Mb;O)=\Af^\dyn(\Mb;O)^{\otimes 2}$ in the compact case. 
Thus the dynamical definition of locality senses degrees of freedom that are missed by the kinematical definition. This suggests focussing on theories of the 
following type:
\begin{definition}
A locally covariant QFT $\Cf$ is \emph{dynamically local} if 
\[
\Cf^\kin(\Mb;O)=\Cf^\dyn(\Mb;O)
\]
for all nonempty $O\in\OO(\Mb)$ and $\Mb\in\Loc$.
\end{definition}
Clearly the pathological theory $\Bf$ is not dynamically local.
More significantly: 
\begin{proposition}{\cite[Thm 6.10]{FewVer:dynloc_theory}}\label{prop:SPASs}
The class of dynamically local and locally covariant QFTs has the SPASs property. 
\end{proposition}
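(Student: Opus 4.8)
The plan is to show that a natural transformation $\eta:\Af\nto\Bf$ between dynamically local, locally covariant theories whose component $\eta_{\Mb_0}$ is an isomorphism for \emph{some} $\Mb_0\in\Loc$ in fact has $\eta_\Nb$ an isomorphism for \emph{every} $\Nb\in\Loc$, so that no partial isomorphism between such theories can be a proper one. Since the components of a natural transformation between functors valued in $\Alg$ are themselves $\Alg$-morphisms, each $\eta_\Nb$ is automatically injective, and only surjectivity is in question. The first reduction is purely formal: if $\psi:\Mb\to\Mb'$ is a Cauchy morphism then naturality gives $\eta_{\Mb'}\circ\Af(\psi)=\Bf(\psi)\circ\eta_\Mb$, and by Assumption~\ref{ax:timeslice} both $\Af(\psi)$ and $\Bf(\psi)$ are invertible, so $\eta_\Mb$ is an isomorphism if and only if $\eta_{\Mb'}$ is. Hence the collection of spacetimes on which $\eta$ is instantiated isomorphically is closed under chains of Cauchy morphisms, and Proposition~\ref{prop:Cauchy_chain} makes it contain every spacetime whose Cauchy surfaces are oriented-diffeomorphic to those of $\Mb_0$.

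Next I would use dynamical locality at $\Mb_0$ itself. By Proposition~\ref{prop:subtheory}(c) one has $\eta_{\Mb_0}\circ\rce^{(\Af)}_{\Mb_0}[h]=\rce^{(\Bf)}_{\Mb_0}[h]\circ\eta_{\Mb_0}$ for all $h\in H(\Mb_0)$; consequently $\eta_{\Mb_0}$ maps $\Af^\bullet(\Mb_0;K)$ into $\Bf^\bullet(\Mb_0;K)$ for every compact $K$, since an element fixed by all $\rce^{(\Af)}_{\Mb_0}[h]$ with $\supp h\subset K^\perp$ is carried to one fixed by all $\rce^{(\Bf)}_{\Mb_0}[h]$ with $\supp h\subset K^\perp$. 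Applying the same argument to the inverse isomorphism $\eta_{\Mb_0}^{-1}$ (which intertwines the two relative Cauchy evolutions in the opposite direction) yields the reverse inclusion, so $\eta_{\Mb_0}$ restricts to a bijection $\Af^\bullet(\Mb_0;K)\to\Bf^\bullet(\Mb_0;K)$ for every compact $K$. Passing to generated subalgebras gives $\eta_{\Mb_0}(\Af^\dyn(\Mb_0;O))=\Bf^\dyn(\Mb_0;O)$ for all $O\in\OO(\Mb_0)$, whence, by dynamical locality of both $\Af$ and $\Bf$, also $\eta_{\Mb_0}(\Af^\kin(\Mb_0;O))=\Bf^\kin(\Mb_0;O)$. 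Chasing the naturality square for the inclusion $\iota_{\Mb_0;O}$ --- that is, taking images in $\eta_{\Mb_0}\circ\Af(\iota_{\Mb_0;O})=\Bf(\iota_{\Mb_0;O})\circ\eta_{\Mb_0|_O}$ and using that $\Bf(\iota_{\Mb_0;O})$ is injective with image $\Bf^\kin(\Mb_0;O)$ --- forces $\eta_{\Mb_0|_O}$ to be surjective, hence an isomorphism, for every $O\in\OO(\Mb_0)$, in particular for every multidiamond region of $\Mb_0$.

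Finally I would transport this to an arbitrary $\Nb$. Given a multidiamond $D\in\OO(\Nb)$ with $k$ components, the spacetime $\Nb|_D$ has Cauchy surfaces diffeomorphic to $k$ pairwise disjoint open balls in $\RR^{n-1}$, hence oriented-diffeomorphic to those of a $k$-component multidiamond region $D_0$ of $\Mb_0$ (which exists because a spacelike Cauchy surface of $\Mb_0$ contains $k$ suitably separated Cauchy balls). By the mechanism of the first paragraph, $\eta_{\Nb|_D}$ is therefore an isomorphism, and chasing the naturality square for $\iota_{\Nb;D}$ then gives $\eta_\Nb(\Af^\kin(\Nb;D))=\Bf^\kin(\Nb;D)$. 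On the other hand, dynamical locality of $\Bf$ gives $\Bf(\Nb)=\Bf^\kin(\Nb;\Nb)=\Bf^\dyn(\Nb;\Nb)$, and the latter is generated by the subalgebras $\Bf^\dyn(\Nb;D)=\Bf^\kin(\Nb;D)$ as $D$ ranges over the multidiamonds of $\Nb$. Since all of these lie in the image of the homomorphism $\eta_\Nb$, that image is all of $\Bf(\Nb)$, so $\eta_\Nb$ is bijective. Thus $\eta$ is an equivalence, and the class of dynamically local theories has the SPASs property.

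I expect the real work to lie not in any single new idea but in the geometric and algebraic bookkeeping: one must verify carefully that dynamical locality forces additivity with respect to (truncated) multidiamonds --- so that $\Bf(\Nb)$ really is generated by the multidiamond kinematic algebras, which is the only place the global structure of $\Bf$ gets tamed --- and one must control the behaviour of (multi)diamonds and of their Cauchy-surface topology under $\Loc$-morphisms, since it is precisely this stability that lets the argument bridge spacetimes of globally different topology, exactly where the pathological theory of~\eqref{eq:oneortwo_obj} fails to be dynamically local. These structural points are worked out in~\cite[\S 5--6]{FewVer:dynloc_theory}.
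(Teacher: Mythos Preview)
The paper does not supply its own proof of this proposition; it merely records the statement with a citation to \cite[Thm~6.10]{FewVer:dynloc_theory}. Your proposal is a correct reconstruction of the argument and follows the same strategy as the cited reference: propagate the isomorphism along Cauchy chains, use the intertwining of relative Cauchy evolution (Proposition~\ref{prop:subtheory}(c)) together with dynamical locality of both theories to pass from a global isomorphism at $\Mb_0$ to isomorphisms on all subregions $\Mb_0|_O$, link arbitrary multidiamonds to multidiamonds in $\Mb_0$, and finally invoke additivity with respect to (truncated) multidiamonds --- itself a consequence of dynamical locality, as noted in the paper just before the proposition --- to obtain surjectivity of $\eta_\Nb$ for every $\Nb$. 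Your closing paragraph correctly identifies the structural ingredients (\cite[Thms~6.3, 6.10 and Lem.~2.8]{FewVer:dynloc_theory}) that carry the weight of the argument.
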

Thus dynamical locality at least satisfies our necessary condition for providing
a notion of SPASs (and there is no other condition known that does so
and incorporates the standard free theories). 

As an immediate application, we note the following. In Minkowski space
AQFT there are models with a minimal localization scale, i.e., the
local algebras are nontrivial only for sufficiently large regions (see, e.g., 
\cite{LecLon:2014} for simple examples). Proposition~\ref{prop:SPASs} excludes
the possibility that such models can be defined as locally covariant
and dynamically local theories. For if there is a spacetime $\Mb\in\Loc$
and a nonempty $O\in\OO(\Mb)$ for which $\Af^\kin(\Mb;O)$ 
is trivial, then $\Af(\Mb|_O)$ is trivial. Now there is a \emph{trivial
theory} $\If:\Loc\to\Alg$ so that $\If(\Nb)=\CC$ (regarded as a unital $*$-algebra) for all $\Nb\in\Loc$, and which maps every morphism to 
the identity morphism. This theory is a subtheory of $\Af$
in an obvious way, and we have shown that the two theories
coincide on $\Mb|_O$. Accordingly, if $\Af$ is dynamically local 
it is isomorphic to the trivial theory and hence trivial for all subregions
of all spacetimes.

Every dynamically local theory $\Af$ has a number of other nice properties: 
for instance, extended
locality of $\Af$ is equivalent to $\Af^\bullet(\Mb;\emptyset)=\CC\II_{\Af(\Mb)}$
for all $\Mb\in\Loc$, i.e., the absence of nontrivial elements in $\Af(\Mb)$ that
are fixed under arbitrary relative Cauchy evolution \cite[Thm 6.5]{FewVer:dynloc_theory} 
and $\Af$ is necessarily additive with respect to
truncated multidiamonds~ \cite[Thm 6.3]{FewVer:dynloc_theory}.
As a final application we return to one of our \emph{leit motifs}: the nonexistence of
natural states. 
\begin{theorem}{\cite[Thm 6.13]{FewVer:dynloc_theory}}\label{thm:nogo}
Suppose $\Af$ is a dynamically local quantum field theory and has a natural state $(\omega_\Mb)_{\Mb\in\Mand}$. If there is a spacetime $\Mb$ with noncompact Cauchy surfaces such that $\omega_\Mb$ induces a 
faithful GNS representation with the (full) Reeh--Schlieder property 
[i.e., the GNS vector corresponding to $\omega_\Mb$ is cyclic for the induced representation of 
$\Af(\Mb|_O)$ for all relatively compact $O\in\OO_0(\Mb)$],  then the relative Cauchy evolution is trivial in $\Mb$, and $\Af^\kin(\Mb;O)=\Af(\Mb)$ for all nonempty
$O\in\OO(\Mb)$. If, additionally, $\Af$ obeys extended locality, then $\Af$ is equivalent to the trivial theory $\If$. 
\end{theorem}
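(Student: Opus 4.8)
\textbf{Proof proposal for Theorem~\ref{thm:nogo}.}

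The plan is to combine the naturality of the state with dynamical locality and the Reeh--Schlieder hypothesis, using the rce-triviality characterisation of $\Af^\bullet(\Mb;\emptyset)$ to collapse the kinematic net. First I would exploit the fact that the relative Cauchy evolution $\rce_\Mb[h]$ is an inner morphism built from Cauchy morphisms, so $\Af(\imath^\pm)$, $\Af(j^\pm)$ are isomorphisms whose pullbacks preserve the natural state. Concretely, naturality gives $\omega_{\Mb^\pm}\circ\Af(\imath^\pm)^{-1}=\omega_\Mb$ and $\omega_{\Mb^\pm}\circ\Af(j^\pm)^{-1}=\omega_{\Mb[h]}$; chasing these equalities around the rce diagram yields $\omega_\Mb\circ\rce_\Mb[h]=\omega_\Mb$ for every $h\in H(\Mb)$, i.e., the natural state is invariant under every relative Cauchy evolution on $\Mb$. (Here one uses that the canonical Cauchy morphisms $\imath^\pm$ from Section~\ref{sect:rce} are genuine $\Loc$-morphisms and that $j^\pm$ share their underlying maps.)

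Next I would feed rce-invariance of $\omega_\Mb$ into the Reeh--Schlieder property. Fix $h\in H(\Mb)$ and set $\beta=\rce_\Mb[h]$; this is an automorphism of $\Af(\Mb)$ fixing $\omega_\Mb$, hence it is unitarily implemented in the GNS representation $(\HH_{\omega_\Mb},\pi_{\omega_\Mb},\Omega_{\omega_\Mb})$ by a unitary $U$ with $U\Omega_{\omega_\Mb}=\Omega_{\omega_\Mb}$. From Section~\ref{sect:rce} (\cite[Prop.~3.5]{FewVer:dynloc_theory}) the rce acts trivially on $\Af^\kin(\Mb;O)$ whenever $O\subset(\supp h)^\perp$. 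Choose a relatively compact $O\in\OO_0(\Mb)$ with $O\subset(\supp h)^\perp$ and nontrivial causal complement; then $U$ fixes $\pi_{\omega_\Mb}(\Af^\kin(\Mb;O))$ pointwise, so for every $A$ in the commutant conditions and every generator $B\in\Af^\kin(\Mb;O)$ one has $U\pi(B)\Omega = \pi(\beta(B))\Omega=\pi(B)\Omega$. Since $\Omega_{\omega_\Mb}$ is cyclic for $\pi_{\omega_\Mb}(\Af(\Mb|_O))$ by the Reeh--Schlieder hypothesis, $U$ fixes a dense set, hence $U=\mathbf{1}$, and faithfulness of the representation gives $\rce_\Mb[h]=\id_{\Af(\Mb)}$. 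As $h$ was arbitrary, the relative Cauchy evolution is trivial in $\Mb$. In particular $\Af^\bullet(\Mb;K)=\Af(\Mb)$ for every compact $K$, so $\Af^\dyn(\Mb;O)=\Af(\Mb)$ for all nonempty $O\in\OO(\Mb)$, and dynamical locality forces $\Af^\kin(\Mb;O)=\Af(\Mb)$ as claimed.

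For the final assertion, assume in addition extended locality. Then, as noted in the text, dynamical locality is equivalent to $\Af^\bullet(\Mb';\emptyset)=\CC\II_{\Af(\Mb')}$ for \emph{all} $\Mb'\in\Loc$ (\cite[Thm 6.5]{FewVer:dynloc_theory}). But we have just shown $\rce_\Mb[h]=\id$ for all $h$, so $\Af^\bullet(\Mb;\emptyset)=\Af(\Mb)$; combined with extended locality this gives $\Af(\Mb)=\CC\II_{\Af(\Mb)}$. Now invoke the SPASs mechanism from the trivial-theory discussion preceding this theorem: the trivial theory $\If$ embeds in $\Af$ via a natural transformation, and the two theories coincide on $\Mb$ (both assign $\CC$). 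Restricting to any $O\in\OO(\Mb)$ with $\Mb|_O$ of noncompact Cauchy surface type, the embedding $\If\nto\Af$ is a partial isomorphism; by Proposition~\ref{prop:SPASs} (dynamical locality gives the SPASs property) it must be an isomorphism, so $\Af\cong\If$. I expect the main obstacle to be the first step: carefully verifying that naturality of the state together with the specific structure of the rce diagram (in particular that the canonical Cauchy morphisms are admissible and that one may pull $\omega$ back consistently through both $\imath^\pm$ and $j^\pm$, which live over different spacetimes $\Mb$ and $\Mb[h]$) really does yield $\omega_\Mb\circ\rce_\Mb[h]=\omega_\Mb$; everything downstream is a standard GNS/Reeh--Schlieder argument plus the already-quoted results of \cite{FewVer:dynloc_theory}.
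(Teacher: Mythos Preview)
Your proof is correct and matches the paper's approach: rce-invariance of the natural state from naturality, unitary implementation in the GNS representation, Reeh--Schlieder plus faithfulness to trivialise the rce, then dynamical locality and SPASs. One point worth making explicit is that the noncompact Cauchy surface hypothesis is precisely what guarantees a nonempty relatively compact $O\subset(\supp h)^\perp$ exists for every $h\in H(\Mb)$; the paper's second part also reaches $\Af(\Mb)=\CC\II$ slightly more directly via $\Af^\kin(\Mb;O_1)\cap\Af^\kin(\Mb;O_2)=\CC\II$ for causally disjoint $O_i$ (each kinematic algebra now being all of $\Af(\Mb)$), but your route through the $\Af^\bullet(\Mb;\emptyset)$ characterisation is equally valid.
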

\begin{proof} \smartqed
By the natural state hypothesis, we have
$\omega_\Mb\circ\rce_{\Mb}[h] = \omega_\Mb$ for each $\Mb$ and all $h\in H(\Mb)$,
simply because the relative Cauchy evolution is a composition of (inverses of) morphisms $\Af(\psi)$ for Cauchy $\psi$.  Thus the relative Cauchy evolution is unitarily
implemented in the 
GNS representation $\pi_\Mb$ induced by $\omega_\Mb$:
\[
\pi_\Mb(\rce_\Mb[\hb] A) = U_\Mb[\hb]\pi_\Mb(A) U_\Mb[\hb]^{-1},
\]
where $U_\Mb[\hb]$ is defined by
$U_\Mb[\hb]\pi_\Mb(A)\Omega_\Mb = \pi_\Mb(\rce_\Mb[\hb]A)\Omega_\Mb$
and leaves the  GNS vector $\Omega_\Mb$ invariant.
Now let $h\in H(\Mb)$ and choose a nonempty relatively compact connected $O\in \OO(\Mb)$ such that $O\subset (\supp h)^\perp$ (such an $O$ exists because the Cauchy surfaces are noncompact). As already mentioned, $\rce_\Mb[h]$
acts trivially on $\Af^\kin(\Mb;O)$ \cite[Prop.~3.7]{FewVer:dynloc_theory},
so
\[
U_\Mb[\hb]\pi_\Mb(A)\Omega_\Mb =\pi_\Mb(A)\Omega_\Mb 
\]
for all $A\in\Af^\kin(\Mb;O)$. Using the Reeh--Schlieder property of $\omega_\Mb$ we may deduce that $U_\Mb[\hb]$ agrees with the identity operator on a dense set and hence $U_\Mb[\hb]=\II_{\HH_\Mb}$ for all $\hb\in H(\Mb)$,
so the relative Cauchy evolution is trivial on $\Af(\Mb)$ because $\pi_\Mb$
is faithful. 
Consequently, $\Af^{\bullet}(\Mb;K)=\Af(\Mb)$ for all compact sets $K$ and hence by dynamical locality $\Af^\kin(\Mb;O) = \Af^\dyn(\Mb;O)=\Af(\Mb)$ for each nonempty $O\in\OO(\Mb)$. This proves the first part of the theorem. 

For the second part, observe that there is a subtheory embedding
$\eta:\If\nto\Af$ of the trivial theory $\If$ into $\Af$ given by 
$\eta_\Nb z = z\II_{\Af(\Nb)}$ for all $\Nb\in\Loc$, $z\in\CC$.
Now consider two causally disjoint nonempty $O_1,O_2\in\OO(\Mb)$ 
By the above argument together with extended locality,
\[
\Af(\Mb) = \Af^\kin(\Mb;O_1)\cap\Af^\kin(\Mb;O_2) = \CC\II_{\Af(\Mb)},
\]
so $\eta_\Mb$ is an isomorphism. As $\If$ is obviously dynamically local, 
and $\Af$ is by assumption, 
Proposition~\ref{prop:SPASs} entails that $\eta$ is a natural isomorphism.
\end{proof} 

As mentioned, the property of dynamical locality has been checked for a
number of standard theories. Theories that satisfy dynamical locality include
\begin{itemize}
\item the Klein--Gordon scalar field in dimensions $n\ge 2$,
if at least one of the mass or curvature coupling is 
nonzero~\cite{FewVer:dynloc2,Ferguson:2013}, and 
the corresponding extended algebra of Wick polynomials for nonzero mass 
and either minimal or conformal coupling~\cite{Ferguson:2013} 
(one expects dynamical locality for general values of $\xi$);  
\item the free massless current in dimensions $n\ge 2$ (restricting to connected spacetimes) or $n\ge 3$ (allowing disconnected spacetimes)~\cite{FewVer:dynloc2};
\item the minimally coupled Klein--Gordon field with external sources
for $m\ge 0$, $n\ge 2$ -- in this case relative Cauchy evolution can
be induced by perturbations of both the metric and the external source
and one modifies the definition of the dynamical net accordingly~\cite{FewSchenkel:2014};
\item the free Dirac field with mass $m\ge 0$~\cite{Ferguson_PhD};
\item the free Maxwell field in dimension $n=4$, in a 
`reduced formulation' \cite{FewLang:2014a}.
\end{itemize}
These theories also obey the other hypotheses of Theorem~\ref{thm:nogo} and
so do not admit natural states. A more direct proof for the theory with sources 
appears in~\cite{Few_Chicheley:2015}.

There are some cases known in which dynamical locality fails,
which appears to be always related to the presence of broken global gauge
symmetries or topologically stabilised charges:
the free Klein--Gordon field with $m=0$, $\xi=0$ in dimensions $n\ge 2$,
owing to the rigid gauge symmetry $\phi\mapsto\phi+\text{const}$~\cite{FewVer:dynloc2}; the free massless current in $2$-dimensions allowing disconnected spacetimes~\cite{FewVer:dynloc2};
and the free Maxwell field in dimension $n=4$, in a 
`universal formulation' \cite{DappLang:2012,FewLang:2014a}, owing to the presence of  topological electric and magnetic
charges in spacetimes with nontrivial second de Rham cohomology, which 
are eliminated in the reduced theory mentioned above. As already mentioned in 
Sect.~\ref{sect:rigidity} the existence of topological
charges is also associated with a failure of injectivity (see \cite{SandDappHack:2012,BecSchSza:2014} for more discussion in related models). As suggested in~\cite{FewLang:2014a}, it would be interesting to 
investigate theories that are dynamically local modulo topological charges, with the aim
of generalizing Proposition~\ref{prop:SPASs}.

\medskip 
\noindent {\em Acknowledgment} We are grateful to Francis Wingham for
comments on the text. 

%

\begin{thebibliography}{100}\setlength{\itemsep}{-1.5mm}
\providecommand{\url}[1]{{#1}}
\providecommand{\urlprefix}{URL }
\expandafter\ifx\csname urlstyle\endcsname\relax
  \providecommand{\doi}[1]{DOI~\discretionary{}{}{}#1}\else
  \providecommand{\doi}{DOI~\discretionary{}{}{}\begingroup
  \urlstyle{rm}\Url}\fi

\bibitem{AAS}
{Afshordi, N., Aslanbeigi, S., Sorkin, R. D.}: {A distinguished vacuum state
  for a quantum field in a curved spacetime: Formalism, features, and
  cosmology}.
\newblock JHEP \textbf{{08}}, 137 (2012)

\bibitem{Araki}
Araki, H.: Mathematical Theory of Quantum Fields.
\newblock Oxford University Press, Oxford (1999)

\bibitem{AshtekarMagnon:1975}
Ashtekar, A., Magnon, A.: Quantum fields in curved space-times.
\newblock Proc.Roy.Soc.Lond. \textbf{A346}, 375--394 (1975)

\bibitem{BarGinouxPfaffle}
B{\"{a}}r, C., Ginoux, N., Pf{\"{a}}ffle, F.: Wave equations on Lorentzian
  manifolds and quantization.
\newblock European Mathematical Society (EMS), Z{\"{u}}rich (2007)

\bibitem{BarataJaekelMund:2013}
Barata, J., J\"akel, C., Mund, J.: The $\mathscr{P}(\varphi)_2$ model on de
  {S}itter space (20113).
\newblock {\tt arXiv:1311.2905 [math-ph]}

\bibitem{BaumWollen:1992}
Baumg{\"a}rtel, H., Wollenberg, M.: Causal nets of operator algebras.
\newblock Akademie-Verlag, Berlin (1992)

\bibitem{BecSchSza:2014}
{Becker}, C., {Schenkel}, A., {Szabo}, R.J.: {Differential cohomology and
  locally covariant quantum field theory} (2014)

\bibitem{BeemEhrlichEasley}
Beem, J.K., Ehrlich, P.E., Easley, K.L.: Global {L}orentzian geometry,
  \emph{Monographs and Textbooks in Pure and Applied Mathematics}, vol. 202,
  second edn.
\newblock Marcel Dekker Inc., New York (1996)

\bibitem{Benini_Masters}
{Benini}, M.: {Relative Cauchy Evolution for Spin 1 Fields} (2011)

\bibitem{BeniniDappiaggiSchenkel:2013}
{Benini}, M., {Dappiaggi}, C., {Schenkel}, A.: {Quantized Abelian principal
  connections on Lorentzian manifolds} (2013).
\newblock ArXiv:1303.2515

\bibitem{Bernal:2004gm}
Bernal, A.N., S{\'{a}}nchez, M.: {Smoothness of time functions and the metric
  splitting of globally hyperbolic spacetimes}.
\newblock Commun. Math. Phys. \textbf{257}, 43--50 (2005), gr-qc/0401112

\bibitem{Bernal:2005qf}
Bernal, A.N., S{\'{a}}nchez, M.: {Further results on the smoothability of
  Cauchy hypersurfaces and Cauchy time functions}.
\newblock Lett. Math. Phys. \textbf{77}, 183--197 (2006), gr-qc/0512095

\bibitem{Blackadar}
Blackadar, B.: Operator algebras, \emph{Encyclopaedia of Mathematical
  Sciences}, vol. 122.
\newblock Springer-Verlag, Berlin (2006).
\newblock Theory of $C{^{*}}$-algebras and von Neumann algebras, Operator
  Algebras and Non-commutative Geometry, III

\bibitem{BLOT}
Bogoliubov, N., Logunov, A., Oksak, A., Todorov, I.: General priciples of
  quantum field theory.
\newblock Kluwer Academic Publishers, Dordrecht (1990)

\bibitem{Borchers:1962}
Borchers, H.J.: On the structure of the algebra of field operators.
\newblock Nuovo Cimento \textbf{24}, 214--236 (1962)

\bibitem{BosCadFew13}
Bostelmann, H., Cadamuro, D., Fewster, C.J.: Quantum energy inequality for the
  massive ising model.
\newblock Phys.Rev. \textbf{D88}(2), 025,019 (2013)

\bibitem{BostelmannFewster09}
Bostelmann, H., Fewster, C.J.: Quantum inequalities from operator product
  expansions.
\newblock Commun.Math.Phys. \textbf{292}, 761--795 (2009)

\bibitem{BratteliRobinson}
Bratteli, O., Robinson, D.W.: Operator Algebras and Quantum Statistical
  Mechanics: 2, 2nd edn.
\newblock Springer Verlag, Berlin (1996)

\bibitem{BruFre:2014}
Brum, M., Fredenhagen, K.: `{V}acuum-like' {H}adamard states for quantum fields
  on curved spacetimes.
\newblock Classical Quantum Gravity \textbf{31}(2), {025}{024} (2014)

\bibitem{BruDueFre2009}
Brunetti, R., D{\"u}tsch, M., Fredenhagen, K.: Perturbative algebraic quantum
  field theory and the renormalization groups.
\newblock Adv. Theor. Math. Phys. \textbf{13}, 1541--1599 (2009)

\bibitem{BrFr2000}
Brunetti, R., Fredenhagen, K.: Microlocal analysis and interacting quantum
  field theories: {R}enormalization on physical backgrounds.
\newblock Commun. Math. Phys. \textbf{208}, 623--661 (2000)

\bibitem{BrFrImRe:2014}
Brunetti, R., Fredenhagen, K., Imani, P., Rejzner, K.: The locality axiom in
  quantum field theory and tensor products of {$C^*$}-algebras.
\newblock Rev. Math. Phys. \textbf{26}(6), {1450}{010}, 10 (2014)

\bibitem{BruFreKoe}
Brunetti, R., Fredenhagen, K., K\"ohler, M.: The microlocal spectrum condition
  and {W}ick polynomials of free fields on curved space-times.
\newblock Commun. Math. Phys. \textbf{180}, 633--652 (1996)

\bibitem{BrFrVe03}
Brunetti, R., Fredenhagen, K., Verch, R.: The generally covariant locality
  principle: A new paradigm for local quantum physics.
\newblock Commun. Math. Phys. \textbf{237}, 31--68 (2003)

\bibitem{Br&Ru05}
Brunetti, R., Ruzzi, G.: Superselection sectors and general covariance. {I}.
\newblock Commun. Math. Phys. \textbf{270}, 69--108 (2007)

\bibitem{BrunettiRuzzi_topsect}
Brunetti, R., Ruzzi, G.: Quantum charges and spacetime topology: {T}he
  emergence of new superselection sectors.
\newblock Commun. Math. Phys. \textbf{287}, 523--563 (2009)

\bibitem{Bu-hhb}
Buchholz, D.: On hot bangs and the arrow of time in relativistic quantum field
  theory.
\newblock Commun.Math.Phys. \textbf{237}, 271--288 (2003)

\bibitem{BuMundSummers:2000}
Buchholz, D., Mund, J., Summers, S.A.: Transplantation of local nets and
  geometric modular action on {R}obertson-{W}alker spacetimes.
\newblock In: R.~Longo (ed.) {Mathematical physics in mathematics and physics:
  Quantum and operator algebraic aspects. Proceedings, Conference, Siena,
  Italy, June 20-24, 2000, {\bf {F}ields {I}nstitute {C}ommunications 30}}.
  {A}merican {M}athematical {S}ociety, {P}rovidence, {RI} (2001)

\bibitem{BuOjiRoos}
Buchholz, D., Ojima, I., Roos, H.: Thermodynamic properties of nonequilibrium
  states in quantum field theory.
\newblock Annals Phys. \textbf{297}, 219--242 (2002)

\bibitem{BuchholzStoermer}
Buchholz, D., St{\o}rmer, E.: Superposition, transition probability and
  primitive observables in infinite quantum systems.
\newblock ArXiv:1411.2100

\bibitem{BucWic:1986}
Buchholz, D., Wichmann, E.H.: Causal independence and the energy-level density
  of states in local quantum field theory.
\newblock Comm. Math. Phys. \textbf{106}, 321--344 (1986)

\bibitem{BILY}
Budic, R., Isenberg, J., Lindblom, L., Yasskin, P.B.: On the determination of
  {C}auchy surfaces from intrinsic properties.
\newblock Commun. Math. Phys. \textbf{61}, 87--95 (1978)

\bibitem{ChiFre:2009}
Chilian, B., Fredenhagen, K.: The time slice axiom in perturbative quantum
  field theory on globally hyperbolic spacetimes.
\newblock Comm. Math. Phys. \textbf{287}(2), 513--522 (2009)

\bibitem{DAnHol:2006}
D'Antoni, C., Hollands, S.: Nuclearity, local quasiequivalence and split
  property for {D}irac quantum fields in curved spacetime.
\newblock Comm. Math. Phys. \textbf{261}(1), 133--159 (2006)

\bibitem{Dappiaggi:2011}
Dappiaggi, C.: Remarks on the {R}eeh-{S}chlieder property for higher spin free
  fields on curved spacetimes.
\newblock Rev. Math. Phys. \textbf{23}(10), 1035--1062 (2011)

\bibitem{DapFrePin-Cosmo}
Dappiaggi, C., Fredenhagen, K., Pinamonti, N.: Stable cosmological models
  driven by a free quantum scalar field.
\newblock Phys. Rev. \textbf{D77}, {104}{015} (2008)

\bibitem{DappLang:2012}
Dappiaggi, C., Lang, B.: Quantization of {M}axwell's equations on curved
  backgrounds and general local covariance.
\newblock Lett. Math. Phys. \textbf{101}(3), 265--287 (2012)

\bibitem{DMP:05}
Dappiaggi, C., Moretti, V., Pinamonti, N.: Rigorous steps towards holography in
  asymptotically flat spacetimes.
\newblock Rev.Math.Phys. \textbf{18}, 349--416 (2006)

\bibitem{DawsFews06}
Dawson, S.P., Fewster, C.J.: An explicit quantum weak energy inequality for
  {D}irac fields in curved spacetimes.
\newblock Class. Quant. Grav. \textbf{23}, 6659--6681 (2006)

\bibitem{DegVer2010}
Degner, A., Verch, R.: Cosmological particle creation in states of low energy.
\newblock J. Math. Phys. \textbf{51}, {022}{302} (2010)

\bibitem{Dimock1980}
Dimock, J.: Algebras of local observables on a manifold.
\newblock Commun. Math. Phys. \textbf{77}, 219--228 (1980)

\bibitem{DHRi}
Doplicher, S., Haag, R., Roberts, J.E.: Fields, observables and gauge
  transformations. {I}.
\newblock Comm. Math. Phys. \textbf{13}, 1--23 (1969)

\bibitem{DopLon:1984}
Doplicher, S., Longo, R.: Standard and split inclusions of von {N}eumann
  algebras.
\newblock Invent. Math. \textbf{75}, 493--536 (1984)

\bibitem{DuistermaatHoermander1}
Duistermaat, J., H\"ormander, L.: Fourier integral operators, {I}.
\newblock Acta Mathematica \textbf{128}, 183 (1972)

\bibitem{EGJ}
Epstein, H., Glaser, V., Jaffe, A.: Nonpositivity of the energy density in
  quantized field theories.
\newblock Il Nuovo Cim. \textbf{36}, 1016--1022 (1965)

\bibitem{Ferguson:2013}
Ferguson, M.: Dynamical locality of the nonminimally coupled scalar field and
  enlarged algebra of {W}ick polynomials.
\newblock Ann. Henri Poincar\'e \textbf{14}(4), 853--892 (2013)

\bibitem{Ferguson_PhD}
Ferguson, M.T.: Aspects of dynamical locality and locally covariant canonical
  quantization.
\newblock Ph.D. thesis, University of York (2013).
\newblock {h}ttp://etheses.whiterose.ac.uk/4529/

\bibitem{Few_split:2015}
Fewster, C.J.: The split property for locally covariant quantum field theories
  in curved spacetime.
\newblock ArXiv:1501.02682

\bibitem{Fews00}
Fewster, C.J.: A general worldline quantum inequality.
\newblock Class. Quant. Grav. \textbf{17}, 1897--1911 (2000)

\bibitem{Fews05}
Fewster, C.J.: Energy inequalities in quantum field theory.
\newblock In: J.C. Zambrini (ed.) XIVth International Congress on Mathematical
  Physics. World Scientific, Singapore (2005).
\newblock An expanded and updated version is available as {\tt math-ph/0501073}

\bibitem{Few:Bros}
Fewster, C.J.: Quantum energy inequalities and stability conditions in quantum
  field theory.
\newblock In: A.~Boutet~de Monvel, D.~Buchholz, D.~Iagolnitzer, U.~Moschella
  (eds.) Rigorous Quantum Field Theory: A Festschrift for Jacques Bros,
  \emph{Progress in Mathematics}, vol. 251. {Birkh\"auser}, Boston (2006)

\bibitem{Fewster2007}
Fewster, C.J.: Quantum energy inequalities and local covariance. {II}.
  {C}ategorical formulation.
\newblock Gen. Relativity Gravitation \textbf{39}, 1855--1890 (2007)

\bibitem{FewsterRegensburg}
Fewster, C.J.: On the notion of `the same physics in all spacetimes'.
\newblock In: F.~Finster, O.~M{\"u}ller, M.~Nardmann, J.~Tolksdorf, E.~Zeidler
  (eds.) Quantum Field Theory and Gravity. Conceptual and mathematical advances
  in the search for a unified framework, pp. 207--227. Birkh{\"a}user (2012).
\newblock ArXiv:1105.6202

\bibitem{Fewster:gauge}
Fewster, C.J.: Endomorphisms and automorphisms of locally covariant quantum
  field theories.
\newblock Rev. Math. Phys. \textbf{25}(5), {1350}{008}, 47 (2013)

\bibitem{Few_Chicheley:2015}
Fewster, C.J.: Locally covariant quantum field theory and the problem of
  formulating the same physics in all spacetimes (2015).
\newblock ArXiv:1502.04642. To appear, Phil. Trans. A. Roy. Soc.

\bibitem{Few_Regensburg:2015}
Fewster, C.J.: On the spin-statistics connection in curved spacetimes (2015).
\newblock ArXiv:1503.05797

\bibitem{FewsterEveson:1998}
Fewster, C.J., Eveson, S.P.: Bounds on negative energy densities in flat
  spacetime.
\newblock Phys. Rev. D \textbf{58}, 084,010 (1998)

\bibitem{Fe&Ho05}
Fewster, C.J., Hollands, S.: Quantum energy inequalities in two-dimensional
  conformal field theory.
\newblock Rev. Math. Phys. \textbf{17}, 577 (2005)

\bibitem{FewLang:2014a}
{Fewster}, C.J., {Lang}, B.: {Dynamical locality of the free Maxwell field}.
\newblock In press Ann. H. Poincar{\'e}  (2014), arXiv:1403.7083

\bibitem{FewsterLang-FP}
Fewster, C.J., Lang, B.: Pure quasifree states of the {D}irac field from the
  fermionic projector.
\newblock In press, Class. Quantum Grav.  (2015), arXiv:1408.1645

\bibitem{FewsterOsterbrink}
Fewster, C.J., Osterbrink, L.W.: Quantum energy inequalities for the
  non-minimally coupled scalar field.
\newblock J.Phys. \textbf{A41}, {025}{402} (2008)

\bibitem{Few&Pfen03}
Fewster, C.J., Pfenning, M.J.: A quantum weak energy inequality for spin-one
  fields in curved spacetime.
\newblock J. Math. Phys. \textbf{44}, 4480--4513 (2003)

\bibitem{Few&Pfen06}
Fewster, C.J., Pfenning, M.J.: Quantum energy inequalities and local
  covariance. {I}: Globally hyperbolic spacetimes.
\newblock J. Math. Phys. \textbf{47}, {082}{303} (2006)

\bibitem{FewSchenkel:2014}
{Fewster}, C.J., {Schenkel}, A.: {Locally covariant quantum field theory with
  external sources}.
\newblock Published online with Ann. H. Poincar{\'e}  (2014), arXiv:1402.2436

\bibitem{FewsterSmith}
Fewster, C.J., Smith, C.J.: Absolute quantum energy inequalities in curved
  spacetime.
\newblock Annales Henri Poincar\'e \textbf{9}, 425--455 (2008)

\bibitem{FewsterTeo:1999}
Fewster, C.J., Teo, E.: Bounds on negative energy densities in static
  space-times.
\newblock Phys. Rev. D \textbf{59}, 104,016 (1999)

\bibitem{FewsterVerch-DiracQWEI}
Fewster, C.J., Verch, R.: A quantum weak energy inequality for {D}irac fields
  in curved space-time.
\newblock Commun.Math.Phys. \textbf{225}, 331--359 (2002)

\bibitem{FewVer-Passivity}
Fewster, C.J., Verch, R.: Stability of quantum systems at three scales:
  Passivity, quantum weak energy inequalities and the microlocal spectrum
  condition.
\newblock Commun.Math.Phys. \textbf{240}, 329--375 (2003)

\bibitem{FewVer:dynloc_theory}
Fewster, C.J., Verch, R.: Dynamical locality and covariance: What makes a
  physical theory the same in all spacetimes?
\newblock {A}nnales H.~{P}oincar{\'e} \textbf{13}, 1613--1674 (2012)

\bibitem{FewVer:dynloc2}
Fewster, C.J., Verch, R.: Dynamical locality of the free scalar field.
\newblock {A}nnales H.~{P}oincar{\'e} \textbf{13}, 1675--1709 (2012)

\bibitem{FewsterVerch-SJ}
Fewster, C.J., Verch, R.: On a recent construction of `vacuum-like' quantum
  field states in curved spacetime.
\newblock Class.Quant.Grav. \textbf{29}, {205}{017} (2012)

\bibitem{FewsterVerch-NecHad}
Fewster, C.J., Verch, R.: The necessity of the {H}adamard condition.
\newblock Class. Quant. Grav. \textbf{30}, {235}{027} (2013)

\bibitem{Finster-DiracSea}
Finster, F.: Definition of the {Dirac} sea in the presence of external fields.
\newblock Adv. Theor. Math. Phys. \textbf{2}, 963--985 (1998)

\bibitem{FinsterReintjes:2013}
Finster, F., Reintjes, M.: {A non-perturbative construction of the Fermionic
  projector on globally hyperbolic manifolds {I -} Space-times of finite
  lifetime} (2013).
\newblock ArXiv:1301.5420

\bibitem{Flanagan:1997}
Flanagan, {\'E}.{\'E}.: Quantum inequalities in two-dimensional {M}inkowski
  spacetime.
\newblock Phys. Rev. D (3) \textbf{56}, 4922--4926 (1997)

\bibitem{Ford78}
Ford, L.H.: Quantum coherence effects and the second law of thermodynamics.
\newblock Proc. Roy. Soc. Lond. A \textbf{364}, 227--236 (1978)

\bibitem{Ford:1991}
Ford, L.H.: Constraints on negative-energy fluxes.
\newblock Phys. Rev. D \textbf{43}, 3972--3978 (1991)

\bibitem{FordRoman:1997}
Ford, L.H., Roman, T.A.: Restrictions on negative energy density in flat
  spacetime.
\newblock Phys. Rev. D \textbf{55}, 2082--2089 (1997)

\bibitem{Fredenhagen:1985}
Fredenhagen, K.: On the modular structure of local algebras of observables.
\newblock Comm. Math. Phys. \textbf{97}(1-2), 79--89 (1985)

\bibitem{FredenhagenHaag:1990}
Fredenhagen, K., Haag, R.: On the derivation of {H}awking radiation associated
  with the formation of a black hole.
\newblock Comm. Math. Phys. \textbf{127}(2), 273--284 (1990)

\bibitem{FreRej_BVqft:2012}
Fredenhagen, K., Rejzner, K.: Batalin-{V}ilkovisky formalism in perturbative
  algebraic quantum field theory.
\newblock Commun. Math. Phys. \textbf{317}, 697--725 (2013)

\bibitem{FullingNarcowichWald}
Fulling, S.A., Narcowich, F.J., Wald, R.M.: Singularity structure of the
  two-point function in quantum field theory in curved spacetime. {II}.
\newblock Ann. Physics \textbf{136}, 243--272 (1981)

\bibitem{Geroch_domdep:1970}
Geroch, R.: Domain of dependence.
\newblock J. Mathematical Phys. \textbf{11}, 437--449 (1970)

\bibitem{GuidoLongo:2003}
Guido, D., Longo, R.: {A converse Hawking-Unruh effect and dS(2)/CFT
  correspondence}.
\newblock Annales Henri Poincare \textbf{4}, 1169--1218 (2003)

\bibitem{GLRV}
Guido, D., Longo, R., Roberts, J., Verch, R.: Charged sectors, spin and
  statistics in quantum field theory on curved space-times.
\newblock Rev. Math. Phys. \textbf{13}, 125--198 (2001)

\bibitem{Haag}
Haag, R.: Local Quantum Physics: Fields, Particles, Algebras.
\newblock Springer-Verlag, Berlin (1992)

\bibitem{HaNaSt:1984}
Haag, R., Narnhofer, H., Stein, U.: On quantum field theory in gravitational
  background.
\newblock Commun.Math.Phys. \textbf{94}, 219 (1984)

\bibitem{HaaSwi:1965}
Haag, R., Swieca, J.A.: When does a quantum field theory describe particles?
\newblock Comm. Math. Phys. \textbf{1}, 308--320 (1965)

\bibitem{CliftonHalvorson}
Halvorson, H., Clifton, R.: Generic {B}ell correlation between arbitrary local
  algebras in quantum field theory.
\newblock J. Math. Phys. \textbf{41}, 1711--1717 (2000)

\bibitem{HawkingEllis}
Hawking, S.W., Ellis, G.F.R.: The Large Scale Structure of Space-Time.
\newblock Cambridge University Press, London (1973)

\bibitem{HiguchiParkerWang:1990}
Higuchi, A., Parker, L., Wang, Y.: Consistency of {F}addeev-{P}opov ghost
  statistics with gravitationally induced pair creation.
\newblock Phys. Rev. D \textbf{42}, 4078--4081 (1990)

\bibitem{Hollands-PCT:2004}
Hollands, S.: {A General PCT theorem for the operator product expansion in
  curved space-time}.
\newblock Commun.Math.Phys. \textbf{244}, 209--244 (2004)

\bibitem{Hollands-opecst:2007}
Hollands, S.: {The Operator product expansion for perturbative quantum field
  theory in curved spacetime}.
\newblock Commun.Math.Phys. \textbf{273}, 1--36 (2007)

\bibitem{Hollands:2008}
Hollands, S.: Renormalized quantum {Y}ang-{M}ills fields in curved spacetime.
\newblock Rev. Math. Phys. \textbf{20}, 1033--1172 (2008)

\bibitem{Ho&Wa01}
Hollands, S., Wald, R.M.: Local {W}ick polynomials and time ordered products of
  quantum fields in curved spacetime.
\newblock Commun. Math. Phys. \textbf{223}, 289--326 (2001)

\bibitem{Ho&Wa02}
Hollands, S., Wald, R.M.: Existence of local covariant time ordered products of
  quantum fields in curved spacetime.
\newblock Commun. Math. Phys. \textbf{231}, 309--345 (2002)

\bibitem{HollandsWald-Axio:2010}
Hollands, S., Wald, R.M.: {Axiomatic quantum field theory in curved spacetime}.
\newblock Commun.Math.Phys. \textbf{293}, 85--125 (2010)

\bibitem{Hoermander1}
H\"ormander, L.: The Analysis of Partial Differential Operators, I.
\newblock Springer-Verlag, Berlin (1983)

\bibitem{Isham_spinor:1978}
Isham, C.J.: Spinor fields in four-dimensional space-time.
\newblock Proc. Roy. Soc. London Ser. A \textbf{364}(1719), 591--599 (1978)

\bibitem{JaffeJaekelMartinez:2014}
Jaffe, A., J\"akel, C.D., Martinez, R.E.: {Complex Classical Fields: A
  Framework for Reflection Positivity}.
\newblock Commun.Math.Phys. \textbf{329}, 1--28 (2014)

\bibitem{JaffeRitter:2007}
Jaffe, A., Ritter, G.: {Quantum field theory on curved backgrounds. I. The
  Euclidean functional integral}.
\newblock Commun.Math.Phys. \textbf{270}, 545--572 (2007)

\bibitem{Junker:1996}
Junker, W.: {Hadamard states, adiabatic vacua and the construction of physical
  states for scalar quantum fields on curved space-time}.
\newblock Rev.Math.Phys. \textbf{8}, 1091--1159 (1996)

\bibitem{JunkerSchrohe}
Junker, W., Schrohe, E.: Adiabatic vacuum states on general space-time
  manifolds: Definition, construction, and physical properties.
\newblock Annales Poincar{\'e} Phys.Theor. \textbf{3}, 1113--1182 (2002)

\bibitem{Kay79}
Kay, B.S.: Casimir effect in quantum field theory.
\newblock Phys. Rev. \textbf{D20}, 3052--3062 (1979)

\bibitem{Kay_Flocality:1992}
Kay, B.S.: The principle of locality and quantum field theory on (non-globally
  hyperbolic) curved spacetimes.
\newblock Rev. Math. Phys. (Special Issue), 167--195 (1992)

\bibitem{KayRadWald:1997}
Kay, B.S., Radzikowski, M.J., Wald, R.M.: {Quantum field theory on space-times
  with a compactly generated Cauchy horizon}.
\newblock Commun.Math.Phys. \textbf{183}, 533--556 (1997)

\bibitem{KayWald-PhysRep}
Kay, B.S., Wald, R.M.: Theorems on the uniqueness and thermal properties of
  stationary, nonsingular, quasifree states on space-times with a bifurcate
  {K}illing horizon.
\newblock Phys. Rep. \textbf{207}, 49--136 (1991)

\bibitem{Landau1969}
Landau, L.J.: A note on extended locality.
\newblock Commun. Math. Phys. \textbf{13}, 246--253 (1969)

\bibitem{LecLon:2014}
{Lechner}, G., {Longo}, R.: {Localization in Nets of Standard Spaces}.
\newblock ArXiv e-prints  (2014), arXiv:1403.1226

\bibitem{LongoRehren-bdCFT:2004}
Longo, R., Rehren, K.H.: {Local fields in boundary conformal QFT}.
\newblock Rev.Math.Phys. \textbf{16}, 909 (2004)

\bibitem{LongoRehren-boundQFT:2011}
Longo, R., Rehren, K.H.: {Boundary Quantum Field Theory on the Interior of the
  Lorentz Hyperboloid}.
\newblock Commun.Math.Phys. \textbf{311}, 769--785 (2012)

\bibitem{LuedersRoberts}
Lueders, C., Roberts, J.: Local quasiequivalence and adiabatic vacuum states.
\newblock Commun.Math.Phys. \textbf{134}, 29--63 (1990)

\bibitem{MacLane}
Mac~Lane, S.: Categories for the Working Mathematician, 2nd edn.
\newblock Springer-Verlag, New York (1998)

\bibitem{Marecki:2006}
{Marecki}, P.: {Bounds on the energy densities of ground states on static
  spacetimes of compact objects}.
\newblock Phys. Rev. D \textbf{73}(12), 124009 (2006)

\bibitem{Moretti:2008}
Moretti, V.: {Quantum ground states holographically induced by asymptotic
  flatness: Invariance under spacetime symmetries, energy positivity and
  Hadamard property}.
\newblock Commun.Math.Phys. \textbf{279}, 31--75 (2008)

\bibitem{MorettiPinamonti-BH}
Moretti, V., Pinamonti, N.: {State independence for tunneling processes through
  black hole horizons and Hawking radiation}.
\newblock Commun.Math.Phys. \textbf{309}, 295--311 (2012)

\bibitem{Olbermann}
Olbermann, H.: States of low energy on {R}obertson--{W}alker spacetimes.
\newblock Class.Quant.Grav. \textbf{24}, 5011--5030 (2007)

\bibitem{OlumGraham}
Olum, K.D., Graham, N.: Static negative energies near a domain wall.
\newblock Phys.Lett. \textbf{B554}, 175--179 (2003)

\bibitem{ONeill}
O'Neill, B.: Semi-Riemannian Geometry.
\newblock Academic Press, New York (1983)

\bibitem{Parker:1969}
Parker, L.: Quantized fields and particle creation in expanding universes. 1.
\newblock Phys.Rev. \textbf{183}, 1057--1068 (1969)

\bibitem{Parker:1971}
Parker, L.: Quantized fields and particle creation in expanding universes. 2.
\newblock Phys.Rev. \textbf{D3}, 346--356 (1971)

\bibitem{ParkerWang:1989}
Parker, L., Wang, Y.: Statistics from dynamics in curved spacetime.
\newblock Phys. Rev. D \textbf{39}, 3596--3605 (1989)

\bibitem{Penrose1972}
Penrose, R.: Techniques of differential topology in relativity.
\newblock Society for Industrial and Applied Mathematics, Philadelphia, Pa.
  (1972).
\newblock Conference Board of the Mathematical Sciences Regional Conference
  Series in Applied Mathematics, No. 7

\bibitem{PfenningFord_static:1998}
Pfenning, M.J., Ford, L.H.: Scalar field quantum inequalities in static
  spacetimes.
\newblock Phys. Rev. D \textbf{57}, 3489--3502 (1998)

\bibitem{Pinamonti:2011}
Pinamonti, N.: On the initial conditions and solutions of the semiclassical
  {E}instein equations in a cosmological scenario.
\newblock Commun.Math.Phys. \textbf{305}, 563--604 (2011)

\bibitem{PuszWoronowicz}
Pusz, W., Woronowicz, S.: Passive states and {KMS} states for general quantum
  systems.
\newblock Commun.Math.Phys. \textbf{58}, 273--290 (1978)

\bibitem{Radzikowski_ulocal1996}
Radzikowski, M.J.: {Micro-local approach to the Hadamard condition in quantum
  field theory on curved space-time}.
\newblock Commun. Math. Phys. \textbf{179}, 529--553 (1996)

\bibitem{Rehren-algHolo:2000}
Rehren, K.H.: {Algebraic holography}.
\newblock Annales Henri Poincar{\'e} \textbf{1}, 607--623 (2000)

\bibitem{Roberts:1977}
Roberts, J.E.: A survey of local cohomology.
\newblock In: Mathematical problems in theoretical physics ({P}roc. {I}nternat.
  {C}onf., {U}niv. {R}ome, {R}ome, 1977), \emph{Lecture Notes in Phys.},
  vol.~80, pp. 81--93. Springer, Berlin-New York (1978)

\bibitem{Roos1970}
Roos, H.: Independence of local algebras in quantum field theory.
\newblock Commun. Math. Phys. \textbf{16}, 238--246 (1970)

\bibitem{Ruzzi:2005}
Ruzzi, G.: Homotopy of posets, net-cohomology and superselection sectors in
  globally hyperbolic space-times.
\newblock Rev. Math. Phys. \textbf{17}, 1021--1070 (2005)

\bibitem{Ruzzi_punc:2005}
Ruzzi, G.: Punctured {H}aag duality in locally covariant quantum field
  theories.
\newblock Comm. Math. Phys. \textbf{256}(3), 621--634 (2005)

\bibitem{SahlmannVerch-KMS}
Sahlmann, H., Verch, R.: Passivity and microlocal spectrum condition.
\newblock Commun.Math.Phys. \textbf{214}, 705--731 (2000)

\bibitem{SahlmannVerch:2000RMP}
Sahlmann, H., Verch, R.: {Microlocal spectrum condition and Hadamard form for
  vector valued quantum fields in curved space-time}.
\newblock Rev. Math. Phys. \textbf{13}, 1203--1246 (2001)

\bibitem{Sanders_ReehSchlieder}
Sanders, K.: On the {R}eeh-{S}chlieder property in curved spacetime.
\newblock Commun. Math. Phys. \textbf{288}, 271--285 (2009)

\bibitem{Sanders_dirac:2010}
Sanders, K.: The locally covariant {D}irac field.
\newblock Rev. Math. Phys. \textbf{22}, 381--430 (2010)

\bibitem{SandDappHack:2012}
{Sanders}, K., {Dappiaggi}, C., {Hack}, T.P.: {Electromagnetism, local
  covariance, the Aharonov-Bohm effect and Gauss' law} (2012).
\newblock ArXiv:1211.6420

\bibitem{SchlemmerVerch}
Schlemmer, J., Verch, R.: Local thermal equilibrium states and quantum energy
  inequalities.
\newblock Annales Henri Poincar\'e \textbf{9}, 945--978 (2008)

\bibitem{Schlieder:1969}
Schlieder, S.: Einige {B}emerkungen \"uber {P}rojektionsoperatoren
  ({K}onsequenzen eines {T}heorems von {B}orchers).
\newblock Comm. Math. Phys. \textbf{13}, 216--225 (1969)

\bibitem{Schoch1968}
Schoch, A.: On the simplicity of {H}aag fields.
\newblock Int. J. Theor. Phys. \textbf{1}, 107--113 (1968)

\bibitem{Sewell}
Sewell, G.L.: {Quantum fields on manifolds: PCT and gravitationally induced
  thermal states}.
\newblock Annals Phys. \textbf{141}, 201--224 (1982)

\bibitem{Solveen:2010}
Solveen, C.: {Local Thermal Equilibrium in Quantum Field Theory on Flat and
  Curved Spacetimes}.
\newblock Class.Quant.Grav. \textbf{27}, {235}{002} (2010)

\bibitem{Solveen:2012}
Solveen, C.: {Local Thermal Equilibrium and {KMS} states in Curved Spacetime}.
\newblock Class.Quant.Grav. \textbf{29}, {245}{015} (2012)

\bibitem{StreaterWightman}
Streater, R.F., Wightman, A.S.: P{CT}, spin and statistics, and all that.
\newblock Princeton Landmarks in Physics. Princeton University Press,
  Princeton, NJ (2000).
\newblock Corrected third printing of the 1978 edition

\bibitem{Stroh:2000}
Strohmaier, A.: The {R}eeh-{S}chlieder property for quantum fields on
  stationary spacetimes.
\newblock Comm. Math. Phys. \textbf{215}(1), 105--118 (2000)

\bibitem{StVeWo:2002}
Strohmaier, A., Verch, R., Wollenberg, M.: Microlocal analysis of quantum
  fields on curved space-times: analytic wave front sets and {R}eeh-{S}chlieder
  theorems.
\newblock J. Math. Phys. \textbf{43}(11), 5514--5530 (2002)

\bibitem{SummersWerner}
Summers, S., Werner, R.: Maximal violation of {B}ell's inequalities for
  algebras of observables in tangent spacetime regions.
\newblock Annales Inst. H. Poincar\'e \textbf{49}, 215--243 (1988)

\bibitem{Summers1990}
Summers, S.J.: On the independence of local algebras in quantum field theory.
\newblock Rev. Math. Phys. \textbf{2}(2), 201--247 (1990)

\bibitem{TorreMad:1999}
Torre, C.G., Varadarajan, M.: Functional evolution of free quantum fields.
\newblock Class.Quant.Grav. \textbf{16}, 2651--2668 (1999)

\bibitem{Uhlmann:1962}
Uhlmann, A.: {{\"U}ber die {D}efinition der Quantenfelder nach {W}ightman und
  {H}aag}.
\newblock Wiss. Z. Karl-Marx-Univ. Leipzig Math.-Nat. Reihe \textbf{11},
  213--217 (1962)

\bibitem{Verch:1993}
Verch, R.: Antilocality and a {R}eeh-{S}chlieder theorem on manifolds.
\newblock Lett. Math. Phys. \textbf{28}(2), 143--154 (1993)

\bibitem{Verch_nucspldua:1993}
Verch, R.: Nuclearity, split property, and duality for the {K}lein-{G}ordon
  field in curved spacetime.
\newblock Lett. Math. Phys. \textbf{29}(4), 297--310 (1993)

\bibitem{Verch:1994}
Verch, R.: Local definiteness, primarity and quasiequivalence of quasifree
  {H}adamard quantum states in curved spacetime.
\newblock Comm. Math. Phys. \textbf{160}(3), 507--536 (1994)

\bibitem{Verch:SL-CST}
Verch, R.: Scaling algebras, the renormalization group and the principle of
  local stability in algebraic quantum field theory.
\newblock In: Proceedings of the Conference on Operator Algebras and Quantum
  Field Theory, Rome, Italy, 1-6 July 1996, International Press (1996)

\bibitem{Verch97}
Verch, R.: Continuity of symplectically adjoint maps and the algebraic
  structure of {H}adamard vacuum representations for quantum fields on curved
  spacetime.
\newblock Rev. Math. Phys. \textbf{9}, 635--674 (1997)

\bibitem{Verch-WF-AQFT}
Verch, R.: Wavefront sets in algebraic quantum field theory.
\newblock Commun.Math.Phys. \textbf{205}, 337--367 (1999)

\bibitem{Verch01}
Verch, R.: A spin-statistics theorem for quantum fields on curved spacetime
  manifolds in a generally covariant framework.
\newblock Commun. Math. Phys. \textbf{223}, 261--288 (2001)

\bibitem{VerchRegensburg}
Verch, R.: Local covariance, renormalization ambiguity, and local thermal
  equilibrium in cosmology.
\newblock In: F.~Finster, O.~M{\"u}ller, M.~Nardmann, J.~Tolksdorf, E.~Zeidler
  (eds.) Quantum Field Theory and Gravity. Conceptual and mathematical advances
  in the search for a unified framework. Birkh{\"a}user (2012).
\newblock ArXiv:1105.6249

\bibitem{VerchWerner}
Verch, R., Werner, R.F.: Distillability and positivity of partial transposes in
  general quantum field systems.
\newblock Rev.Math.Phys. \textbf{17}, 545--576 (2005)

\bibitem{Wald:1977-backre}
Wald, R.M.: The back reaction effect in particle creation in curved space-time.
\newblock Commun.Math.Phys. \textbf{54}, 1--19 (1977)

\bibitem{Wald_Smatrix:1979}
Wald, R.M.: Existence of the {$S$}-matrix in quantum field theory in curved
  space-time.
\newblock Ann. Physics \textbf{118}, 490--510 (1979)

\bibitem{Wald_gr}
Wald, R.M.: General Relativity.
\newblock University of Chicago Press, Chicago (1984)

\bibitem{Wald-CorrelBeyHorizon}
Wald, R.M.: Correlations beyond the horizon.
\newblock Gen.Rel.Grav. \textbf{24}, 1111--1116 (1992)

\bibitem{Wald_qft}
Wald, R.M.: Quantum Field Theory in Curved Spacetime and Black Hole
  Thermodynamics.
\newblock University of Chicago Press, Chicago (1994)

\bibitem{Wollenb}
Wollenberg, M.: Scaling limits and type of local algebras over curved
  space-time.
\newblock In: W.B. Arveson, {et al.} (eds.) Operator algebras and topology.
  Proceedings, {Craiova}, 1989, \emph{Pitman Res. Notes Math. Ser.}, vol. 270,
  pp. 179--196. Longman Sci. Tech., Harlow (1992)

\bibitem{Yurtsever:1994}
Yurtsever, U.: {Algebraic approach to quantum field theory on nonglobally
  hyperbolic space-times}.
\newblock Class.Quant.Grav. \textbf{11}, 999--1012 (1994)

\end{thebibliography}

{\small
}
\end{document}